\newtheorem {theorem}{Theorem}
\newtheorem {lemma}{Lemma}
\newtheorem {proposition}{Proposition}
\newtheorem {corollary}{Corollary}
\newcommand{\known}[1]{{\color{blue} #1}}
\newcommand{\improved}[1]{{\color{red} #1}}
\newcommand{\hhjcss}[1]{{\color{black} #1}}
\theoremstyle{definition}
\newmdtheoremenv[
   innerleftmargin = 4pt,
   innerrightmargin = 4pt,
   innertopmargin = 4pt,
   innerbottommargin = 4pt,
   linecolor = black,
   middlelinewidth = 0.4pt,
   splitbottomskip = 4pt,
   splittopskip = 12pt,
]{openquestion}{Open Question}[section]
\newcommand{\strictMinp}{strict end-minimal}
\newcommand{\minp}{end-minimal}
\newcommand{\minpath}{end-minimal shortest path}
\newcommand{\aMinpath}{an end-minimal shortest path}
\title{Eccentricity terrain of $\delta$-hyperbolic graphs}
\author{\medskip Feodor F. Dragan and Heather M. Guarnera   \\
Algorithmic Research Laboratory, Department of Computer Science \\
Kent State University, Kent, Ohio, USA \\
\href{mailto:dragan@cs.kent.edu}
{\texttt{dragan@cs.kent.edu}},
\href{mailto:hmichaud@kent.edu}
{\texttt{hmichaud@kent.edu}}
}
\begin{document}
\maketitle

\begin{abstract} A graph $G=(V,E)$ is $\delta$-hyperbolic if for any four vertices $u,v,w,x$, the two larger of the three distance sums $d(u,v)+d(w,x)$, $d(u,w)+d(v,x)$, $d(u,x)+d(v,w)$ differ by at most $2\delta\geq 0$. This paper describes the eccentricity terrain of a  $\delta$-hyperbolic graph.  The eccentricity function $e_G(v)=\max\{d(v,u):u\in V\}$ partitions vertices of $G$ into eccentricity layers $C_{k}(G)=\{v\in V:e_G(v)=rad(G)+k\}$, $k\in\mathbb{N}$, where $rad(G)=\min\{e_G(v):v\in V\}$  is the radius of $G$. The paper studies the eccentricity layers of vertices along shortest paths, identifying such terrain features as hills, plains, valleys, terraces, and plateaus. It introduces the notion of $\beta$-pseudoconvexity, which implies Gromov's $\epsilon$-quasiconvexity, and illustrates the abundance of pseudoconvex sets in $\delta$-hyperbolic graphs. It shows that all sets $C_{\leq k}(G)=\{v\in V:e_G(v)\leq rad(G)+k\}$, $k\in \mathbb{N}$, are $(2\delta-1)$-pseudoconvex.  Several bounds on the eccentricity of a vertex are obtained which yield a few approaches to efficiently approximating all eccentricities. 
\end{abstract}

\medskip 

\noindent 
{\bf Key words.}  Gromov hyperbolicity, eccentricity terrain, radius, diameter, convexity, approximation algorithm, complex network analysis

\section{Introduction}
The \emph{eccentricity} $e_G(v)$ of a vertex~$v$ is the maximum distance from~$v$ to any other vertex in $G=(V,E)$, i.e., $e_G(v) = \max_{u
\in V}d(u,v)$.
The diameter $diam(G)$ (radius $rad(G)$) denotes the maximum (minimum) eccentricity of a vertex in~$G$.
The eccentricity function partitions the vertex set of~$G$ into eccentricity layers,
wherein each layer is defined as $C_{k}(G) = \{v \in V : e_G(v) = rad(G) + k\}$ for an integer $k \in [ 0, diam(G) - rad(G) ]$.
As the eccentricities of two neighboring vertices~$u$ and~$v$ can differ by at most one,
if vertex~$u$ belongs to layer~$C_k(G)$, then any vertex~$v$ adjacent to~$u$ belongs to either $C_{k-1}(G)$, $C_{k}(G)$, or $C_{k+1}(G)$.
The first layer $C_{0}(G)$ is exactly the center~$C(G)$ (all vertices of $G$ with minimum eccentricity
).
The last layer $C_{p}(G)$, where $p = {diam(G)-rad(G)}$, consists of all diametral vertices~$v$, i.e., with $e_G(v) = diam(G)$.
Also of interest are the sets defined as $C_{\leq k}(G) = \{v \in V : e_G(v) \leq rad(G) + k\}$,
that is, the union of all eccentricity layers from $C_0(G)$ to $C_k(G)$.
\hhjcss{The locality of a vertex~$v \notin C(G)$ is the minimum distance from~$v$ to a vertex with smaller eccentricity: $loc(v) = \min\{ d(v,x) : x \in V,\ e_G(x) < e_G(v)\}$; by definition, the locality of a central vertex is 0.}

\hhjcss{
The \emph{eccentricity terrain} illustrates the behavior of the eccentricity function along any shortest path: if a traveler begins at vertex~$y$ and ends at vertex~$x$ moving along $P(y,x)$,  
he may describe his journey as a combination of walking up-hill (to a vertex of higher eccentricity), down-hill (to a vertex of lower eccentricity), 
or along a plain (no change in eccentricity).
We identify such terrain features as hills, plains,  valleys, terraces, and plateaus.}
Understanding the eccentricity terrain and being able to efficiently estimate the diameter, radius, and all vertex eccentricities is of great importance.
For example, in the analysis of social networks
(e.g., citation networks or recommendation networks), biological systems (e.g., protein interaction networks),
computer networks (e.g., the Internet or peer-to-peer networks), transportation networks (e.g., public transportation
or road networks), etc., the {\it eccentricity} $e_G(v)$ of a vertex $v$ is used to measure the importance of $v$ in the network:
the {\em eccentricity centrality index} of $v$ \cite{Brandes} is defined as $\frac{1}{e_G(v)}$.

This paper further investigates the eccentricity function in $\delta$-hyperbolic graphs from the eccentricity terrain prospective and greatly advances the line of research taken in~\cite{Dragan2018,ourManuscriptDHG,Dragan2017EccentricityAT,DBLP:journals/dm/Prisner00a} for such special graph classes as 
chordal graphs, $(\alpha_1, \triangle)$-metric graphs and distance-hereditary graphs and in~\cite{Alrasheed_2016,Dragan2018RevisitingRD,Chepoi_2008,Chepoi2018FastAO} for general $\delta$-hyperbolic graphs and their relatives.
Gromov~\cite{Gromov1987} defines $\delta$-hyperbolic graphs via a simple 4-point condition:
for any four vertices $u,v,w,x$, the two larger of the three distance sums $d(u,v) + d(w,x)$, $d(u,w) + d(v,x)$, and $d(u,x) + d(v,w)$ differ by at most $2\delta \geq 0$.
Such graphs have become of recent interest due to the empirically established presence of a small hyperbolicity in many real-world networks, such as biological networks, social networks, Internet application networks, and collaboration networks, to name a few (see, e.g., \cite{AADr,AdcockSM13,Bo++,KeSN16,NaSa,ShavittT08}).
Notice that any graph is $\delta$-hyperbolic for some hyperbolicity $\delta\le diam(G)/2$.

\subsection{Our contribution}
 
 First, we define in Section~\ref{section:convexity} a $\beta$-pseudoconvexity which implies the quasiconvexity found by Gromov in hyperbolic graphs, but additionally, is closed under intersection. Interestingly, all disks and all sets $C_{\leq k}(G)$, for any integer $k \geq 0$, are $(2\delta - 1)$-pseudoconvex \hhjcss{in $\delta$-hyperbolic graphs.}

In Section~\ref{section:terrainShapes}, we show that the height of any up-hill as well as the width of any plain on a shortest path to a central vertex is small and depends (linearly) only on the hyperbolicity of $G$.
    Moreover, the cumulative height and width of all up-hills and plains on any shortest path to a vertex with minimal eccentricity is no more than $4\delta$.
    On any given shortest path $P$ from an arbitrary vertex to a closest central vertex, the number of vertices with locality more than 1  does not exceed $\max\{0,4\delta-1\}$. Furthermore, only at most $2\delta$ of them are  located outside $C_{\le \delta}(G)$ and only at most $2\delta+1$ of them are at distance $>2\delta$ from $C(G)$.  On the negative side, we give an example which illustrates that up-hills can occur anywhere on any shortest path from a vertex to a closest central vertex.

In Section~\ref{section:eccentricityBounds}, we give upper and lower bounds on the eccentricity of a vertex~$v$ based on several situations: if $v$ is on a shortest path $P(x,c)$ from a vertex $x$ to a closest central vertex $c$; 
if $v$ is on a shortest $(x,y)$-path where $y$ is a most distant vertex from $x$; 
if $v$ is on a shortest $(x,y)$-path where $x$ and $y$ are mutually distant vertices; and
if $v$ is a furthest vertex from some arbitrary vertex~$c \in V$.
Such results also give lower bounds on $diam(G)$ and upper bounds on $rad(G)$ which are consistent with those found in literature~\cite{Chepoi_2008,Dragan2018RevisitingRD}. More importantly, they are very useful in approximating all eccentricities in $G$. 

Finally, we present 
three approximation algorithms for all eccentricities:
an $O(\delta|E|)$ time eccentricity approximation $\hat{e}(v)$ based on the distances from any vertex to two mutually distant vertices which
satisfies $e_G(v) - 2\delta \leq \hat{e}(v) \leq e_G(v)$, for all $v \in V$, 
and two spanning trees $T$, one constructible in $O(\delta|E|)$ time and the other in $O(|E|)$ time,
which satisfy $e_G(v) \leq e_T(v) \leq e_G(v) + 4\delta + 1$ and
$e_G(v) \leq e_T(v) \leq 6\delta$, respectively.
Thus, the eccentricity terrain of a tree gives a good approximation (up-to an additive error $O(\delta))$ of the eccentricity terrain of a $\delta$-hyperbolic graph.
Furthermore, we obtain an approximation for the distance from an arbitrary vertex~$v$ to $C(G)$ or $C_{\leq 2\delta}(G)$ based on the eccentricity of~$v$.


\subsection{Related Works}

The eccentricity function/terrain has been studied extensively in Helly graphs, chordal graphs, $(\alpha_1, \triangle)$-metric graphs, and distance-hereditary graphs~\cite{FDraganPhD,Dragan2018,ourManuscriptDHG,Dragan2017EccentricityAT,DBLP:journals/dm/Prisner00a,Chepoi2018FastAO}, among others.
In~\cite{FDraganPhD}, it is shown that the eccentricity function in Helly graphs exhibits unimodality:
every vertex $v \notin C(G)$ has $loc(v)=1$.
In other words, any non-central vertex~$v$ has a shortest path~$P$ to a closest central vertex wherein 
any vertex on $P$ appears in a strictly lower eccentricity layer than the previous vertex until  $C(G)$ is reached.
Thus, any local minimum of the eccentricity function $e_G(v)$ coincides with the global minimum on Helly graphs~\cite{FDraganPhD}.
It is shown~\cite{FDraganPhD} that in such cases for any vertex $v \in V$, $e_G(v) = d(v,C(G)) + rad(G)$ holds.
Additionally, $(\alpha_1, \triangle)$-metric graphs, which include chordal graphs and the underlying graphs of 7-systolic complexes, have a similar but slightly weaker property.
In~\cite{Dragan2017EccentricityAT}, it is shown that every vertex $v \notin C(G)$ of a $(\alpha_1, \triangle)$-metric graph $G$ either has
$loc(v)=1$ or $e_G(v) = rad(G) + 1$, $diam(G) = 2rad(G)$, and $d(v,C(G)) = 2$.
So, any non-central vertex~$v$ has a shortest path~$P$ to a closest central vertex upon which the eccentricity of each vertex~$u \in P$ monotonically decreases until $C_1(G)$ and, furthermore, $|P \cap C_1(G)| \leq 2$.
The same behavior of the eccentricity function has recently been shown to exist in distance-hereditary graphs as well~\cite{ourManuscriptDHG}.
 This leads to a linear time additive 2-approximation for all eccentricities in chordal graphs via careful construction of a spanning tree~\cite{Dragan2017EccentricityAT,Dragan2018}
and a linear time additive 1-approximation for all eccentricities in a distance-hereditary graph via distances from a sufficient subset of central vertices~\cite{ourManuscriptDHG}.

As chordal graphs and distance-hereditary graphs are 1-hyperbolic, we question if the descending behavior of the eccentricity function persists in any $\delta$-hyperbolic graph.
Similar locality results have been established~\cite{Alrasheed_2016}:
any vertex~$v$ in a $\delta$-hyperbolic graph has either $loc(v) \leq 2\delta + 1$ or
$C(G)$ belongs to the set of vertices that are at most $4\delta+1$ from~$v$.
A pioneering work \cite{Chepoi_2008} first showed that, in a $\delta$-hyperbolic graph, $diam(G)$ and $2rad(G)$ are within $4\delta+1$ from each other and that the diameter of $C(G)$ in $G$ is at most $4\delta+1$. It gave also fast approximation algorithms for computing the diameter and the radius of $G$ and showed that there is a vertex $c$ in $G$, computable in linear time, such that each central vertex of $G$ is within distance at most $5\delta +1$ from $c$. Later in~\cite{Dragan2018RevisitingRD}, a better approximation algorithm for the radius was presented and a bound on the diameter of set $C_{\le 2\delta}(G)$ was obtained, namely,  $diam(C_{\le 2\delta}(G))\leq 8\delta+1$. Recently, similar results were obtained in \cite{Chepoi2018FastAO} for a related class of graphs, so called graphs with $\tau$-thin geodesic triangles (see Section \ref{section:auxLemmas} for a definition).
Additionally to approximating the diameter and the radius, \cite{Chepoi2018FastAO} gave efficient algorithms for approximating all eccentricities in such graphs via careful construction of a spanning tree.
We will mention the relevant results from~\cite{Dragan2018RevisitingRD,Chepoi_2008,Chepoi2018FastAO} in appropriate places later and compare them with our new results.

\hhjcss{
Note also that, under plausible assumptions,
even distinguishing the radius~\cite{10.5555/2884435.2884463}
or the diameter~\cite{Roditty_2013}
between exact values 2 or 3
cannot be accomplished in subquadratic time for sparse graphs.
Since the graphs constructed in the reductions~\cite{10.5555/2884435.2884463,Roditty_2013} are 1-hyperbolic, the same result holds for 1-hyperbolic graphs. 
Therefore, we are interested in fast approximation algorithms with additive errors depending linearly only on the hyperbolicity.
}

\section{Preliminaries}\label{section:auxLemmas}
All graphs occurring in this paper are connected, finite, unweighted, and undirected.
The \emph{length} of a path from a vertex~$u$ to a vertex~$v$ is the number of edges in the path.
The \emph{distance} $d_G(u,v)$ between two vertices~$u$ and~$v$ is the length of a shortest path connecting them in $G$.
We define the distance from a vertex~$v$ to a set $M \subseteq V$ of vertices as $d(v,M) = \min\{d(v,u) : u \in M\}$.
The \emph{eccentricity} $e_G(u)$ of a vertex~$u$ is the maximum distance from~$u$ to any other vertex in $G$, i.e., $e_G(u) = \max_{v \in V}d_G(u,v)$.
We omit the subindex when $G$ is known by context.
A graph's \emph{radius} $rad(G)$ is the minimum eccentricity of all vertices, and a graph's diameter $diam(G)$ is the maximum eccentricity.
The \emph{interval} between two vertices $x,y \in V$ is defined as the set of all vertices from any shortest $(x,y)$-path, that is, $I(x,y) = \{v \in V: d(x,v) + d(v,y) = d(x,y)\}.$
An interval \emph{slice} $S_k(x,y)$ is the set of vertices $\{v \in I(x,y) : d(v,x) = k\}$.
A \emph{disk} of radius~$k$ centered at a set~$S$ (or a vertex) is the set of vertices of distance at most~$k$ from~$S$, that is, $D(S,k) = \{u \in V: d(u,S) \leq k\}$.
We denote the set of furthest vertices from~$v$ as $F(v) = \{u \in V : d(u,v) = e(v)\}$. A pair $\{x,y\}$ of vertices is called 
\emph{a mutually distant pair} if $x\in F(y)$ and $y\in F(x)$. 
The \emph{diameter of a set} $S\subseteq V$ of a graph $G$ is $diam(S)=\max_{x,y\in S}d_G(x,y)$. 
The \emph{Gromov product} of two vertices~$x,y \in V$ with respect to a third vertex~$z \in V$ is defined as $(x|y)_z = \frac{1}{2}(d(x,z) + d(y,z) - d(x,y))$.
A list of these notations can be found in Appendix.

Let $S$ be a set and
let function $\hat{f} : S \rightarrow \mathbb{R}$ be an approximation of function $f : S \rightarrow \mathbb{R}$.
We say that $\hat{f}$ is an \emph{left-sided  additive $\epsilon$-approximation}  of $f$ if, for all $x \in S$, $f(x) - \epsilon \leq \hat{f}(x) \leq f(x)$.
We say that $\hat{f}$ is an \emph{right-sided additive $\epsilon$-approximation} of $f$ if, for all $x \in S$, $f(x) \leq  \hat{f}(x)\leq f(x) + \epsilon$. The value $\epsilon$ is called \emph{left-sided} (\emph{right-sided}, respectively) \emph{ additive error}. 
In a graph~$G$, a left-sided error appears when a vertex is returned by an algorithm whose eccentricity is an approximation of the diameter of~$G$ (as its eccentricity cannot exceed $diam(G)$),
whereas a right-sided error appears when a vertex is returned by an algorithm whose eccentricity is an approximation of the radius of~$G$ (as its eccentricity cannot be smaller than $rad(G)$).

\begin{figure}[!htb]
\begin{center}
\includegraphics[scale=0.64]{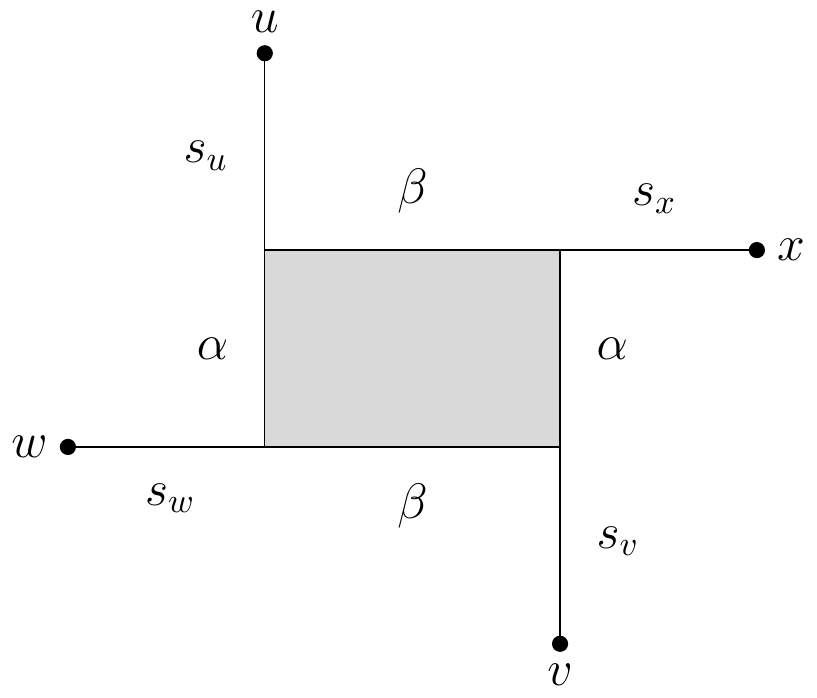}
\hspace*{1cm}
\input{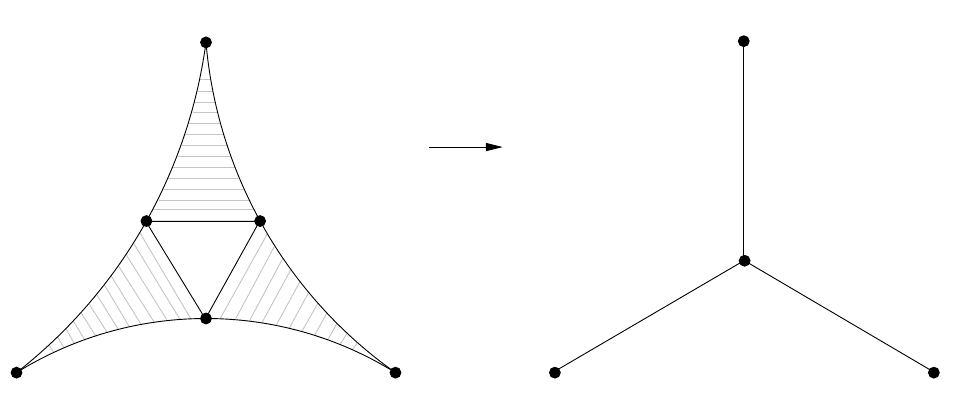_t}
\end{center}
\caption{Illustration to the definitions of $\delta$: realization of the 4-point condition in the rectilinear plane (left), and a geodesic triangle $\Delta(x,y,z),$ the points $m_x, m_y,
m_z,$ and the tripod $T(x,y,z)$ (right).} \label{fig:thinTriangles}
\end{figure}

For metric spaces $(X,d)$, there are several equivalent definitions of $\delta$-hyperbolicity  with different but comparable values of~$\delta$~\cite{Alonso_1991aa,Bridson_1999,gromov90,Gromov1987}.
%
In this paper, we will use Gromov's 4-point condition: for any four points $u,v,w,x$ from $X$ the two larger of the three distance sums $d(u,v) + d(w,x)$, $d(u,x) + d(v,w)$, and $d(u,w) + d(v,x)$ differ by at most $2\delta \geq 0$.
A connected graph equipped with the standard graph metric $d_G$ is $\delta$-hyperbolic if the metric space $(V,d_G)$ is $\delta$-hyperbolic. The smallest value~$\delta$ for which~$G$ is $\delta$-hyperbolic is called the \emph{hyperbolicity} of~$G$ and is denoted by $\delta(G)$. Note that $\delta(G)$ is an integer or a half-integer. 
Every 4-point metric $d$ has a canonical representation in the rectilinear plane.
In Figure~\ref{fig:thinTriangles}, the three distance sums are ordered from large to small, implying that $\alpha \leq \beta$.
Then $\beta$ is half the difference of the largest minus the smallest sum, while $\alpha$ is half the difference of the largest minus the medium sum.
Hence, a metric space $(X,d)$ is $\delta$-hyperbolic if $\alpha \leq \delta$ for any four points $u,v,w,x \in X$.
At times we will compare our results to those known from literature, including those known for graphs defined by thin geodesic triangles as follows.

%
Let $(X,d)$ be a metric space.
An $(x,y)$-geodesic is a (continuous) map $\gamma$ from the segment $[a,b]$ of $\mathbb{R}^1$ to $X$ such
that $\gamma(a)=x$, $\gamma(b)=y$, and $d(\gamma(s), \gamma(t)) = |s - t|$ for all $s,t \in [a,b]$.
A metric space $(X,d)$ is geodesic if every pair of points in $X$ can be joined by a geodesic.
A geodesic triangle $\Delta(x,y,z)$ with $x,y,z \in X$ is defined on a geodesic metric space as
the union $[x,y] \cup [x,z] \cup [y,z]$ of three geodesic segments connecting $x,y,z$.
Let $m_x$ be the point of the geodesic segment $[y,z]$ located at distance $\alpha_y = (x|z)_y$ from $y$.
Then, $m_x$ is located at distance $\alpha_z = (x|y)_z$ from $z$ because $\alpha_y + \alpha_z = d(y,z)$.
Analogously, define the points $m_y \in [x,z]$ and $m_z \in [x,y]$ both located at distance $\alpha_x = (y|z)_x$ from~$x$;
see Figure~\ref{fig:thinTriangles} for an illustration.
There is a unique isometry $\varphi$ which maps $\Delta(x,y,z)$ to a tripod $T(x,y,z)$
consisting of three solid segments $[x,m]$, $[y,m]$, and $[z,m]$ of lengths $\alpha_x$, $\alpha_y$, and $\alpha_z$, respectively.
This function maps the vertices $x,y,z$ of $\Delta(x,y,z)$ to the respective leaves of $T(x,y,z)$
and the points $m_x$, $m_y$, and $m_z$ to the center $m$ of $T(x,y,z)$.
Any other point of $T(x,y,z)$ is the image of exactly two points of $\Delta(x,y,z)$.
A geodesic triangle is called $\delta$-thin if for all points $u,v \in \Delta(x,y,z)$, $\varphi(u) = \varphi(v)$ implies $d(u,v) \leq \delta$.
A graph~$G$ is $\delta$-thin if all geodesic triangles in~$G$ are $\delta$-thin.
The smallest value~$\delta$ for which~$G$ is $\delta$-thin is called the \emph{thinness} of~$G$ and is denoted by $\tau(G)$.

The thinness and hyperbolicity of a graph are comparable as follows (similar inequalities are known for general geodesic metric spaces).

\begin{proposition}\label{prop:thinVsHyperbolicity}~\cite{Alonso_1991aa,Bridson_1999,gromov90,Gromov1987}
For a graph $G$, $\delta(G) 
\leq \tau(G) \leq 4\delta(G)$, and the inequalities are sharp. 
\end{proposition}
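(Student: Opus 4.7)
The plan is to prove the two inequalities separately, exploiting the equivalent Gromov-product reformulation of the 4-point condition: $G$ is $\delta$-hyperbolic if and only if $(x|y)_w \geq \min\{(x|z)_w, (y|z)_w\} - \delta$ for every quadruple of vertices $x,y,z,w$. This reformulation is convenient for reasoning about geodesic triangles and tripod projections, which is exactly what $\tau$-thinness concerns. Sharpness will be handled at the end by exhibiting small extremal graphs.

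For $\delta(G) \leq \tau(G)$, I fix four vertices $u,v,w,x$ and form a geodesic quadrilateral, splitting it by the diagonal $[u,w]$ into geodesic triangles $\Delta(u,v,w)$ and $\Delta(u,w,x)$. The $\tau$-thinness of each triangle controls the distance between corresponding tripod points on different sides. By locating the internal points $m_u, m_v, m_w$ of one triangle and the analogous points of the other, and tracking distances via a chain of triangle inequalities between them and the four original vertices, I can compare the three distance sums $d(u,v)+d(w,x)$, $d(u,w)+d(v,x)$, and $d(u,x)+d(v,w)$, showing that the two largest differ by at most $2\tau$; this gives $\delta(G) \leq \tau(G)$.

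For $\tau(G) \leq 4\delta(G)$, which is the main obstacle, I fix a geodesic triangle $\Delta(x,y,z)$ and points $u \in [y,z]$, $v \in [x,z]$ with $\varphi(u) = \varphi(v)$, so that $d(u,z) = d(v,z) = t \leq (x|y)_z$, and aim to show $d(u,v) \leq 4\delta$. Because $u,v$ lie on geodesics through $z$, the exact identities $d(u,y) = d(y,z)-t$, $d(v,x) = d(x,z)-t$, and $(z|y)_u = (z|x)_v = 0$ hold. Direct computation yields $d(x,v)+d(y,u) = d(x,z)+d(y,z)-2t$, and bounding the other two distance sums $d(x,y)+d(u,v)$ and $d(x,u)+d(y,v)$ via triangle inequalities through $z$ allows a comparison. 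A single application of the 4-point condition to $\{x,y,u,v\}$ gives only a bound of the form $d(u,v) \leq 2t+2\delta$, which is vacuous when $t$ is large; the key is to iterate the Gromov-product inequality separately at $u$ and at $v$, each contributing an additive $2\delta$, and combine the results so that the dependence on $t$ cancels. The factor of $4$ then arises as the sum of these two contributions.

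For sharpness, I would exhibit concrete graphs: trees give $\delta = \tau = 0$, witnessing equality in the left inequality, and cycles $C_n$ or related subdivided constructions allow explicit computation of both invariants, showing that the ratio $\tau/\delta$ approaches $4$ in appropriate families, so that the right inequality's constant cannot in general be improved.
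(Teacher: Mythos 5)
The paper offers no proof of this proposition --- it is quoted verbatim from the cited literature --- so there is no in-house argument to compare against; I am judging your outline on its own terms. The overall shape is right: the Gromov-product reformulation is indeed the standard engine, and your two-step plan for $\tau(G)\le 4\delta(G)$ is the correct one in outline. However, the mechanism you describe for that step --- ``iterate the Gromov-product inequality separately at $u$ and at $v$'' --- does not work as stated, because the hyperbolicity inequality only gives \emph{lower} bounds on Gromov products, and anchoring it at $u$ or $v$ produces nothing useful. Both applications must be taken at the basepoint $z$: since $u\in[y,z]$ and $v\in[x,z]$ with $d(u,z)=d(v,z)=t\le (x|y)_z$, one has $(y|u)_z=t$ and $(x|v)_z=t$ exactly, whence $(x|u)_z\ge\min\{(x|y)_z,(y|u)_z\}-\delta=t-\delta$ and then $(u|v)_z\ge\min\{(u|x)_z,(x|v)_z\}-\delta\ge t-2\delta$; since $(u|v)_z=t-\tfrac{1}{2}d(u,v)$, this yields $d(u,v)\le 4\delta$. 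Note each application costs $\delta$, not $2\delta$; the factor $4$ comes from the $\tfrac12$ in the definition of the Gromov product. For $\delta(G)\le\tau(G)$ your quadrilateral splitting is only gestured at, and the constant bookkeeping there is exactly the delicate part; a cleaner route, consistent with your own setup, is to place points $a,b,c$ at distance $t=\min\{(u|x)_w,(v|x)_w\}$ from $w$ on geodesics $[w,u],[w,v],[w,x]$, observe that $a,c$ (resp.\ $b,c$) map to the same tripod point in $\Delta(w,u,x)$ (resp.\ $\Delta(w,v,x)$), so $d(a,b)\le 2\tau$ and hence $(u|v)_w\ge t-\tau$, which is precisely the Gromov-product form of $\delta\le\tau$.

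The genuine gap is the sharpness of the factor $4$. Cycles do not witness it: for $C_4$ one computes $\delta=1$ while the three tripod points of the triangle $\Delta(1,2,3)$ (with $[1,3]$ routed through $4$) include the pair $\{2,4\}$ at distance $2$, and since $diam(C_4)=2$ this gives $\tau=2=2\delta$; the same ratio $\tau/\delta=2$ persists for larger even cycles. So ``cycles or related subdivided constructions'' does not establish that the constant $4$ is attained, and an actual extremal family must be exhibited (this is nontrivial and is the content of the cited references). Likewise, trees only witness the left inequality in the degenerate form $0=0$; a graph with $\delta=\tau>0$ would be a more convincing certificate that $\delta\le\tau$ cannot be improved.
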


We will often use the following lemma.
\begin{lemma}\label{lemDuality}
Let $G$ be a $\delta$-hyperbolic graph. For any $x,y,v \in V$ and any vertex $c \in I(x,y)$ the following holds. 
\setlist{nolistsep}
\begin{enumerate}[noitemsep, label=\roman*)]
\item[$(i)$] If $d(x,c) \leq (v|y)_x$, then $d(c,v) \leq d(x,v) - d(x,c) + 2\delta$ and $d(c,v) \leq d(x,v) + \delta$. Moreover, $e(c) \leq e(x) - d(x,c) + 2\delta$ and $e(c) \leq e(x) + \delta$, when $v\in F(c)$.
\item[$(ii)$] If $d(y,c) \leq (v|x)_y$, i.e., $d(x,c) \geq (v|y)_x$, then $d(c,v) \leq d(y,v) - d(y,c) + 2\delta$ and $d(c,v) \leq d(y,v) + \delta$. Moreover, $e(c) \leq e(y) - d(y,c) + 2\delta$ and $e(c) \leq e(y) + \delta$, when $v\in F(c)$.
\end{enumerate}
\end{lemma}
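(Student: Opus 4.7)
The plan is to attack both parts simultaneously by a direct application of Gromov's 4-point condition to the quadruple $\{x,y,v,c\}$, together with the assumption $c\in I(x,y)$. Consider the three distance sums
\[
S_1=d(x,y)+d(v,c),\quad S_2=d(x,v)+d(y,c),\quad S_3=d(x,c)+d(y,v).
\]
First I would verify that $S_1$ is the largest: using $d(x,y)=d(x,c)+d(c,y)$, the differences $S_1-S_2=d(x,c)+d(v,c)-d(x,v)$ and $S_1-S_3=d(c,y)+d(v,c)-d(y,v)$ are both nonnegative by the triangle inequality. Hence the 4-point condition reduces to $S_1-\max(S_2,S_3)\le 2\delta$, so the whole game is to decide which of $S_2,S_3$ is the second largest.

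Next I would translate the hypothesis of (i) into an inequality between $S_2$ and $S_3$. Expanding $(v|y)_x=\tfrac12(d(x,v)+d(x,y)-d(v,y))$ and using $d(x,y)=d(x,c)+d(c,y)$, the condition $d(x,c)\le(v|y)_x$ is equivalent to $d(x,c)+d(v,y)\le d(x,v)+d(c,y)$, i.e.\ $S_3\le S_2$. Therefore the 4-point condition gives $S_1-S_2\le 2\delta$, which rearranges directly into
\[
d(v,c)\le d(x,v)-d(x,c)+2\delta,
\]
the first desired inequality of (i). Part (ii) will follow by the symmetric identity $(v|x)_y=d(x,y)-(v|y)_x$, which makes the hypothesis of (ii) equivalent to $S_2\le S_3$ and runs the same argument with the roles of $x$ and $y$ exchanged.

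For the second bound $d(c,v)\le d(x,v)+\delta$ I would do a short case split. If $d(x,c)\ge\delta$, then the inequality just proved already yields $d(c,v)\le d(x,v)+\delta$. If instead $d(x,c)\le\delta$, then the plain triangle inequality $d(c,v)\le d(x,c)+d(x,v)\le\delta+d(x,v)$ does the job. Either way the $+\delta$ bound holds.

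Finally, the eccentricity statements are essentially free: choosing $v\in F(c)$ gives $e(c)=d(c,v)$, and $d(x,v)\le e(x)$ by definition of eccentricity, so substituting into the two distance bounds yields $e(c)\le e(x)-d(x,c)+2\delta$ and $e(c)\le e(x)+\delta$. I do not anticipate a real obstacle here; the only subtle point is making sure the hypothesis is correctly recast in terms of $S_2$ versus $S_3$, which is just algebra using $c\in I(x,y)$.
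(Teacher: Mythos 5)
Your proposal is correct and follows essentially the same route as the paper: the same three distance sums, the same use of $c\in I(x,y)$ and the triangle inequality to identify $d(x,y)+d(c,v)$ as the largest sum, and the same algebraic translation of the hypothesis $d(x,c)\le (v|y)_x$ into $S_3\le S_2$ before applying the 4-point condition. The only cosmetic difference is in deriving the $+\delta$ bound, where the paper averages the inequality $d(c,v)\le d(x,v)-d(x,c)+2\delta$ with the triangle inequality $d(c,v)\le d(x,v)+d(x,c)$ instead of your case split on $d(x,c)$ versus $\delta$; both are valid.
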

\begin{proof}
Consider the three distance sums $d(x,y) + d(c,v)$, $d(x,v) + d(c,y)$, and $d(x,c) + d(v,y)$.
Since $c \in I(x,y)$, $d(x,y) = d(x,c) + d(c,y)$.
By the triangle inequality, $d(x,v) + d(c,y) \leq d(x,c) + d(c,v) + d(c,y)$ and $d(x,c) + d(v,y) \leq d(x,c) + d(v,c) + d(c,y)$.
Thus, $d(x,y) + d(c,v) \geq \max\{d(x,v) + d(c,y),\ d(x,c) + d(v,y)\}$.

Suppose $d(x,c) \leq (v|y)_x = \frac{1}{2}(d(v,x) + d(y,x) - d(v,y))$.
We have $2d(x,c) + 2d(v,y) \leq (d(v,x) + d(y,x) - d(v,y)) + 2d(v,y) = d(v,x) + d(y,x) + d(v,y) = d(v,x) + d(x,c) + d(c,y) + d(v,y)$.
Subtracting $d(x,c) + d(v,y)$ from this inequality, one obtains $d(x,c) + d(v,y) \leq d(v,x) + d(c,y)$.
Since $G$ is $\delta$-hyperbolic,
$2\delta \geq (d(x,y) + d(c,v)) - (d(v,x) + d(c,y)) = d(x,c) + d(c,v) - d(v,x)$.
Therefore, $d(c,v) \leq d(v,x) - d(x,c) + 2\delta$.
By adding the triangle inequality $d(c,v) \leq d(v,x) + d(x,c)$ to this,
we obtain $d(c,v) \leq d(v,x) + \delta$.
Applying both inequalities to the case in which $v$ is furthest from~$c$, we get $e(c) = d(c,v) \leq d(v,x) - d(x,c) + 2\delta \leq e(x) - d(x,c) + 2\delta$
and $e(c) = d(c,v) \leq d(v,x) + \delta \leq e(x) + \delta$. Thus, $(i)$ is true. 

Suppose now that $d(x,c) \geq (v|y)_x = \frac{1}{2}(d(v,x) + d(y,x) - d(v,y))$.
First, we claim that this is equivalent to $d(c,y) \leq (v|x)_y$.
By assumption, $d(x,y) = d(x,c) + d(c,y) \geq \frac{1}{2}(d(v,x) + d(y,x) - d(v,y)) + d(c,y)$.
Therefore, $d(c,y) \leq d(x,y) - \frac{1}{2}(d(v,x) + d(y,x) - d(v,y)) = \frac{1}{2}(d(x,y) - d(v,x) + d(v,y)) = (v|x)_y$, establishing the claim.
Now, $(ii)$ is true by symmetry with $(i)$. 
\end{proof}

Lemma~\ref{lemDuality} has a few important corollaries. 

\begin{corollary}\label{cor:maxMinDuality}
Let~$G$ be a $\delta$-hyperbolic graph. Any $x,y,v \in V$ and $c \in I(x,y)$ satisfies $d(c,v) \leq \max\{d(x,v), d(y,v)\} - \min\{d(x,c), d(y,c)\} + 2\delta$.
\end{corollary}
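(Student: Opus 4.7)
The plan is to derive this corollary as an almost immediate consequence of Lemma~\ref{lemDuality}, by applying it in whichever of its two cases is forced by the position of $c$ on the interval $I(x,y)$, and then weakening the resulting inequality to the symmetric $\max$/$\min$ form.

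First, I would observe that since $c \in I(x,y)$, either $d(x,c) \leq (v|y)_x$ or $d(x,c) \geq (v|y)_x$ holds (these cover all cases; the boundary case is harmless since both bounds coincide there). In the first case, part $(i)$ of Lemma~\ref{lemDuality} gives $d(c,v) \leq d(x,v) - d(x,c) + 2\delta$. To convert this into the statement of the corollary, I would use the trivial inequalities $d(x,v) \leq \max\{d(x,v),d(y,v)\}$ and $d(x,c) \geq \min\{d(x,c),d(y,c)\}$, the latter yielding $-d(x,c) \leq -\min\{d(x,c),d(y,c)\}$. Chaining these gives
\[
d(c,v) \;\leq\; d(x,v) - d(x,c) + 2\delta \;\leq\; \max\{d(x,v),d(y,v)\} - \min\{d(x,c),d(y,c)\} + 2\delta.
\]

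In the second case, part $(ii)$ of Lemma~\ref{lemDuality} provides the symmetric bound $d(c,v) \leq d(y,v) - d(y,c) + 2\delta$, and exactly the same weakening (with the roles of $x$ and $y$ swapped) yields the desired inequality. Since there is no main obstacle here beyond selecting the applicable branch of Lemma~\ref{lemDuality}, I would present the argument in a compact form, possibly noting that the corollary is really just a ``worst-endpoint'' reading of the duality lemma: whichever of $x,y$ is closer to $c$, Lemma~\ref{lemDuality} pairs it with the distance from the \emph{other} endpoint to $v$, and so the bound must involve $\max\{d(x,v),d(y,v)\}$ together with $\min\{d(x,c),d(y,c)\}$.
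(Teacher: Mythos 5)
Your proposal is correct and matches exactly the argument the paper intends: the corollary is stated without proof as an immediate consequence of Lemma~\ref{lemDuality}, obtained precisely by the case split on $d(x,c)$ versus $(v|y)_x$ and the trivial weakening to $\max$ and $\min$. Nothing is missing.
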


We next combine both cases of Lemma~\ref{lemDuality} to form an upper bound on all distances
from vertex~$c$ on a shortest $(x,y)$-path, including $e(c)$,
as well as improvements to this bound when~$c$ is sufficiently far from the endpoints~$x$ and~$y$.
By these we generalize greatly some known results  from~\cite{Alrasheed_2016}.

\begin{corollary}\label{cor:distanceFromMiddle}
Let $G$ be a $\delta$-hyperbolic graph.
Any vertices $x,y,v \in V$ and $c \in I(x,y)$ satisfy $d(c,v) \leq \max\{d(x,v),\ d(y,v)\} + \delta$.
Furthermore, if $d(x,y) \geq 4\delta$, then any vertex $c^* \in I(x,y)$ with $d(x,c^*) \geq 2\delta$ and $d(y,c^*) \geq 2\delta$
satisfies $d(c^*,v) \leq \max\{d(x,v),\ d(y,v)\}$.
If $d(x,y) > 4\delta + 1$ then any vertex $c^* \in I(x,y)$ with $d(x,c^*) > 2\delta$ and $d(y,c^*) > 2\delta$
satisfies $d(c^*,v) < \max\{d(x,v),\ d(y,v)\}$~\cite{Alrasheed_2016}.
\end{corollary}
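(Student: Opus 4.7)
The plan is a clean two-case application of Lemma~\ref{lemDuality}. For any $c \in I(x,y)$, exactly one of the alternatives $d(x,c) \leq (v|y)_x$ or $d(x,c) \geq (v|y)_x$ holds. The second alternative is equivalent to $d(y,c) \leq (v|x)_y$, as follows from the identity $(v|y)_x + (v|x)_y = d(x,y)$ combined with $d(x,c) + d(y,c) = d(x,y)$. So every $c \in I(x,y)$ falls into case~(i) or case~(ii) of Lemma~\ref{lemDuality} (possibly both, at equality).

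For the first inequality, I would use only the weaker of the two bounds that Lemma~\ref{lemDuality} supplies in each case: in case (i), $d(c,v) \leq d(x,v) + \delta$, and in case (ii), $d(c,v) \leq d(y,v) + \delta$. Either way, $d(c,v) \leq \max\{d(x,v), d(y,v)\} + \delta$, which is exactly the claim.

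For the strengthened bound under $d(x,c^*) \geq 2\delta$ and $d(y,c^*) \geq 2\delta$, I would instead invoke the sharper estimates from Lemma~\ref{lemDuality}: case (i) yields $d(c^*,v) \leq d(x,v) - d(x,c^*) + 2\delta \leq d(x,v)$, and case (ii) yields $d(c^*,v) \leq d(y,v) - d(y,c^*) + 2\delta \leq d(y,v)$. In either case $d(c^*,v) \leq \max\{d(x,v), d(y,v)\}$. The hypothesis $d(x,y) \geq 4\delta$ is merely the compatibility requirement $d(x,c^*)+d(y,c^*) \geq 4\delta$ needed for such a $c^*$ to exist on an $(x,y)$-geodesic. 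The strict version quoted from~\cite{Alrasheed_2016} follows from the identical dispatch: with $d(x,c^*) > 2\delta$ and $d(y,c^*) > 2\delta$ the last step produces a strict inequality in each branch.

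I do not anticipate a real obstacle here. The argument is a routine case split powered entirely by Lemma~\ref{lemDuality}, and the only point of care is verifying that the two conditions in that lemma together cover every $c \in I(x,y)$, which reduces to the Gromov-product identity recorded above.
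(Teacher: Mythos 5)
Your proof is correct and follows essentially the same route as the paper: both rest entirely on the case dichotomy of Lemma~\ref{lemDuality}, using the weak bounds $d(c,v)\le d(x,v)+\delta$ or $d(c,v)\le d(y,v)+\delta$ for the first claim and the sharp bounds for the strengthened ones. The only cosmetic difference is that the paper routes the second and third claims through Corollary~\ref{cor:maxMinDuality} (itself an immediate restatement of Lemma~\ref{lemDuality}), whereas you apply the lemma directly with the cases matched; the content is identical.
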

\begin{proof}
By Lemma~\ref{lemDuality}, $d(c,v) \leq d(x,v) + \delta$ or $d(c,v) \leq d(y,v) + \delta$.
Therefore, $d(c,v) \leq \max\{d(x,v),\ d(y,v)\} + \delta$.
If $d(x,y) \geq 4\delta$ then, by Corollary~\ref{cor:maxMinDuality}, any vertex $c^* \in I(x,y)$ with $d(x,c^*) \geq 2\delta$ and $d(c^*,y) \geq 2\delta$
satisfies $e(c^*) \leq \max\{d(x,v),d(y,v)\} - \min\{d(x,c^*),d(y,c^*)\} + 2\delta \leq \max\{d(x,v),d(y,v)\}$.
If $d(x,y) > 4\delta + 1$, i.e., $d(x,y) \geq 4\delta + 2$ then, by Corollary~\ref{cor:maxMinDuality}, any vertex $c^* \in I(x,y)$ with $d(x,c^*) > 2\delta$ and $d(c^*,y) > 2\delta$
satisfies $e(c^*) \leq \max\{d(x,v),d(y,v)\} - \min\{d(x,c^*),d(y,c^*)\} + 2\delta < \max\{d(x,v),d(y,v)\}$.
%
\end{proof}

\begin{corollary}\label{cor:deltaBoundOnEccentricity}
Let $G$ be a $\delta$-hyperbolic graph.
Any vertices $x,y \in V$ and $c \in I(x,y)$ satisfy $e(c) \leq \max\{e(x),\ e(y)\} + \delta$.
Furthermore, if $d(x,y) \geq 4\delta$, then any vertex $c^* \in I(x,y)$ with $d(x,c^*) \geq 2\delta$ and $d(y,c^*) \geq 2\delta$ satisfies $e(c^*) \leq \max\{e(x),\ e(y)\}$.
If $d(x,y) > 4\delta + 1$ then any vertex $c^* \in I(x,y)$ with $d(x,c^*) > 2\delta$ and $d(y,c^*) > 2\delta$ satisfies $e(c^*) < \max\{e(x),\ e(y)\}$~\cite{Alrasheed_2016}.
\end{corollary}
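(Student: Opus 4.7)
The plan is to mirror the proof of Corollary~\ref{cor:distanceFromMiddle}, specializing the auxiliary vertex $v$ to a furthest vertex from $c$ (respectively $c^*$) so that the inequalities on distances become inequalities on eccentricities. All of the technical work has already been packaged into Lemma~\ref{lemDuality} and Corollary~\ref{cor:maxMinDuality}; each claim here is essentially the eccentricity shadow of its counterpart in Corollary~\ref{cor:distanceFromMiddle}.

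For the first inequality $e(c) \leq \max\{e(x),e(y)\} + \delta$, I would fix any $v \in F(c)$, so that $e(c) = d(c,v)$. Now split on the dichotomy in Lemma~\ref{lemDuality}: either $d(x,c) \leq (v|y)_x$, whence $d(c,v) \leq d(x,v) + \delta \leq e(x) + \delta$, or else $d(x,c) \geq (v|y)_x$, whence $d(c,v) \leq d(y,v) + \delta \leq e(y) + \delta$. In either case $e(c) \leq \max\{e(x),e(y)\} + \delta$.

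For the second inequality, assume $d(x,y) \geq 4\delta$ and let $c^* \in I(x,y)$ satisfy $d(x,c^*), d(y,c^*) \geq 2\delta$. Picking any $v \in F(c^*)$ and invoking Corollary~\ref{cor:maxMinDuality} applied to $x,y,v,c^*$ yields
\[ e(c^*) = d(c^*,v) \leq \max\{d(x,v),d(y,v)\} - \min\{d(x,c^*),d(y,c^*)\} + 2\delta \leq \max\{e(x),e(y)\} - 2\delta + 2\delta, \]
which is exactly the desired bound. For the strict version, the hypothesis $d(x,y) > 4\delta+1$ with $d(x,c^*), d(y,c^*) > 2\delta$ (i.e.\ $\geq 2\delta+1$) makes the subtraction term at least $2\delta+1$, producing a strict inequality in the same chain.

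I do not anticipate a real obstacle: the Gromov product case analysis has been absorbed into Lemma~\ref{lemDuality}, and the only genuine observation is that choosing $v$ to be furthest from $c$ converts the distance bounds to eccentricity bounds. The one subtlety worth double-checking is that the vertex $v \in F(c)$ (respectively $F(c^*)$) can be chosen \emph{before} the dichotomy in Lemma~\ref{lemDuality} is applied, since the lemma holds for \emph{every} triple $(x,y,v)$ and every $c \in I(x,y)$; thus no circularity arises from letting $v$ depend on $c$.
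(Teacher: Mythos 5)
Your proof is correct and follows essentially the same route as the paper: the first bound comes from the dichotomy in Lemma~\ref{lemDuality} applied with $v\in F(c)$, and the second and third bounds come from the quantitative form of that dichotomy with $\min\{d(x,c^*),d(y,c^*)\}$ at least $2\delta$ (resp.\ $2\delta+1$). The only cosmetic difference is that for the latter parts you invoke Corollary~\ref{cor:maxMinDuality} (the combined max/min form) while the paper cites the two branches of Lemma~\ref{lemDuality} directly; these are interchangeable, and your observation that $v\in F(c)$ may be fixed before the case split is exactly how the paper converts distance bounds into eccentricity bounds.
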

\begin{proof}
By Lemma~\ref{lemDuality}, $e(c) \leq e(x) + \delta$ or $e(c) \leq e(y) + \delta$.
Therefore, $e(c) \leq \max\{e(x),\ e(y)\} + \delta$.
Suppose that $d(x,y) \geq 4\delta$ and consider any vertex $c^* \in I(x,y)$ satisfying $d(x,c^*) \geq 2\delta$ and $d(c^*,y) \geq 2\delta$.
By Lemma~\ref{lemDuality}, $e(c^*) \leq e(x) - d(x,c^*) + 2\delta \leq e(x)$
or $e(c^*) \leq e(y) - d(y,c^*) + 2\delta \leq e(y)$.
Hence, $e(c^*) \leq \max\{e(x),\ e(y)\}$.

Suppose now that $d(x,y) > 4\delta + 1$, i.e., $d(x,y) \geq 4\delta + 2$.
Consider any vertex $c^* \in I(x,y)$ satisfying $d(x,c^*) > 2\delta$ and $d(c^*,y) > 2\delta$.
By Lemma~\ref{lemDuality}, $e(c^*) \leq e(x) - d(x,c^*) + 2\delta < e(x)$
or $e(c^*) \leq e(y) - d(y,c^*) + 2\delta < e(y)$.
Hence, $e(c^*) < \max\{e(x),\ e(y)\}$.
\end{proof}

\begin{corollary}\label{largerEccentricityOfMiddleWhenEndVerticesClose}
Let $G$ be a $\delta$-hyperbolic graph where $x,y \in V$, $d(x,y) \geq 2\delta+1$, and $c \in S_{2\delta+1}(x,y)$.
If $e(c) \geq \max\{e(x),\ e(y)\}$ then $d(x,y) \leq 4\delta + 1$.
\end{corollary}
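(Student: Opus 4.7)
The plan is to derive the corollary as an immediate contrapositive application of the strict inequality already established in Corollary~\ref{cor:deltaBoundOnEccentricity}. Suppose, for contradiction, that $d(x,y) > 4\delta+1$, that is, $d(x,y) \geq 4\delta+2$. Since $c \in S_{2\delta+1}(x,y)$, we have $c \in I(x,y)$ with $d(x,c)=2\delta+1$ and $d(c,y)=d(x,y)-(2\delta+1)$.

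Next I would check that $c$ lies sufficiently deep in the interval for the strict version of Corollary~\ref{cor:deltaBoundOnEccentricity} to apply. On the $x$-side this is immediate: $d(x,c)=2\delta+1>2\delta$. On the $y$-side we use the hypothesis $d(x,y)\geq 4\delta+2$ to get $d(c,y)=d(x,y)-(2\delta+1)\geq 2\delta+1 > 2\delta$. Both required strict inequalities hold.

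Applying the strict part of Corollary~\ref{cor:deltaBoundOnEccentricity} to this $c$ then yields $e(c) < \max\{e(x),e(y)\}$, contradicting the hypothesis $e(c)\geq\max\{e(x),e(y)\}$. Hence $d(x,y)\leq 4\delta+1$, as claimed.

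There is really no obstacle here: the work has already been done in Lemma~\ref{lemDuality} and Corollary~\ref{cor:deltaBoundOnEccentricity}. The only thing to watch is the off-by-one arithmetic that makes $c\in S_{2\delta+1}(x,y)$ exactly the threshold for triggering the strict bound once $d(x,y)\geq 4\delta+2$; getting that threshold right is the entire content of the corollary.
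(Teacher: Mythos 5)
Your proof is correct and follows exactly the paper's own argument: assume $d(x,y)\geq 4\delta+2$ for contradiction and invoke the strict part of Corollary~\ref{cor:deltaBoundOnEccentricity} to force $e(c)<\max\{e(x),e(y)\}$. You merely spell out the verification that $d(x,c)=2\delta+1>2\delta$ and $d(c,y)\geq 2\delta+1>2\delta$, which the paper leaves implicit.
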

\begin{proof} 
By contradiction assume that $e(c) \geq \max\{e(x),\ e(y)\}$ and $d(x,y) > 4\delta + 1$, i.e., $d(x,y) \geq 4\delta + 2$.
By Corollary \ref{cor:deltaBoundOnEccentricity}, 
$e(c) < \max\{e(x),\ e(y)\}$ must hold, giving  a contradiction. 
\end{proof}

\section{Pseudoconvexity of the sets $C_{\leq k}(G)$ and their diameters 
}\label{section:convexity}
A subset $S$ of a geodesic metric space or a graph is \emph{convex} if
for all $x,y \in S$ the metric interval $I(x,y)$ is contained in $S$.
This notion was extended by Gromov~\cite{Gromov1987} as follows:
for $\epsilon \geq 0$, a subset $S$ of a geodesic metric space or a graph is called \emph{$\epsilon$-quasiconvex} 
if for all $x,y \in S$ the metric interval $I(x,y)$ is contained in the disk $D(S,\epsilon)$. 
$S$ is said to be \emph{quasiconvex} if there is a constant $\epsilon \geq 0$ such that $S$ is $\epsilon$-quasiconvex.
Quasiconvexity plays an important role in the study of hyperbolic and cubical groups,
and hyperbolic graphs contain an abundance of quasiconvex sets~\cite{Chepoi:2017:CCI:3039686.3039835}.
Unfortunately, $\epsilon$-quasiconvexity is not closed under intersection.
Consider a path $P=(v_0,\dots, v_{2k})$ of length $2k$.  
Let $S_1 = \{v_0, v_{2k}\} \cup \{v_i : i \text{ is odd}\}$ and $S_2 = \{v_0, v_{2k}\} \cup \{v_i : i \text{ is even}\}$.
Both $S_1$ and $S_2$ are $1$-quasiconvex, however, their intersection is only $k$-quasiconvex. 

In this section, we introduce $\beta$-pseudoconvexity which satisfies this important intersection axiom of convexity
and we illustrate the presence of pseudoconvex sets in hyperbolic graphs.
For $\beta\ge 0$, we define a set $S \subseteq V$ to be  $\beta$-\emph{pseudoconvex} if, for any vertices $x,y \in S$, any vertex $z \in I(x,y) \setminus S$ satisfies $\min\{d(z,x), d(z,y)\} \leq \beta$. 
Note that when $\beta = 0$ the definitions of convex sets and $\beta$-pseudoconvex sets coincide.
Moreover, $\beta$-pseudoconvexity implies 
$\beta$-quasiconvexity.
Consider a $\beta$-pseudoconvex set $S$ and its arbitrary two vertices $x$ and $y$. As any vertex $z\in I(x,y)\setminus S$ satisfies 
$\min\{d(z,x), d(z,y)\} \leq \beta$, necessarily, $z$ belongs to disk $D(S,\beta)$.  
Since the empty set and $V$ are $\beta$-pseudoconvex, the following lemma establishes that $\beta$-pseudoconvex sets form a convexity. 

\begin{lemma}\label{pseudoconvexIntersection}
If sets $S_1 \subseteq V$ and $S_2 \subseteq V$ are $\beta$-pseudoconvex,
then $S_1 \cap S_2$ is $\beta$-pseudoconvex.
\end{lemma}
\begin{proof}
Consider any two vertices $x,y \in S_1 \cap S_2$.
If there is a vertex $z \in I(x,y)$ which does not belong to $S_1 \cap S_2$,
then,  $z \notin S_1$ or $z \notin S_2$. Without loss of generality, assume $z \notin S_1$. Then, $\min\{d(z,x), d(z,y)\} \leq \beta$ because $S_1$ is $\beta$-pseudoconvex.
\end{proof}

It is easy to see that in 0-hyperbolic graphs (which are block graphs, i.e., graphs in which every 2-connected component is a complete graph) all disks are convex. We next show that all disks are  $(2\delta -1)$-pseudoconvex in $\delta$-hyperbolic graphs with $\delta>0$.

\begin{lemma}\label{disksArePseudoconvex}
Let $G$ be a $\delta$-hyperbolic graph. 
Any disk of $G$ is $(2\delta-1)$-pseudoconvex, when $\delta>0$, and is convex, when $0\le \delta\le 1/2$.  
\end{lemma}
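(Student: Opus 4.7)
The plan is to apply Gromov's 4-point condition to the four vertices $x$, $y$, $z$, and the center $a$ of a disk $D = D(a,k)$, where $x,y \in D$ and $z$ is any vertex in $I(x,y) \setminus D$. Since $z \in I(x,y)$, we have $d(x,y) = d(x,z) + d(z,y)$, so
\[
s_1 := d(x,y) + d(z,a) = d(x,z) + d(z,y) + d(z,a)
\]
upper-bounds both $s_2 := d(x,z) + d(y,a)$ and $s_3 := d(x,a) + d(y,z)$ via the triangle inequalities $d(y,a) \le d(y,z) + d(z,a)$ and $d(x,a) \le d(x,z) + d(z,a)$. Thus $s_1$ is the (weakly) largest of the three distance sums.

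Having identified $s_1$ as the largest sum, $\delta$-hyperbolicity gives $s_1 - \max\{s_2,s_3\} \le 2\delta$, which is equivalent to
\[
\min\bigl\{\, d(x,z) + d(z,a) - d(x,a),\ \ d(y,z) + d(z,a) - d(y,a)\,\bigr\} \le 2\delta.
\]
If the first quantity is at most $2\delta$, then rearranging together with $d(x,a) \le k$ (because $x \in D$) and $d(z,a) \ge k+1$ (because $z \notin D$ and distances are integers) yields $d(x,z) \le 2\delta + k - (k+1) = 2\delta - 1$. Symmetrically, the second case produces $d(y,z) \le 2\delta - 1$. In either case $\min\{d(x,z), d(y,z)\} \le 2\delta - 1$, proving $(2\delta-1)$-pseudoconvexity.

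For the convex case, I would simply remark that when $0 \le \delta \le 1/2$ the bound $2\delta - 1 \le 0$ combined with $z \ne x,y$ (since $x,y \in D$ while $z \notin D$, so $d(x,z), d(y,z) \ge 1$) makes the pseudoconvex inequality unattainable. Hence no such $z$ can exist and $I(x,y) \subseteq D$, i.e., $D$ is convex.

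I anticipate no real obstacle: the whole argument is a single application of the 4-point condition with the center $a$ chosen as the fourth point. The key design choice is recognizing that inserting $a$ among the four vertices forces $d(x,y) + d(z,a)$ to dominate the other two sums, which turns the hyperbolicity inequality directly into the desired bound on either $d(x,z)$ or $d(y,z)$ via the gap $d(z,a) - d(x,a) \ge 1$ (respectively $d(z,a) - d(y,a) \ge 1$).
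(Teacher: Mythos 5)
Your proposal is correct and rests on the same idea as the paper's proof: the paper argues by contradiction, assuming $d(z,x)\ge 2\delta$ and $d(z,y)\ge 2\delta$ and invoking its Corollary~2 (which, once unwound through Lemma~1, is precisely the 4-point condition applied to $x,y,z$ and the disk center with $d(x,y)+d(z,a)$ identified as the dominant sum). You simply inline that machinery and run it forward, which is self-contained and makes the source of the ``$-1$'' (the unit gap $d(z,a)-\max\{d(x,a),d(y,a)\}\ge 1$ coming from integrality) more transparent, but it is not a genuinely different route.
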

\begin{proof}
Consider a disk $D(v,r)$ centered at a vertex $v \in V$ and with radius $r$.
Let $x,y \in D(v,r)$ and let $z \in I(x,y)$ be a vertex which is not contained in $D(v,r)$.
By contradiction, assume that $d(z,x) \geq 2\delta$ and $d(z,y) \geq 2\delta$.
Since $z \notin D(v,r)$,  $d(v,z) > \max \{d(v,y), d(v,x)\} $.
By Corollary~\ref{cor:distanceFromMiddle} applied to vertices $x,y,v$ and vertex $z \in I(x,y)$,
necessarily,  $d(z,v) \leq \max\{ d(x,v), d(y,v) \}$, a contradiction.
\end{proof}

It is known that in chordal graphs (including 0-hyperbolic graphs) all  sets $C_{\leq k}(G)$, $k\in \mathbb{N}$, are convex (see, e.g.,~\cite{Chepoi-center-triangulated,Dragan2017EccentricityAT}). We next show that all such sets are  $(2\delta -1)$-pseudoconvex in $\delta$-hyperbolic graphs with $\delta>0$.

\begin{lemma}\label{centersArePseudoconvex}
Let $G$ be a $\delta$-hyperbolic graph  and $k \geq 0$ be an arbitrary integer. Any set $C_{\leq k}(G)$ of $G$ is $(2\delta -1)$-pseudoconvex, when $\delta>0$, and is convex, when $0\le \delta\le 1/2$.  
\end{lemma}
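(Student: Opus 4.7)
The plan is to adapt the proof of Lemma \ref{disksArePseudoconvex} in the obvious way, replacing the use of Corollary \ref{cor:distanceFromMiddle} by its eccentricity counterpart, Corollary \ref{cor:deltaBoundOnEccentricity}. Concretely, take any two vertices $x,y \in C_{\leq k}(G)$, so $e(x) \leq rad(G)+k$ and $e(y) \leq rad(G)+k$, and suppose there is a vertex $z \in I(x,y)$ with $z \notin C_{\leq k}(G)$, i.e., $e(z) \geq rad(G)+k+1 > \max\{e(x), e(y)\}$. I want to show that $\min\{d(z,x), d(z,y)\} \leq 2\delta-1$.

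The main step is a short proof by contradiction. Assume $d(z,x) \geq 2\delta$ and $d(z,y) \geq 2\delta$. Since $z \in I(x,y)$, we get $d(x,y) = d(x,z) + d(z,y) \geq 4\delta$, so the hypotheses of Corollary \ref{cor:deltaBoundOnEccentricity} apply to the triple $x,y,z$. That corollary then yields $e(z) \leq \max\{e(x), e(y)\} \leq rad(G)+k$, contradicting $z \notin C_{\leq k}(G)$. Therefore $\min\{d(z,x), d(z,y)\} < 2\delta$, and since graph distances are integers while $2\delta$ is also an integer (because $\delta$ is always an integer or a half-integer, so $2\delta$ is an integer), we conclude $\min\{d(z,x), d(z,y)\} \leq 2\delta-1$. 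This establishes $(2\delta-1)$-pseudoconvexity for $\delta > 0$.

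For the second assertion, note that when $0 \leq \delta \leq 1/2$ the bound $2\delta-1 \leq 0$ together with $x,y \in C_{\leq k}(G)$ forces any hypothetical $z \in I(x,y) \setminus C_{\leq k}(G)$ to satisfy $\min\{d(z,x), d(z,y)\} \leq 0$, hence $z \in \{x,y\} \subseteq C_{\leq k}(G)$, a contradiction. Thus $I(x,y) \subseteq C_{\leq k}(G)$ and $C_{\leq k}(G)$ is convex.

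I do not expect a serious obstacle: the whole argument is a clean transcription of the disk proof, because Corollary \ref{cor:deltaBoundOnEccentricity} is precisely the eccentricity analogue of Corollary \ref{cor:distanceFromMiddle} that was needed there. The only minor subtlety is the half-integral nature of $\delta$, which must be invoked to convert the strict inequality $< 2\delta$ into the required $\leq 2\delta-1$; since $2\delta$ is always an integer, this step goes through without loss.
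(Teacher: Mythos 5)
Your proof is correct, but it takes a genuinely different route from the paper's. You argue directly: assuming a vertex $z\in I(x,y)\setminus C_{\leq k}(G)$ has $d(z,x)\geq 2\delta$ and $d(z,y)\geq 2\delta$, you get $d(x,y)\geq 4\delta$ and invoke Corollary~\ref{cor:deltaBoundOnEccentricity} to force $e(z)\leq\max\{e(x),e(y)\}\leq rad(G)+k$, a contradiction. The paper instead writes $C_{\leq k}(G)$ as the intersection $\bigcap_{v\in V}D(v,rad(G)+k)$, applies Lemma~\ref{disksArePseudoconvex} to each disk, and then uses closure under intersection (Lemma~\ref{pseudoconvexIntersection}); this modular route showcases the convexity structure of pseudoconvex sets, which is one of the points of the section, whereas your argument is shorter and self-contained, resting on the eccentricity half of Lemma~\ref{lemDuality} rather than the distance half. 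Both approaches share the same final integrality step (converting $\min\{d(z,x),d(z,y)\}<2\delta$ into $\leq 2\delta-1$ because $2\delta$ is an integer), which the paper leaves implicit in the disk lemma and you correctly make explicit; and your handling of the $0\le\delta\le 1/2$ case (forcing $z\in\{x,y\}$) matches what the paper's chain of lemmas yields. No gaps.
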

\begin{proof}
Let $S$ be the intersection of disks $D(v, rad(G) + k)$ centered at each vertex $v \in V$.
By Lemma~\ref{disksArePseudoconvex}, each disk is $(2\delta - 1)$-pseudoconvex.
By Lemma~\ref{pseudoconvexIntersection}, $S$ is also $(2\delta - 1)$-pseudoconvex.
It remains only to show that $S = C_{\leq k}(G)$.
Recall that $C_{\leq k}(G) = \{v \in V : e(v) \leq rad(G) + k\}$.
If $x \in S$, then $d(x,v) \leq rad(G) + k$ for all $v \in V$.
Therefore, $e(x) \leq rad(G) + k$ and so $x \in C_{\leq k}(G)$.
On the other hand, if $x \notin S$, then $d(x,v) > rad(G) + k$ for some $v \in V$.
Therefore, $e(x) > rad(G) + k$ and so $x \notin C_{\leq k}(G)$.
Hence, $S = C_{\leq k}(G)$. 
\end{proof}

As a consequence of Lemma~\ref{centersArePseudoconvex}, we obtain several interesting features of any shortest path between vertices of $C_{\leq k}(G)$
.
\begin{corollary}\label{pseudoconvexCentersA}
Let $G$ be a $\delta$-hyperbolic graph, and let $x,y \in C_{\leq k}(G)$ for an integer $k \geq 0$.
If there is a vertex $c \in I(x,y)$ where $c \notin C_{\leq k}(G)$, then $d(y,c) < 2\delta$ or $d(x,c) < 2\delta$.
\end{corollary}


\begin{corollary}\label{pseudoconvexCentersC}
Let $G$ be a $\delta$-hyperbolic graph with $\delta>0$, and let $x,y \in C_{\leq k}(G)$ for an integer $k \geq 0$.
If there is a shortest path $P(x,y)$ where $P(x,y) \cap C_{\leq k}(G) = \{x,y\}$, then $d(x,y) \leq 4\delta - 1$.
\end{corollary}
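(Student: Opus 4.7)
The plan is to argue by contradiction using Corollary~\ref{pseudoconvexCentersA} applied to a vertex placed near the middle of the shortest path $P(x,y)$. Suppose, toward a contradiction, that $d(x,y) \geq 4\delta$. Since $\delta > 0$ is an integer or half-integer, $2\delta$ is a positive integer. First I would pick a vertex $c$ on $P(x,y)$ with $d(x,c) = 2\delta$; such a $c$ exists because $d(x,y) \geq 4\delta \geq 2\delta$ and $P(x,y)$ is a shortest path, so every integer value between $0$ and $d(x,y)$ is attained by some vertex on it.

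Next I would verify the two hypotheses needed to invoke Corollary~\ref{pseudoconvexCentersA} to $c$. Because $c$ lies on a shortest $(x,y)$-path, $c \in I(x,y)$. Because $d(x,c) = 2\delta \geq 1$ and $d(y,c) = d(x,y) - 2\delta \geq 2\delta \geq 1$, the vertex $c$ is distinct from both $x$ and $y$, so $c \in P(x,y) \setminus \{x,y\}$; the assumption $P(x,y) \cap C_{\leq k}(G) = \{x,y\}$ then forces $c \notin C_{\leq k}(G)$. Applying Corollary~\ref{pseudoconvexCentersA} yields $\min\{d(x,c), d(y,c)\} < 2\delta$. But by construction $d(x,c) = 2\delta$ and $d(y,c) \geq 2\delta$, so $\min\{d(x,c), d(y,c)\} = 2\delta$, a contradiction. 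Hence $d(x,y) < 4\delta$, and since $d(x,y)$ is an integer and $4\delta$ is an integer (as $\delta$ is an integer or half-integer), we obtain $d(x,y) \leq 4\delta - 1$.

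There is no real obstacle here; the only small care needed is to confirm that the chosen midpoint $c$ is genuinely off $C_{\leq k}(G)$, which reduces to checking it is neither $x$ nor $y$. That check relies on $2\delta \geq 1$, which is exactly where the hypothesis $\delta > 0$ enters (and which justifies why the $\delta = 0$ case, corresponding to convexity rather than pseudoconvexity, is excluded from the statement).
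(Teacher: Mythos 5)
Your proof is correct and follows essentially the same route as the paper: both pick a vertex $c$ on $P(x,y)$ at distance $2\delta$ from $x$ and derive a contradiction from the $(2\delta-1)$-pseudoconvexity of $C_{\leq k}(G)$ (you invoke it via Corollary~\ref{pseudoconvexCentersA}, the paper via Lemma~\ref{centersArePseudoconvex} directly, which is the same content). The only cosmetic difference is that the paper turns the pseudoconvexity bound into $d(x,y)\leq 4\delta-1$ and contradicts the assumption, whereas you contradict the pseudoconvexity inequality itself and then use integrality of $2\delta$; both are fine.
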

\begin{proof}
Assume $d(x,y) \geq 4\delta$ for some $x,y \in C_{\leq k}(G)$ and let $P(x,y)$ be a shortest path such that $P(x,y) \cap C_{\leq k}(G) = \{x,y\}$.
Consider vertex $c \in P(x,y)$ with $d(x,c) = 2\delta$.
Since $d(x,c) > 2\delta - 1$, by Lemma~\ref{centersArePseudoconvex}, necessarily $d(y,c) \leq 2\delta - 1$.
Thus, $d(x,y) = d(x,c) + d(c,y) \leq 2\delta + 2\delta -1 = 4\delta -1$, a contradiction.
\end{proof}

Note that for 0-hyperbolic graphs any shortest path $P(x,y)$ with $x,y\in C_{\leq k}(G)$ is contained in $C_{\leq k}(G)$ due to convexity of  $C_{\leq k}(G)$. 

We next obtain a bound on the diameter of set $C_{\leq k}(G)$.
It is known~\cite{Chepoi2018FastAO}  that if~$G$ is $\tau$-thin,
then $diam(C_{\leq k}(G)) \leq 2k + 2\tau + 1$.
Applying the inequality $\tau \leq 4\delta$ from Proposition~\ref{prop:thinVsHyperbolicity} yields $diam(C_{\leq k}(G)) \leq 2k + 8\delta + 1$,
which can be improved working directly with~$\delta$, hereby generalizing also a result from~\cite{Dragan2018RevisitingRD,Chepoi_2008}.

\begin{lemma}
Any $\delta$-hyperbolic graph~$G$ has $diam(C_{\leq k}(G)) \leq 2k + 4\delta + 1$ for every $k\in \mathbb{N}$.
In particular, $diam(C_{\leq 2\delta}(G)) \leq 8\delta + 1$~\cite{Dragan2018RevisitingRD}, 
$diam(C(G)) \leq 4\delta + 1$ 
and $diam(G) \geq 2rad(G) - 4\delta - 1$~\cite{Dragan2018RevisitingRD,Chepoi_2008}. 
\end{lemma}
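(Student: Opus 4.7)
The plan is to take an arbitrary pair $x, y \in C_{\leq k}(G)$, locate a ``midpoint'' $c$ on a shortest $(x,y)$-path, and apply Lemma~\ref{lemDuality} at $c$ to a furthest vertex $v \in F(c)$. Specifically, pick $c \in I(x,y)$ with $d(x,c) = \lfloor d(x,y)/2 \rfloor$, so that $\min\{d(x,c),\, d(y,c)\} = \lfloor d(x,y)/2 \rfloor$. For any $v \in F(c)$, the Gromov-product dichotomy ``$d(x,c) \leq (v|y)_x$ or $d(x,c) \geq (v|y)_x$'' triggers one of the two conclusions of Lemma~\ref{lemDuality}, giving either $e(c) \leq e(x) - d(x,c) + 2\delta$ or $e(c) \leq e(y) - d(y,c) + 2\delta$.

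Combining with the hypotheses $e(x), e(y) \leq rad(G) + k$ (since $x, y \in C_{\leq k}(G)$) and the trivial inequality $e(c) \geq rad(G)$, the first alternative rearranges to $d(x,c) \leq k + 2\delta$ and the second to $d(y,c) \leq k + 2\delta$; in either case $\min\{d(x,c),\, d(y,c)\} \leq k + 2\delta$. By our choice of $c$ this reads $\lfloor d(x,y)/2 \rfloor \leq k + 2\delta$, i.e., $d(x,y) \leq 2k + 4\delta + 1$. Since $x, y \in C_{\leq k}(G)$ were arbitrary, $diam(C_{\leq k}(G)) \leq 2k + 4\delta + 1$.

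The three parenthetical ``in particular'' assertions are then immediate: setting $k = 2\delta$ and $k = 0$ yields the first two. For the diameter-radius bound $diam(G) \geq 2rad(G) - 4\delta - 1$, apply the same midpoint argument to a pair $x, y$ realising $d(x,y) = diam(G)$, invoking only $e(x), e(y) \leq diam(G)$ and $e(c) \geq rad(G)$; this yields $rad(G) \leq \lceil diam(G)/2 \rceil + 2\delta$, which rearranges to the desired inequality. I anticipate no serious obstacle: the one subtlety is that Lemma~\ref{lemDuality} delivers its eccentricity bound in one of two mutually exclusive forms, but both forms equally well control $\min\{d(x,c),\, d(y,c)\}$, which is exactly the quantity needed to bound half of $d(x,y)$.
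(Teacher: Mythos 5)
Your proof is correct and follows essentially the same route as the paper's: pick a midpoint $c$ of a shortest $(x,y)$-path, take $v \in F(c)$ so that $d(c,v) \geq rad(G)$, and use the dichotomy of Lemma~\ref{lemDuality} to bound $\min\{d(x,c),d(y,c)\}$ by $k+2\delta$. The only cosmetic difference is that the paper routes the dichotomy through Corollary~\ref{cor:maxMinDuality} (which packages exactly the two cases you invoke) and derives $diam(G) \geq 2rad(G)-4\delta-1$ by taking $k = diam(G)-rad(G)$ rather than rerunning the midpoint argument on a diametral pair; both are equivalent.
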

\begin{proof}
Let $x,y \in C_{\leq k}(G)$ realize the diameter of $C_{\leq k}(G)$.
We have $e(x) \leq rad(G) + k$ and $e(y) \leq rad(G) + k$.
Consider a (middle) vertex $c \in I(x,y)$ so that $\min\{d(x,c), d(y,c)\} = \lfloor d(x,y) / 2 \rfloor$.
Let $v \in F(c)$ be a  vertex furthest from~$c$. Hence, $d(c,v) \geq rad(G)$.
By Corollary~\ref{cor:maxMinDuality},
$\lfloor d(x,y)/2 \rfloor = \min\{d(x,c),\ d(y,c)\} \leq \max\{d(v,x),\ d(v,y)\} - d(c,v) + 2\delta \leq rad(G) + k - rad(G) + 2\delta = k + 2\delta$.
Thus, $diam(C_{\leq k}(G))=d(x,y) = d(x,c) + d(c,y) \leq 2\lfloor d(x,y)/2 \rfloor + 1 \leq 2k + 4\delta + 1$. In particular, when $k=diam(G)-rad(G)$, we get $diam(G) \geq 2rad(G) - 4\delta - 1$ as $C_{\leq diam(G)-rad(G)}(G)=V$. 
\end{proof}
Thus, combining this with the result from~\cite{Chepoi2018FastAO}, we get  $diam(C_{\leq k}(G)) \leq 2k + 2\min\{\tau(G),2\delta(G)\} + 1$ for any graph $G$ and any $k\in \mathbb{N}$. 
\medskip 

Summarizing the results of this section,  we have.  
\begin{theorem}\label{th:convexity}
Every disk and every set $C_{\leq k}(G)$, $k \geq 0$, of a $\delta$-hyperbolic graph $G$ is $(2\delta -1)$-pseudoconvex, when $\delta>0$, and is convex, when $0\le \delta\le 1/2$. 
Furthermore, $diam(C_{\leq k}(G)) \leq 2k + 4\delta + 1$. 
\end{theorem}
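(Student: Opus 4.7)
The plan is to observe that Theorem~\ref{th:convexity} is a consolidation theorem: its three assertions are exactly the content of Lemma~\ref{disksArePseudoconvex}, Lemma~\ref{centersArePseudoconvex}, and the diameter lemma immediately preceding the theorem statement. So the proof just has to assemble the three pieces and handle the special case $0\le \delta\le 1/2$ cleanly. The main conceptual step that does the work in all three parts is Corollary~\ref{cor:distanceFromMiddle} (equivalently Corollary~\ref{cor:maxMinDuality}), which bounds the distance from a ``deep'' interval vertex to any third vertex.

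For the pseudoconvexity of disks, I would take a disk $D(v,r)$ with $x,y\in D(v,r)$ and suppose some $z\in I(x,y)\setminus D(v,r)$ satisfies $\min\{d(z,x),d(z,y)\}\ge 2\delta$. Applying Corollary~\ref{cor:distanceFromMiddle} to $x,y,v$ and $c^*=z$ gives $d(z,v)\le \max\{d(x,v),d(y,v)\}\le r$, contradicting $z\notin D(v,r)$. Hence $\min\{d(z,x),d(z,y)\}\le 2\delta-1$, establishing $(2\delta-1)$-pseudoconvexity. For $C_{\le k}(G)$, I would write
\[
C_{\le k}(G)\;=\;\bigcap_{v\in V} D\bigl(v,\;rad(G)+k\bigr),
\]
since a vertex has eccentricity $\le rad(G)+k$ iff it lies in every such disk. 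Combining the disk case with the intersection axiom (Lemma~\ref{pseudoconvexIntersection}) yields $(2\delta-1)$-pseudoconvexity of $C_{\le k}(G)$. The special case $0\le \delta\le 1/2$ then falls out for free: here $2\delta-1\le 0$, so any vertex $z\in I(x,y)\setminus S$ would need $\min\{d(z,x),d(z,y)\}\le 0$, forcing $z\in\{x,y\}\subseteq S$; no such $z$ exists, i.e., $S$ is convex.

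For the diameter bound, I would choose $x,y\in C_{\le k}(G)$ realizing $diam(C_{\le k}(G))$, take a middle vertex $c\in I(x,y)$ with $\min\{d(x,c),d(y,c)\}=\lfloor d(x,y)/2\rfloor$, and pick $v\in F(c)$, so $d(c,v)\ge rad(G)$. Plugging $x,y,v,c$ into Corollary~\ref{cor:maxMinDuality} gives
\[
\Bigl\lfloor \tfrac{d(x,y)}{2}\Bigr\rfloor \;\le\; \max\{d(v,x),d(v,y)\} - d(c,v) + 2\delta \;\le\; \bigl(rad(G)+k\bigr) - rad(G) + 2\delta \;=\; k+2\delta,
\]
so $d(x,y)\le 2\lfloor d(x,y)/2\rfloor +1\le 2k+4\delta+1$.

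I do not expect any serious obstacle: the heavy lifting is already done by Lemma~\ref{lemDuality} and its corollaries, together with the convexity-axiom Lemma~\ref{pseudoconvexIntersection}. The only minor care needed is in the $\delta\le 1/2$ case, where ``pseudoconvexity with a negative parameter'' has to be interpreted correctly as ordinary convexity, and in verifying the identity $C_{\le k}(G)=\bigcap_{v} D(v,rad(G)+k)$, which follows directly from the definition of eccentricity.
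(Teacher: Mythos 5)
Your proposal is correct and follows the paper's own route essentially verbatim: the theorem is indeed just the consolidation of Lemma~\ref{disksArePseudoconvex}, Lemma~\ref{centersArePseudoconvex} (via the identity $C_{\le k}(G)=\bigcap_{v\in V}D(v,rad(G)+k)$ and Lemma~\ref{pseudoconvexIntersection}), and the preceding diameter lemma, each of which you reprove with the same contradiction argument from Corollary~\ref{cor:distanceFromMiddle} and Corollary~\ref{cor:maxMinDuality}. Your handling of the $0\le\delta\le 1/2$ case (the bound $\min\{d(z,x),d(z,y)\}\le 2\delta-1\le 0$ forces $z\in\{x,y\}\subseteq S$, hence convexity) is a clean way to make explicit what the paper leaves implicit.
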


For a $\delta$-hyperbolic graph $G$, although its center $C(G)$ has a bounded diameter in $G$, the graph $\langle C(G) \rangle$ induced by $C(G)$ may not be connected. This is the case even for distance-hereditary graphs (see, e.g., \cite{ourManuscriptDHG}) which are 1-hyperbolic. The following simple construction shows that even if the center of $G$ induces a connected subgraph, it may induce an arbitrary connected graph. Thus, even if $G$ has a bounded hyperbolicity, its center graph $\langle C(G) \rangle$ may have an arbitrarily large hyperbolicity.  
%
Consider any connected graph~$H$ with sufficiently large $\delta(H)$, and construct a new graph $G$ from $H$ by adding  four new vertices $x,y,x^*,y^*$ to $H$, making $x$ and $y$ adjacent to each vertex of~$G$, and making $x^*$ and $y^*$ adjacent only to $x$ and $y$, respectively.
It is easy to see that $G$ is 1-hyperbolic and $\langle C(G) \rangle$ is isomorphic to $H$. However, $H$ has a large hyperbolicity.

\section{Terrain shapes}\label{section:terrainShapes}
We consider the shape of a shortest path $P(y,x)$ as it travels from a vertex~$y$ to a vertex~$x$ through the eccentricity 
layers of $G$. 
We define an ordered pair of vertices $(u,v)$, where $(u,v) \in E$, as an \emph{up-edge} if $e(u) < e(v)$,
as a \emph{horizontal-edge} if $e(u) = e(v)$, and as a \emph{down-edge} if $e(u) > e(v)$.
Thus, any path $P(y,x) = (y = v_0, v_1, ..., v_k = x)$ from vertex $y$ to vertex $x$ can be described by a series of $k$ consecutive ordered pairs $(v_i,v_{i+1})$ which can be classified as either up-edges, down-edges, or horizontal-edges.
We define an \emph{m-segment} ($m$ stands for monotonic) as a series of consecutive ordered pairs 
$(v_{i}, v_{i+1},...,v_{i+\ell-1},v_{i+\ell})$ along a shortest path $P(y,x)$ in which each edge $(v_{j},v_{j+1})$ has the same classification.
An \emph{up-hill} (\emph{down-hill}) on $P(y,x)$ is a maximal by inclusion m-segment $(v_{i},\dots ,v_{i+\ell})$  of $P(y,x)$  where $(v_{j},v_{j+1})$ is an up-edge (down-edge) for each $j \in \{i,i+1,...,i+\ell-1\}$. The value $\ell$ is called the  \emph{height} of the hill. 
A \emph{plain} on $P(y,x)$ is a maximal by inclusion m-segment $(v_{i},\dots ,v_{i+\ell})$ of $P(y,x)$  where $(v_{j},v_{j+1})$ is a horizontal-edge for each $j \in \{i,i+1,...,i+\ell-1\}$. 
The value $\ell$ is called the \emph{width} of the plain.  
A plain $(v_{i},\dots ,v_{i+\ell})$ of $P(y,x)$ 
with $i>0$ and $i+\ell<k$  
is called a \emph{plateau} if  $e(v_{i-1})<e(v_{i})$ and $e(v_{i+\ell+1})<e(v_{i+\ell})$, is called a \emph{valley} if $e(v_{i-1})>e(v_{i})$ and  $e(v_{i+\ell+1})>e(v_{i+\ell})$, is called a \emph{terrace} if $e(v_{i-1})<e(v_{i})$ and $e(v_{i+\ell+1})>e(v_{i+\ell})$ or $e(v_{i-1})>e(v_{i})$ and $e(v_{i+\ell+1})<e(v_{i+\ell})$ 
(see Figure~\ref{fig:shapesExample} and Figure  \ref{fig:shapesExample2}).

\begin{figure}[!htb]
  \begin{center}
    \includegraphics[scale=0.8]{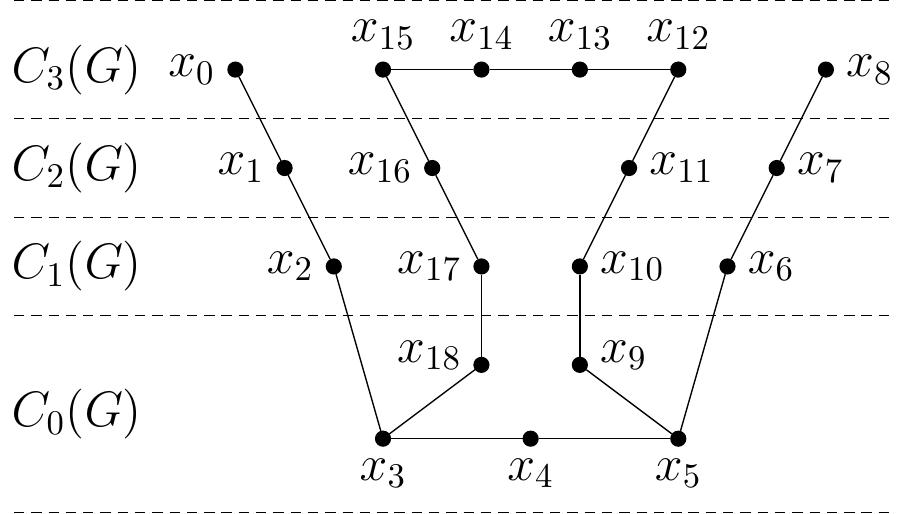}
    \caption{A $5/2$-hyperbolic graph $G$ with $rad(G)=6$ is shown with its eccentricity layers. Shortest path $P(x_0, x_{13})$ from $x_0$ to $x_{13}$ consists of down-hill $(x_0,x_1,x_2,x_3)$, valley $(x_3,x_{18})$, up-hill $(x_{18},x_{17},x_{16},x_{15})$, and 
    plain $(x_{15},x_{14},x_{13})$.  Shortest path $P(x_{16}, x_{11})$ from $x_{16}$ to $x_{11}$ consists of up-hill $(x_{16},x_{15})$, plateau $(x_{15},x_{12})$, down-hill $(x_{12},x_{11})$.
    }
    \label{fig:shapesExample}
  \end{center}
\end{figure}

In this section, we find a limit on the number of up-edges and horizontal-edges which can occur on a shortest path $P(y,x)$ from a vertex $y$ to a vertex $x$. Moreover, we discover that the length of any up-hill or the width of any plain of $P(y,x)$ is small and depends only on the hyperbolicity of $G$. As a consequence, we get that on any given shortest path $P$ from an arbitrary vertex to a closest central vertex, the number of vertices with locality more than 1  does not exceed $\max\{0,4\delta-1\}$. Furthermore, only at most $2\delta$ of them are  located outside $C_{\le \delta}(G)$ and only at most $2\delta+1$ of them are at distance $>2\delta$ from $C(G)$.  

First, for a shortest path $P(y,x)$ from $y$ to $x$ in an arbitrary graph, we establish a relation between the number of up-edges, down-edges, and the eccentricities of~$x$ and~$y$.
Let 
$\mathcal{U}(P(y,x))$, $\mathcal{H}(P(y,x))$, and $\mathcal{D}(P(y,x))$  denote respectively the number of up-edges, horizontal-edges, and down-edges along shortest path $P(y,x)$ when walking from $y$ to $x$.
Since each edge is classified as exactly one of the three categories,
any shortest path $P(y,x)$ has $d(y,x) = \mathcal{U}(P(y,x)) + \mathcal{H}(P(y,x)) + \mathcal{D}(P(y,x))$.

\begin{lemma}\label{numDownEdges}
Let $G$ be an arbitrary graph. 
For any shortest path $P(y,x)$ of $G$ from a vertex $y$ to a vertex $x$ the following holds:
$\mathcal{D}(P(y,x)) - \mathcal{U}(P(y,x)) =  e(y) - e(x)$,
that is, $e(x) + \mathcal{D}(P(y,x)) = \mathcal{U}(P(y,x)) + e(y)$.
\end{lemma}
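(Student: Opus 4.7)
The plan is to exploit the well-known Lipschitz property of the eccentricity function with respect to the graph metric: for any edge $(u,v) \in E$, the eccentricities $e(u)$ and $e(v)$ differ by at most $1$. This is immediate from the triangle inequality, since for any $w \in V$, $d(u,w) \le d(v,w) + 1$ and vice versa, hence $|e(u) - e(v)| \le 1$.

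Combined with the definitions of up-, horizontal-, and down-edges, this Lipschitz bound upgrades to an exact equality on each edge of the shortest path $P(y,x) = (v_0, v_1, \dots, v_k)$. Specifically, if $(v_i, v_{i+1})$ is an up-edge, then $e(v_{i+1}) - e(v_i) \ge 1$ combined with the Lipschitz bound forces $e(v_{i+1}) - e(v_i) = 1$; if it is a down-edge, then $e(v_{i+1}) - e(v_i) = -1$; and if it is a horizontal-edge, then $e(v_{i+1}) - e(v_i) = 0$.

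The final step is a telescoping sum:
\[
e(x) - e(y) = \sum_{i=0}^{k-1} \bigl(e(v_{i+1}) - e(v_i)\bigr) = \mathcal{U}(P(y,x)) \cdot 1 + \mathcal{H}(P(y,x)) \cdot 0 + \mathcal{D}(P(y,x)) \cdot (-1),
\]
which rearranges to $\mathcal{D}(P(y,x)) - \mathcal{U}(P(y,x)) = e(y) - e(x)$, equivalently $e(x) + \mathcal{D}(P(y,x)) = \mathcal{U}(P(y,x)) + e(y)$.

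There is no real obstacle here; the only thing to verify carefully is the promotion of the Lipschitz inequality $|e(v_{i+1}) - e(v_i)| \le 1$ to an exact equality under the three edge classifications, which follows from the trivial fact that $e$ is integer-valued together with the strict inequality built into the definitions of up- and down-edges. Note that the statement holds for arbitrary graphs (no hyperbolicity is needed) and is independent of the chosen shortest path $P(y,x)$ — different shortest paths may have different values of $\mathcal{U}$, $\mathcal{H}$, and $\mathcal{D}$ individually, but the difference $\mathcal{D} - \mathcal{U}$ is an invariant of the endpoints.
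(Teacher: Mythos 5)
Your proof is correct and follows essentially the same route as the paper's: the paper formalizes the telescoping sum as an induction on $d(y,v)$ along $P(y,x)$, with the same case analysis on edge classifications, and likewise relies on the fact that eccentricities of adjacent vertices differ by at most one to upgrade each edge class to an exact $\pm 1$ or $0$ change. Your explicit justification of that Lipschitz step is a minor presentational improvement, not a different argument.
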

\begin{proof}
We use an induction on $d(y,v)$ for any vertex $v \in P(y,x)$.
First assume that $v$ is adjacent to $y$.
If~$(y,v)$ is an up-edge, then $e(y) - e(v) = -1$ and $\mathcal{D}(P(y,v)) - \mathcal{U}(P(y,v)) = -1$.
If~$(y,v)$ is a horizontal-edge, then $e(y) - e(v) = 0$ and $\mathcal{D}(P(y,v)) - \mathcal{U}(P(y,v)) = 0$.
If~$(y,v)$ is a down-edge, then $e(y) - e(v) = 1$ and  $\mathcal{D}(P(y,v)) - \mathcal{U}(P(y,v)) = 1$.
Now consider an arbitrary vertex $v \in P(y,x)$ and assume, by induction, that $e(y) - e(v) = \mathcal{D}(P(y,v)) - \mathcal{U}(P(y,v))$.
Let vertex $u \in P(y,x)$ be adjacent to $v$ with $d(y,u) = d(y,v) + 1$.
By definition, $\mathcal{D}(P(y,u)) = \mathcal{D}(P(y,v)) + \mathcal{D}((v,u))$ and
$\mathcal{U}(P(y,u)) = \mathcal{U}(P(y,v)) + \mathcal{U}((v,u))$.
We consider three cases based on the classification of edge $(v,u)$.

If~$(v,u)$ is an up-edge, then $e(u) = e(v) + 1$, $\mathcal{U}((v,u)) = 1$, and $\mathcal{D}((v,u)) =0=  \mathcal{U}((v,u)) - 1$.
By the inductive hypothesis,
$\mathcal{D}(P(y,u)) = \mathcal{D}(P(y,v)) + \mathcal{D}((v,u))
                       = \mathcal{U}(P(y,v)) + e(y) - e(v) + \mathcal{U}((v,u)) - 1
                       = \mathcal{U}(P(y,u)) + e(y) - e(v) - 1
                       = \mathcal{U}(P(y,u)) + e(y) - e(u)$.

If~$(v,u)$ is a horizontal-edge, then $e(u) = e(v)$, $\mathcal{U}((v,u)) = 0=\mathcal{D}((v,u))$.
By the inductive hypothesis,
$\mathcal{D}(P(y,u)) = \mathcal{D}(P(y,v)) + \mathcal{D}((v,u))
                       = \mathcal{U}(P(y,v)) + e(y) - e(v) + \mathcal{U}((v,u))
                       = \mathcal{U}(P(y,u)) + e(y) - e(v)
                       = \mathcal{U}(P(y,u)) + e(y) - e(u)$.

If~$(v,u)$ is a down-edge, then $e(u) = e(v) - 1$, $\mathcal{U}((v,u)) = 0$, and  $\mathcal{D}((v,u)) = \mathcal{U}((v,u)) + 1$.
By the inductive hypothesis,
$\mathcal{D}(P(y,u)) = \mathcal{D}(P(y,v)) + \mathcal{D}((v,u))
                       = \mathcal{U}(P(y,v)) + e(y) - e(v) + \mathcal{U}((v,u)) + 1
                       = \mathcal{U}(P(y,u)) + e(y) - e(v) + 1
                       = \mathcal{U}(P(y,u)) + e(y) - e(u)$.
\end{proof}

\begin{lemma}\label{limitedAnomaliesToMonotonicity}
Let $G$ be an arbitrary graph.
For any shortest path $P(y,x)$ of $G$ from a vertex $y$ to a vertex $x$
the following holds: $2\mathcal{U}(P(y,x)) + \mathcal{H}(P(y,x)) = d(y,x) - (e(y) - e(x))$.
\end{lemma}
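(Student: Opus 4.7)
The plan is to derive the identity as a direct algebraic consequence of Lemma~\ref{numDownEdges} together with the obvious partition identity for the edges of the shortest path. No induction or hyperbolicity is needed; it is pure bookkeeping.

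First, I would recall that every edge on $P(y,x)$ is exactly one of an up-edge, horizontal-edge, or down-edge, so the length decomposes as
\[
d(y,x) \;=\; \mathcal{U}(P(y,x)) + \mathcal{H}(P(y,x)) + \mathcal{D}(P(y,x)).
\]
This lets me express $\mathcal{D}(P(y,x)) = d(y,x) - \mathcal{U}(P(y,x)) - \mathcal{H}(P(y,x))$.

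Next, I would apply Lemma~\ref{numDownEdges}, which gives $\mathcal{D}(P(y,x)) - \mathcal{U}(P(y,x)) = e(y) - e(x)$. Substituting the expression for $\mathcal{D}(P(y,x))$ from the previous step yields
\[
d(y,x) - \mathcal{U}(P(y,x)) - \mathcal{H}(P(y,x)) - \mathcal{U}(P(y,x)) \;=\; e(y) - e(x),
\]
i.e.\ $d(y,x) - 2\mathcal{U}(P(y,x)) - \mathcal{H}(P(y,x)) = e(y) - e(x)$. Rearranging produces the claimed equality $2\mathcal{U}(P(y,x)) + \mathcal{H}(P(y,x)) = d(y,x) - (e(y) - e(x))$.

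There is no real obstacle here: the only ``work'' is invoking the previously proved Lemma~\ref{numDownEdges} and the trivial length decomposition. The main thing to be careful about is orientation, namely that $\mathcal{U},\mathcal{H},\mathcal{D}$ are defined with respect to the direction of travel from $y$ to $x$ (which is exactly how Lemma~\ref{numDownEdges} uses them), so the sign of $e(y)-e(x)$ matches in both steps.
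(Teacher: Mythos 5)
Your proposal is correct and follows essentially the same route as the paper: both combine the edge-partition identity $d(y,x)=\mathcal{U}(P(y,x))+\mathcal{H}(P(y,x))+\mathcal{D}(P(y,x))$ with Lemma~\ref{numDownEdges} and rearrange (the paper substitutes for $\mathcal{D}$ in the sum, you substitute for $\mathcal{D}$ in the lemma's equation, which is the same algebra). Your remark about keeping the orientation from $y$ to $x$ consistent is the right thing to watch and matches the paper's conventions.
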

\begin{proof}
By Lemma~\ref{numDownEdges}, we have $\mathcal{D}(P(y,x)) = \mathcal{U}(P(y,x)) + e(y) - e(x)$.
Therefore,
$\mathcal{U}(P(y,x)) + \mathcal{H}(P(y,x)) + (\mathcal{U}(P(y,x)) + e(y) - e(x)) = \mathcal{U}(P(y,x)) + \mathcal{H}(P(y,x)) + \mathcal{D}(P(y,x)) = d(y,x)$.
Thus,  $2\mathcal{U}(P(y,x)) + \mathcal{H}(P(y,x)) = d(y,x) - (e(y) - e(x))$.
\end{proof}

In what follows, we focus on $\delta$-hyperbolic graphs.
First, we consider any  shortest path between two arbitrary vertices and show that any plain on it has width at most $4\delta+1$. If a plain is elevated and/or is far enough from the  end-vertices of the shortest path, then its width is even smaller.   

\begin{theorem}\label{th:plateauAny}
Let $G$ be a $\delta$-hyperbolic graph and
let $P(y,x)$ be a shortest path between any vertex $y\in V$ and  any vertex $x \in V$.
\begin{enumerate}[noitemsep, label=\roman*)]
\item[$(i)$]  Any plain $(u,...,v)$ of $P(y,x)$ has width $d(u,v) \leq 4\delta + 1$. 
Terraces are absent, if $\delta = 0$, and have width at most $4\delta - 1$, otherwise.
Plateaus are absent, if $\delta \leq \frac{1}{2}$, and have width at most $4\delta - 3$, otherwise.
\item[$(ii)$]  Any plain $(u,...,v)$ of $P(y,x)$ with $e(u) > \min\{e(x), e(y)\} + \delta$ has width $d(u,v) \leq 2\delta$,
and, if $(u,...,v)$ is a plateau, it has width at most $2\delta - 2$. \\
In particular,  if $\delta = 1$, then plateaus in any shortest path $P(y,x)$ are absent in eccentricity layers $C_k(G)$ for all $k > \min\{e(x), e(y)\} + 1 - rad(G)$.
Moreover, plateaus are completely absent if $\delta = 1$ and every vertex $c$ in $P(y,x)$, $c\neq x$, has $e(c)>e(x)$. 
\item[$(iii)$]  If there are two vertices $u,v \in P(y,x)$ with $e(u) = e(v)\ge \min\{e(x),e(y)\}$, $d(x, \{u,v\}) > 2\delta$ and $d(y, \{u,v\}) > 2\delta$, then $d(u,v) \leq 2\delta$.
\end{enumerate}
\end{theorem}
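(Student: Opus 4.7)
My plan is to derive all three parts from Lemma~\ref{lemDuality} by choosing, for each assertion, a well-tailored triangle; the common principle is that all vertices on the plain share a common eccentricity $t$, so the equality $e(u) = e(v) = t$ feeds into one of the lemma's two cases to yield $d(u,v) \le 2\delta$ directly, while the complementary case is either ruled out or combined with another inequality to finish the bound.

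For part (i), the universal plain bound $d(u,v) \le 4\delta+1$ is a direct application of Corollary~\ref{largerEccentricityOfMiddleWhenEndVerticesClose}: if $d(u,v) \ge 2\delta+1$, any $c \in S_{2\delta+1}(u,v)$ lies on the plain and satisfies $e(c) = t \ge \max\{e(u), e(v)\}$. For a terrace, let $u',v'$ be the outside neighbors of $u,v$ on $P(y,x)$, say with $e(u') = t+1$ and $e(v') = t-1$, and apply Lemma~\ref{lemDuality} with interval endpoints $(u',v')$ to any interior plain vertex $c$ together with a furthest vertex $w \in F(c)$. One of the two cases must hold: case (i) gives $d(u',c) \le 2\delta+1$ or case (ii) gives $d(v',c) \le 2\delta-1$. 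Choosing $c$ on the plain with $d(u',c) = 2\delta+2$ (equivalently $d(u,c) = 2\delta+1$) kills case (i), so case (ii) yields $d(v,c) \le 2\delta-2$ and hence $d(u,v) \le 4\delta-1$. A plateau (both outside neighbors at $e = t-1$) tightens each case by one, and the analogous choice of $c$ yields $d(u,v) \le 4\delta-3$.

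For part (ii), let $A, B$ denote the endpoints of $P(y,x)$ with $e(A) = \min\{e(x), e(y)\}$. The key observation is that under the elevation hypothesis $t > e(A) + \delta$, case (i) of Lemma~\ref{lemDuality} applied to any plain vertex $c$ with interval endpoints $(A,B)$ cannot hold, since it would imply $e(c) \le e(A)+\delta < t$. Hence case (ii) applies throughout the plain and gives $d(B,c) \le e(B) + 2\delta - t$, in particular bounding $d(B,u)$. Combining this with the universal inequality $d(B,v) \ge |e(B) - e(v)| = |e(B) - t|$ (eccentricities of adjacent vertices differ by at most $1$) and the path ordering $d(B,u) = d(u,v) + d(v,B)$, subtraction yields $d(u,v) \le 2\delta$ after splitting on the sign of $e(B) - t$ (in the subcase $e(B) < t$, the inequality $e(B) \ge t-\delta$ from Corollary~\ref{cor:deltaBoundOnEccentricity} closes the argument). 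For a plateau, the outside neighbor $v'$ of $v$ with $e(v') = t-1$ gives $d(B,v) \ge |e(B) - t + 1| + 1$, tightening the subtraction by $2$ and yielding $d(u,v) \le 2\delta - 2$. The $\delta = 1$ corollaries follow: in elevated layers, $2\delta - 2 = 0$ excludes plateaus of width $\ge 1$; under the stronger hypothesis $e(c) > e(x)$ for all $c \ne x$ on $P(y,x)$, a short combinatorial check using integrality of eccentricities rules out every candidate plateau, because the outside neighbor on the $x$-side would need to satisfy $e(x) < e(v') < e(v)$, with no integer solution when $e(v) = e(x)+1$.

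For part (iii), assume WLOG $d(y,u) \le d(y,v)$, so $u \in I(y,v)$ and $v \in I(u,x)$. Apply Lemma~\ref{lemDuality} first with interval endpoints $(u,x)$ and interior vertex $v$: case (i) yields $d(u,v) \le 2\delta$ directly from $e(u) = e(v)$, while case (ii) yields $d(x,v) \le e(x) + 2\delta - e(v)$. Apply it second with interval endpoints $(y,v)$ and interior vertex $u$: case (ii) yields $d(u,v) \le 2\delta$ directly, while case (i) yields $d(y,u) \le e(y) + 2\delta - e(u)$. If neither direct conclusion holds, the remaining two inequalities combined with the hypotheses $d(x,v) > 2\delta$ and $d(y,u) > 2\delta$ force $e(v) < e(x)$ and $e(u) < e(y)$, so $e(u) = e(v) < \min\{e(x), e(y)\}$, contradicting the standing hypothesis. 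The main conceptual obstacle throughout is identifying the right triangle for Lemma~\ref{lemDuality} in each case so that one of its two cases delivers the bound on $d(u,v)$ via the equality of plain eccentricities while the complementary case contradicts a standing hypothesis (elevation in (ii), plateau shape for the sharper bounds in (i)--(ii), or boundary distance in (iii)).
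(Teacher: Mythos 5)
Your overall strategy is sound and, for the bulk of the theorem, matches the paper's: everything is driven by Lemma~\ref{lemDuality} applied to a well-chosen subpath and a furthest vertex from a plain vertex. Your part (iii) is a symmetrized version of the paper's argument (the paper works with the full endpoints $x,y$ and the vertex of $\{u,v\}$ closer to $x$; you apply the lemma twice on the sub-intervals $I(u,x)$ and $I(y,v)$ — both work). In part (i) you replace the paper's use of pseudoconvexity of $C_{\le e(u)-1}(G)$ (Corollary~\ref{pseudoconvexCentersC}) for plateaus by a direct two-case application of Lemma~\ref{lemDuality} with both outside neighbors as interval endpoints; this is a legitimate alternative that avoids the machinery of Section~\ref{section:convexity}. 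In part (ii) you apply the lemma with the global endpoints $A,B$ and close with the $1$-Lipschitz property of the eccentricity function, where the paper instead picks a specific plain vertex at distance $2\delta+1$ from $u$ on the $(u,x)$-subpath; again both routes are valid, and your subcase $e(B)<t$ closes without actually needing Corollary~\ref{cor:deltaBoundOnEccentricity}. You should, however, say a word about the degenerate situations where the plain is too short for the designated vertex $c$ (at distance $2\delta+1$, resp.\ $2\delta-1$, from $u$) to exist; these are exactly the cases that yield the absence claims for $\delta=0$ and $\delta\le\frac12$, which your write-up passes over silently.

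There is one genuine gap, in the final sub-claim of part (ii) (plateaus completely absent when $\delta=1$ and every $c\neq x$ on $P(y,x)$ satisfies $e(c)>e(x)$). You rule out the remaining candidate plateaus at level $e(v)=e(x)+1$ by arguing that the outside neighbor $v'$ on the $x$-side would need $e(x)<e(v')<e(v)$, which has no integer solution. But the hypothesis $e(c)>e(x)$ only applies to vertices $c\neq x$, and the $x$-side neighbor of the plateau may be $x$ itself (a plateau only requires $i+\ell<k$, so $v_{i+\ell+1}=v_k=x$ is allowed). In that configuration $e(v')=e(x)=e(v)-1$ is perfectly consistent and your contradiction evaporates. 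The fix is the one the paper uses: look at the $y$-side outside neighbor $y'=v_{i-1}$ instead, which is never equal to $x$, so the hypothesis forces $e(y')>e(x)$ while the plateau definition forces $e(y')=e(v)-1=e(x)$, a contradiction.
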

\begin{proof}
$(i)$ By definition of a plain, terrace, and plateau $(u,...,v)$, the eccentricity of each vertex $z \in (u,...,v)$ satisfies $e(z) = e(u)$.
In the case of terraces and plateaus, denote by $y'$ ($x'$) a vertex of $P(y,x)$ adjacent to $u$ (to $v$, respectively) but not in $(u,...,v)$.
Suppose, by contradiction, that $d(u,v) > 4\delta + 1$.
By Corollary~\ref{cor:deltaBoundOnEccentricity} applied to $u$ and $v$,
any vertex $c \in I(u,v)$ with $d(u,c) > 2\delta$ and $d(c,v) > 2\delta$
has eccentricity $e(c) < \max\{ e(u), e(v) \} = e(u)$. The latter contradicts with $e(z) = e(u)$. 

Let $(u,...,v)$ be a terrace of $P(y,x)$ and, without loss of generality, let $e(x') = e(u) - 1$.
Suppose, by contradiction, that $d(u,v) > 4\delta $, i.e., $d(u,x') > 4\delta + 1$.
By Corollary~\ref{cor:deltaBoundOnEccentricity} applied to $u$ and $x'$,
we again obtain a contradiction with $e(z) = e(u)$.
Hence, terraces have width at most $4\delta$. 
If additionally $\delta > 0$, by contradiction, suppose that $d(u,v) = 4\delta$.
Let $c \in (u,...,v)$ be a vertex at distance $2\delta - 1$ from $v$, i.e., $d(c,x') = 2\delta$ and $d(u,c) = 2\delta+1$.
We apply Lemma~\ref{lemDuality} to a $(u,x')$-subpath of $P(y,x)$ 
and to a vertex $w \in F(c)$.
If $d(x',c) \leq (w|u)_{x'}$, then $e(c) \leq e(x') - d(x',c) + 2\delta = e(c) - 1$, a contradiction.
If $d(x',c) \geq (w|u)_{x'}$, then $e(c) \leq e(u) - d(u,c) + 2\delta = e(c) - 1$, a contradiction.%

Let now $(u,...,v)$ be a plateau of $P(y,x)$. 
By maximality of m-segment $(u,...,v)$ and the definition of a plateau, $x',y'\in C_{e(u)-1}(G)$. As $(y',u,...,v,x') \cap C_{\le e(u)-1}(G) = \{x',y'\}$, by  Corollary~\ref{pseudoconvexCentersC}, $d(x',y') \leq 4\delta - 1$, implying $d(u,v) \leq 4\delta - 3$.
As $C_{\le e(u)-1}(G)$ is convex when $\delta \leq \frac{1}{2}$, plateaus can occur only if $\delta \geq 1$.

$(ii)$ Without loss of generality, assume that $e(x)\le e(y)$, i.e.,  $e(u) > e(x)+ \delta$.
By definition of a plain, the eccentricity of each vertex $z \in (u,...,v)$ satisfies $e(z) = e(u)$.
By contradiction, assume that $d(u,v) > 2\delta$.
Let $c \in (u,...,v)$ be a vertex at distance $2\delta +1$ from $u$.
We apply Lemma~\ref{lemDuality} to a $(u,x)$-subpath of $P(y,x)$ containing $(u,...,v)$ and to a vertex $w\in F(c)$.
If $d(x,c) \leq (w|u)_x$, then $e(u) = e(c) \leq e(x) + \delta$, a contradiction.
On the other hand, if $d(x,c) \geq (w|u)_x$, then $e(u) = e(c) \leq e(u) - d(u,c) + 2\delta = e(u) - 1$, a contradiction.

Let now $(u,...,v)$ be a plateau of $P(y,x)$. By $(i)$, $\delta \geq 1$.
By contradiction, assume that $d(u,v) > 2\delta - 2$.
Apply Lemma~\ref{lemDuality} to a $(y',x)$-subpath of $P(y,x)$ containing $(u,...,v)$ and to a vertex $w\in F(v)$.
If $d(x,v) \leq (w|y')_x$, then $e(u) = e(v) \leq e(x) + \delta$, a contradiction.
On the other hand, if $d(x,v) \geq (w|y')_x$, then $e(v) \leq e(y') - d(y',v) + 2\delta = e(v) - 1 - d(u,v) - 1 + 2\delta < e(v)$, a contradiction.
Consider now the case when $\delta = 1$.
By the previous claim, any plateau can occur only at eccentricity layer $C_k(G)$ for $k \leq \min\{e(x), e(y)\} + \delta -rad(G)= e(x) + 1-rad(G)$.
Finally, suppose that every vertex $c$ in $P(y,x)$, $c\neq x$, has $e(c)>e(x)$ and 
there is a plateau $(u,...,v)$ at eccentricity layer $C_k(G)$.
By maximality of m-segment $(u,...,v)$ and the definition of a plateau, $e(y') = k - 1 +rad(G) \leq e(x)$,
a contradiction with $e(x) < e(z)$ for all $z \in P(y,x)$.

$(iii)$ Assume $d(x, \{u,v\}) > 2\delta$ and $d(y, \{u,v\}) > 2\delta$ for vertices $u,v \in P(y,x)$ with $e(u) = e(v)$.
Without loss of generality, let $d(x,v) \leq d(x,u)$ and $e(x)\le e(y)$.
Consider an arbitrary $z \in F(v)$.
If $d(x,v) \leq (z|y)_x$ then, by Lemma~\ref{lemDuality}, $e(v) \leq e(x) - d(x,v) + 2\delta < e(x)$, and a contradiction with  $e(x)\le e(v)$ arises.
Thus, $d(x,v) > (z|y)_x$ and, by Lemma~\ref{lemDuality}, $d(y,u) \leq d(y,v) \leq (z|x)_y$ and  $e(v) \leq e(y) - d(y,v) + 2\delta = e(y) - (d(y,u) + d(u,v)) + 2\delta$.
By the triangle inequality, $e(v) = e(u) \geq e(y) - d(y,u)$.
Combining the previous two inequalities, we have $e(y) - d(y,u) \leq e(v) \leq e(y) - d(y,u) - d(u,v) + 2\delta$.
Therefore, $d(u,v) \leq 2\delta$.
\end{proof}


We define a shortest path $P(y,x)$ from a vertex~$y \in V$ to a vertex~$x \in V$
to be \emph{\minp{}} if $e(x)$ is minimal among all vertices of  $P(y,x)$, that is, all $v \in P(y,x)$ satisfy $e(x) \leq e(v)$.
$P(y,x)$ is referred to as \emph{\strictMinp} if all $v \in P(y,x)$ with $v \neq x$ satisfy $e(x) < e(v)$. 
Notice that any shortest path from an arbitrary vertex to a closest central vertex is strict end-minimal. 
We turn our focus now to \minp{} shortest paths because any shortest path can be decomposed
into two \minp{} subpaths.
Let $v \in P(y,x)$ be a vertex closest to $x$ of minimal eccentricity on shortest path $P(y,x)$.
Then $P(y,x)$ is represented by \minpath{} $P(y,v)$ joined with (strict) \minpath{} $P(x,v)$.

The following theorem shows that, in an  end-minimal shortest path $P(y,x)$ from $y$ to $x$,  all up-hills have bounded height, each vertex that is far from $x$ cannot have eccentricity higher than $e(y) + \delta$, and for each vertex $c$ that is far from the extremities of $P(y,x)$, all vertices $z$ of $P(y,x)$ which are between $y$ and  $c$ and with $d(c,z)\ge 2\delta+1$ have eccentricity larger than $e(c)$.   

\begin{theorem}\label{th:upHillsAnywhere}
Let $G$ be a $\delta$-hyperbolic graph
and let $P(y,x)=(y = v_0, v_1, ..., v_p = x)$ be an end-minimal shortest path from $y$ to $x$. 
\begin{enumerate}[noitemsep, label=\roman*)]
\item[$(i)$]  
Any up-hill $(u,...,v)$ of $P(y,x)$ has height $d(u,v) \leq \delta$.
\item[$(ii)$] Any $c \in P(y,x)$ with $d(c,x) > 2\delta$ satisfies $e(c) \leq e(y) + \delta$ and $e(c) \leq e(y) - d(c,y) + 2\delta$.
\item[$(iii)$] If $d(y,x) > 4\delta + 1$, then all vertices $v_i \in P(y,x)$ with $i \in [2\delta + 1, p - 2\delta - 1]$ have $e(v_i) < \min\{e(v_k) : k \in [0, i - 2\delta - 1]\}$ \emph{(a kind of pseudodescending)}.
\end{enumerate}
\end{theorem}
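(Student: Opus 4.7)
My plan is to prove the three parts in the order (ii), (i), (iii), since (i) uses the same case analysis as (ii) and (iii) reduces to (ii) applied to a subpath. The main tool throughout is Lemma~\ref{lemDuality}, and the key observation is that end-minimality ($e(x)\le e(z)$ for every $z\in P(y,x)$) forces one of the two cases of Lemma~\ref{lemDuality} to be impossible (or very restrictive).

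For part (ii), I would take $c\in P(y,x)\subseteq I(y,x)$ with $d(c,x)>2\delta$ and pick any $w\in F(c)$. Lemma~\ref{lemDuality} splits into two cases depending on whether $d(x,c)\le (w|y)_x$ or $d(x,c)\ge (w|y)_x$. In the first case the lemma yields $e(c)\le e(x)-d(x,c)+2\delta$; combined with the end-minimality bound $e(x)\le e(c)$ this forces $d(x,c)\le 2\delta$, contradicting the hypothesis. Therefore only the second case survives, and Lemma~\ref{lemDuality}(ii) delivers both $e(c)\le e(y)+\delta$ and $e(c)\le e(y)-d(c,y)+2\delta$.

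For part (i), let $(u,\ldots,v)$ be an up-hill on $P(y,x)$, so $u$ lies closer to $y$, $v$ lies closer to $x$, and by definition of an up-hill $e(v)=e(u)+d(u,v)$. I apply Lemma~\ref{lemDuality} with $c=v\in I(y,x)$ and any $w\in F(v)$. If $d(x,v)\le (w|y)_x$, the lemma gives $e(v)\le e(x)+\delta$, and since end-minimality implies $e(x)\le e(u)$, we immediately get $d(u,v)=e(v)-e(u)\le\delta$. Otherwise, $d(y,v)\le (w|x)_y$ and Lemma~\ref{lemDuality}(ii) gives $e(v)\le e(y)-d(y,v)+2\delta$. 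Using the $1$-Lipschitz property $e(y)\le e(u)+d(y,u)$ together with $d(y,v)=d(y,u)+d(u,v)$ (since $u,v$ lie on a common shortest $(y,x)$-path), this simplifies to $e(v)\le e(u)-d(u,v)+2\delta$; combined with $e(v)=e(u)+d(u,v)$, this yields $2\,d(u,v)\le 2\delta$, hence $d(u,v)\le\delta$ again.

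For part (iii), assume $d(y,x)=p>4\delta+1$, fix $i\in[2\delta+1,\,p-2\delta-1]$ and any $k\in[0,\,i-2\delta-1]$, and consider the subpath $P'=P(v_k,x)=(v_k,v_{k+1},\ldots,v_p)$. Because end-minimality of $P(y,x)$ is inherited by any subpath ending at $x$, the path $P'$ is itself an end-minimal shortest path from $v_k$ to $x$. On $P'$ the vertex $v_i$ satisfies $d(v_i,x)=p-i\ge 2\delta+1>2\delta$, so part (ii) applied to $P'$ yields $e(v_i)\le e(v_k)-d(v_k,v_i)+2\delta=e(v_k)-(i-k)+2\delta\le e(v_k)-1$, using $i-k\ge 2\delta+1$. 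As this strict inequality $e(v_i)<e(v_k)$ holds for every admissible $k$, we conclude $e(v_i)<\min\{e(v_k):k\in[0,\,i-2\delta-1]\}$.

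The only subtle step is the Case~2 computation in part (i), where one must simultaneously use (a) the characteristic equality $e(v)=e(u)+d(u,v)$ of an up-hill, (b) the $1$-Lipschitz bound on eccentricities, and (c) the additivity $d(y,v)=d(y,u)+d(u,v)$ along the shortest path; the cancellation of the $d(y,u)$ terms then isolates $d(u,v)\le\delta$. Everything else is a direct application of Lemma~\ref{lemDuality} and end-minimality.
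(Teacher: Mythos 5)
Your proof is correct and follows essentially the same route as the paper: all three parts rest on Lemma~\ref{lemDuality} together with end-minimality to rule out one of its two cases. The minor differences (using the raw lemma plus a Lipschitz computation for part~$(i)$ where the paper instead applies Corollary~\ref{cor:deltaBoundOnEccentricity} to the $(u,x)$-subpath, and deriving part~$(iii)$ from part~$(ii)$ on the subpath $P(v_k,x)$ where the paper again invokes that corollary) are cosmetic reorganizations of the same argument.
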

\begin{proof} $(i)$
Let $(u,...,v)$ be an m-segment on $P(y,x)$ which forms an up-hill.
By definition of an up-hill, eccentricity increases by one along each edge.
Therefore, $e(v) = e(u) + d(u,v)$.
By Corollary~\ref{cor:deltaBoundOnEccentricity} applied to a $(u,x)$-subpath of $P(y,x)$,
and because $e(x)$ is minimal on $P(y,x)$,
we have $e(u) + d(u,v) = e(v) \leq \max\{e(u),e(x)\} + \delta = e(u) + \delta$. 
Thus, $d(u,v) \leq \delta$.

$(ii)$ Let $v$ be an arbitrary vertex from $F(c)$.
Assume $d(x,c) \leq (v|y)_x$. By Lemma~\ref{lemDuality}, $e(c) \leq e(x) - d(x,c) + 2\delta < e(x)$,  a contradiction with $e(c)\ge e(x)$. 
Let now $d(x,c) > (v|y)_x$.
Then, by Lemma~\ref{lemDuality}, $e(c) \leq e(y) + \delta$ and $e(c) \leq e(y) - d(c,y) + 2\delta$.

$(iii)$ By contradiction assume that a vertex $v_i \in P(y,x)$ with $i \in [2\delta + 1, p - 2\delta - 1]$
has $e(v_i) \geq e(v_k)$ for some $k \in [0, i - 2\delta - 1]$.
Then, $d(v_k,v_i) \geq 2\delta + 1$ and $d(v_i,x) \geq 2\delta + 1$, and therefore $d(v_k,x) \geq 4\delta + 2$.
By Corollary~\ref{cor:deltaBoundOnEccentricity} applied to a subpath $P(v_k,x)$ of $P(y,x)$,
vertex $v_i \in P(v_k,x)$ with $d(v_k,v_i) > 2\delta$ and $d(v_i,x) > 2\delta$ satisfies $e(v_i) < \max\{e(v_k),\ e(x)\}$.
As $e(x)$ is minimal on $P(v_k,x)$, $e(v_i) < e(v_k)$, a contradiction.
\end{proof}

Theorem~\ref{th:upHillsAnywhere} in part $(iii)$ greatly generalizes a result from~\cite{Alrasheed_2016} where it was shown that any vertex~$v$ has $loc(v) \leq 2\delta + 1$ or $C(G) \subseteq D(v,4\delta+1)$. In particular, we have the following corollary. 

\begin{corollary}\label{cor:locality}
Let $G$ be a $\delta$-hyperbolic graph and 
$P(v,c)$ be a shortest path from an arbitrary vertex $v$ to an arbitrary central vertex $c\in C(G)$. Then, either the length of $P(v,c)$ is at most $4\delta+1$ or the vertex $u$ of $P(v,c)$ 
at distance $2\delta+1$ from $v$ satisfies $e(u)<e(v)$. 
\end{corollary}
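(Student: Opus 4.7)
The plan is to derive this corollary directly from Theorem~\ref{th:upHillsAnywhere}$(iii)$, after first observing that a shortest path ending at a central vertex is automatically end-minimal in the sense required by the theorem. Since $c \in C(G)$, every vertex $w$ of $P(v,c)$ satisfies $e(w) \geq rad(G) = e(c)$, so $e(c)$ is minimal along the path and $P(v,c)$ qualifies as an end-minimal shortest path with endpoints $y := v$ and $x := c$.

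Write $P(v,c) = (v = v_0, v_1, \ldots, v_p = c)$ with $p = d(v,c)$. If $p \leq 4\delta + 1$ we are in the first alternative and there is nothing to prove, so assume $p > 4\delta+1$, i.e.\ $p \geq 4\delta + 2$. Then the index $i := 2\delta + 1$ lies in the interval $[2\delta + 1,\ p - 2\delta - 1]$, which is precisely the regime covered by Theorem~\ref{th:upHillsAnywhere}$(iii)$. Applying that statement to $v_i = u$, the vertex at distance $2\delta + 1$ from $v$, yields
\[
e(u) \;<\; \min\bigl\{ e(v_k) : k \in [0,\; i - 2\delta - 1]\bigr\} \;=\; \min\{e(v_0)\} \;=\; e(v),
\]
since $i - 2\delta - 1 = 0$ collapses the minimum to the single term $e(v_0) = e(v)$. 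This delivers the second alternative of the corollary.

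There is no real obstacle here: the whole content is packaged into Theorem~\ref{th:upHillsAnywhere}$(iii)$, and the only care needed is verifying (a) that $P(v,c)$ is end-minimal (immediate from $c \in C(G)$) and (b) that the index $i = 2\delta + 1$ indeed falls inside the admissible range $[2\delta+1, p-2\delta-1]$ under the hypothesis $p > 4\delta + 1$. Both checks are mechanical, so the corollary is essentially a direct specialization of the theorem to the case where the endpoint is a center vertex.
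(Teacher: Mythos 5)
Your proof is correct and matches the paper's intent exactly: the paper states this corollary immediately after Theorem~\ref{th:upHillsAnywhere} with no written proof, treating it as the direct specialization of part $(iii)$ with $i = 2\delta+1$ that you carry out. Your two verification steps (end-minimality from $c \in C(G)$, and $2\delta+1 \in [2\delta+1,\ p-2\delta-1]$ when $p \geq 4\delta+2$) are precisely the details the paper leaves implicit.
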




An illustration of several results for \aMinpath{} $P(y,x)$ is shown in Figure~\ref{fig:shapesExample2}. \medskip

\begin{figure}[htb]
  \begin{center}
    \includegraphics[scale=0.7]{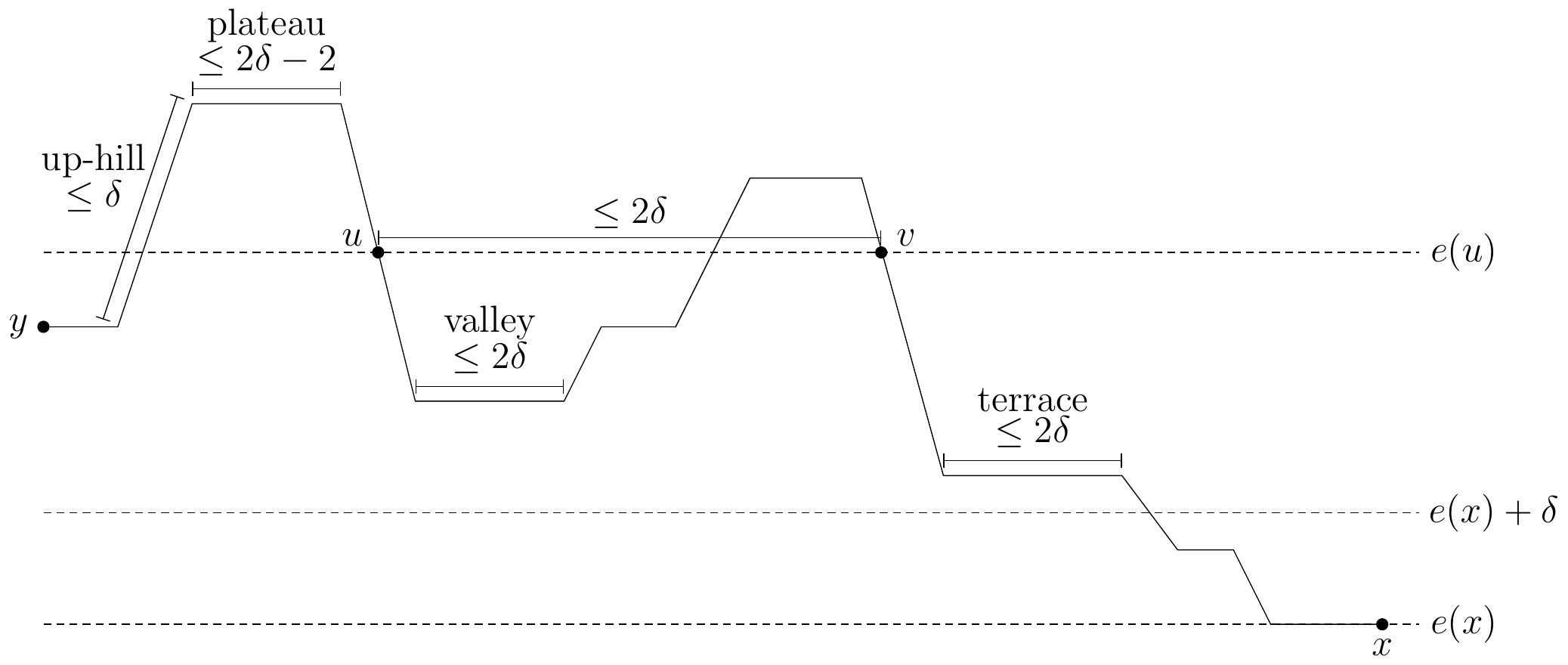}
    \caption{End-minimal shortest path $P(y,x)$ is depicted as it travels through the eccentricity layers of~$G$.
    By Theorem~\ref{th:plateauAny} and Theorem~\ref{th:upHillsAnywhere},  
    any up-hill on $P(y,x)$ has height at most $\delta$
    and any plain (including plateau, valley, or terrace) that is above the layer $C_{e(x)+\delta}(G)$ has width at most $2\delta$.
    Moreover, by Theorem~\ref{th:plateauAny}, two vertices~$u,v$ with the same eccentricity
    have distance at most $2\delta$ provided they are far (at least $2\delta+1$) from both end-vertices $y$ and $x$.
    }
    \label{fig:shapesExample2}
  \end{center}
\end{figure}


We next show that any \minpath{} $P(y,x)$ from $y$ to $x$ has no more than $4\delta+1$ up-edges and horizontal-edges combined.
Moreover, our result implies that, on any strict end-minimal shortest path $P(y,x)$
from an arbitrary vertex~$y$ to a vertex~$x$, the number of vertices with locality more than 1  does not exceed $4\delta$.
We give also two simple conditions which limit this number to $2\delta$.

\begin{lemma}\label{distanceToNearestSmallerEccentricity}

Let $G$ be a $\delta$-hyperbolic graph and $P(y,x)$ be a shortest path from $y$ to $x$. Then, the following holds:
\[
d(x,y) \leq e(y) - e(x) +
\begin{cases}
\max\{0,4\delta - 1\}, &\text{if  $P(y,x)$ is \strictMinp{},} \\
4\delta + 1, &\text{if $P(y,x)$ is \minp{}.}
\end{cases}
\]
Moreover, for a vertex $x'$ of $P(y,x)$, $d(y,x') \leq e(y) - e(x') + 2\delta$ if
$e(x') > e(x) + \delta$ or
$P(y,x)$ is \minp{} and $d(x,x') > 2\delta$.

\end{lemma}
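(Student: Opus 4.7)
The plan is to dispatch the ``Moreover'' clauses first, then deduce the main inequality by placing a well-chosen vertex $c$ on $P(y,x)$ and combining a Moreover-type estimate on $d(y,c)$ with $d(c,x)$.

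For the Moreover clauses, fix $x' \in P(y,x) \subseteq I(x,y)$, pick any $v \in F(x')$, and apply Lemma~\ref{lemDuality}. If $e(x') > e(x) + \delta$, case $(i)$ is ruled out (it would force $e(x') \le e(x) + \delta$), so case $(ii)$ applies and yields $e(x') \le e(y) - d(y,x') + 2\delta$, i.e., $d(y,x') \le e(y) - e(x') + 2\delta$. The second Moreover condition, end-minimality plus $d(x,x') > 2\delta$, is exactly Theorem~\ref{th:upHillsAnywhere}(ii).

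For the end-minimal main bound (which automatically gives $e(y) \ge e(x)$): if $d(y,x) \le 2\delta$, then $d(y,x) \le 4\delta+1 \le e(y)-e(x)+4\delta+1$ is immediate. Otherwise, let $c$ be the vertex of $P(y,x)$ with $d(c,x) = 2\delta+1$. Since $d(c,x) > 2\delta$, the second Moreover clause gives $d(y,c) \le e(y) - e(c) + 2\delta \le e(y) - e(x) + 2\delta$, and adding $d(c,x) = 2\delta+1$ produces $d(y,x) \le e(y) - e(x) + 4\delta + 1$.

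For the strict end-minimal case with $\delta \ge 1/2$, the short-path subcase $d(y,x) \le 2\delta - 1$ is again trivial from $e(y) \ge e(x)$. Otherwise, take $c \in P(y,x)$ with $d(c,x) = 2\delta$. Because $c \ne x$, strict end-minimality forces $e(c) \ge e(x) + 1$. Applying Lemma~\ref{lemDuality} to $c$ and any $v \in F(c)$: case $(i)$ would give $e(c) \le e(x) - 2\delta + 2\delta = e(x)$, contradicting $e(c) \ge e(x)+1$. Hence case $(ii)$ applies, producing $d(y,c) \le e(y) - e(c) + 2\delta \le e(y) - e(x) + 2\delta - 1$, and adding $d(c,x) = 2\delta$ gives $d(y,x) \le e(y) - e(x) + 4\delta - 1$. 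For $\delta = 0$, the same argument applied with $c$ adjacent to $x$ (so $d(c,x) = 1$) yields $d(y,x) \le e(y) - e(x) = e(y) - e(x) + \max\{0, 4\delta - 1\}$; the degenerate case $d(y,x) = 0$ is immediate. The main subtlety is simply that the canonical ``take $c$ at distance $2\delta$'' prescription must be adjusted when $\delta = 0$ and when $d(y,x)$ is too small to admit such a $c$, both handled by the trivial estimate $e(y) \ge e(x)$.
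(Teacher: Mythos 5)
Your proof is correct and takes essentially the same route as the paper's: place a vertex $c$ on $P(y,x)$ at distance $2\delta+1$ (resp.\ $2\delta$, resp.\ $1$ when $\delta=0$) from $x$, use Lemma~\ref{lemDuality} together with (strict) end-minimality to rule out branch $(i)$, and add the resulting bound on $d(y,c)$ to $d(c,x)$. The only differences are cosmetic: you argue directly rather than by contradiction (hence your explicit short-path subcases, which the paper's contradiction hypothesis makes unnecessary), and you derive the main bounds from the ``Moreover'' clause (equivalently Theorem~\ref{th:upHillsAnywhere}$(ii)$, which is proved earlier and independently, so no circularity) instead of proving them from scratch.
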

\begin{proof}
Let $e(x) \leq e(c)$ for all $c \in P(y,x)$.
By contradiction, assume $d(x,y) \geq e(y) - e(x) + 4\delta + 2$.
As $e(y) - e(x) \geq 0$, $d(x,y) \geq 4\delta + 2$.
Pick a vertex~$c \in P(x,y)$ at distance $2\delta+1$ from $x$.
Let $v$ be an arbitrary vertex from $F(c)$.
If $d(x,c) \leq (v|y)_x$ then, by Lemma~\ref{lemDuality}, $e(c) \leq e(x) - d(x,c) + 2\delta = e(x) - 1$, a contradiction with $e(c) \geq e(x)$.
Hence, $d(x,c) > (v|y)_x$ and, by Lemma~\ref{lemDuality}, $e(c) \leq e(y) - d(y,c) + 2\delta$.
Therefore, $d(y,c) \leq e(y) - e(c) + 2\delta \leq e(y) - e(x) + 2\delta$.
Since $c \in I(x,y)$, we obtain $d(x,y) = d(x,c) + d(c,y) \leq 2\delta + 1 + e(y) - e(x) + 2\delta = e(y) - e(x) + 4\delta + 1$, a contradiction.

In the remaining case, when $P(y,x)$ is \strictMinp{},  
we apply similar arguments as above but with vertex~$c$ at distance $\ell$ from~$x$, where $\ell=1$ when $\delta =0$ and $\ell=2\delta$ when $\delta \ge 1/2$. We have $e(y)>e(x)$ and $e(c)>e(x)$. 
By contradiction, assume $d(x,y) \geq e(y) - e(x) + 4\delta$.
That is, $d(x,y) \geq 4\delta + 1$.
If $d(x,c) \leq (v|y)_x$ then, by Lemma~\ref{lemDuality}, $e(c) \leq e(x) - d(x,c) + 2\delta \le e(x)$, a contradiction with $e(c) > e(x)$.
If $d(x,c) > (v|y)_x$ then, by Lemma~\ref{lemDuality}, $e(c) \leq e(y) - d(y,c) + 2\delta$.
Therefore, $d(y,c) \leq e(y) - e(c) + 2\delta \leq e(y) - e(x)-1 + 2\delta$. Hence, $d(x,y) = d(x,c) + d(c,y) \leq d(x,c) + e(y) - e(x) + 2\delta -1$. That is, $d(x,y)  \le e(y) - e(x) + 4\delta-1$ when $\delta>0$ (contradicting our assumption), and $d(x,y)  \le e(y) - e(x) = e(y) - e(x) +4\delta$ when $\delta=0$.

Let now $x'$ be a vertex of $P(y,x)$ with $e(x') > e(x) + \delta$ or
$d(x,x') > 2\delta$ and $P(y,x)$ is \minp{}.
By Lemma~\ref{lemDuality}, either $e(x') \leq e(x) + \delta$ and $e(x') \leq e(x) - d(x,x') + 2\delta$ holds or $e(x') \leq e(y) - d(y,x') + 2\delta$ holds.
If the former case is true, necessarily,  $P(y,x)$ is \minp{},  $d(x,x') > 2\delta$ and $e(x') \leq e(x) - d(x,x') + 2\delta < e(x)$, contradicting with $P(y,x)$ being \minp{}.
In the latter case, $d(y,x') \leq e(y) - e(x') + 2\delta$.
\end{proof}

\begin{theorem}\label{thm:upAndHorEdges}
If $G$ is $\delta$-hyperbolic, then for any shortest path $P(y,x)$
from $y$ to $x$  
the following holds:
\[
2\mathcal{U}(P(y,x)) + \mathcal{H}(P(y,x)) \leq
\begin{cases}
\max\{0,4\delta - 1\}, &\text{if  $P(y,x)$ is \strictMinp{},} \\
4\delta + 1, &\text{if $P(y,x)$ is \minp{}.}
\end{cases}
\]
Moreover, for a vertex $x'$ of $P(y,x)$, $2\mathcal{U}(P(y,x')) + \mathcal{H}(P(y,x')) \leq 2\delta$
if
$e(x') > e(x) + \delta$ or
$P(y,x)$ is \minp{} and $d(x,x') > 2\delta$.
\end{theorem}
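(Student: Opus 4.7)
The plan is to combine the identity of Lemma~\ref{limitedAnomaliesToMonotonicity} with the distance bounds from Lemma~\ref{distanceToNearestSmallerEccentricity}, which is really all there is to do: the theorem is essentially a repackaging of Lemma~\ref{distanceToNearestSmallerEccentricity} in the language of up-edges and horizontal-edges.

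First I would recall that Lemma~\ref{limitedAnomaliesToMonotonicity} holds for arbitrary graphs and gives the identity
\[
2\mathcal{U}(P(y,x)) + \mathcal{H}(P(y,x)) \;=\; d(y,x) - (e(y) - e(x))
\]
for every shortest path $P(y,x)$. So it suffices to bound the right-hand side in terms of $\delta$ whenever the appropriate end-minimality hypothesis is satisfied. This is exactly what Lemma~\ref{distanceToNearestSmallerEccentricity} delivers: when $P(y,x)$ is \minp{}, we have $d(x,y) - (e(y) - e(x)) \leq 4\delta + 1$, and when $P(y,x)$ is \strictMinp{}, we have $d(x,y) - (e(y) - e(x)) \leq \max\{0, 4\delta - 1\}$. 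Substituting these into the identity above immediately yields the two cases of the displayed bound.

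For the \emph{moreover} statement, I would note that any prefix $P(y,x')$ of a shortest path is itself a shortest path between its endpoints, so Lemma~\ref{limitedAnomaliesToMonotonicity} applies verbatim to $P(y,x')$, giving $2\mathcal{U}(P(y,x')) + \mathcal{H}(P(y,x')) = d(y,x') - (e(y) - e(x'))$. Under either hypothesis stated in the theorem --- $e(x') > e(x) + \delta$, or $P(y,x)$ end-minimal together with $d(x,x') > 2\delta$ --- the second assertion of Lemma~\ref{distanceToNearestSmallerEccentricity} tells us $d(y,x') \leq e(y) - e(x') + 2\delta$, and substituting again gives the claimed bound $2\delta$.

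There really is no main obstacle: the theorem is an immediate corollary of the two preceding lemmas, and the whole proof is a two-line substitution in each of the three cases. The only thing to verify carefully is that $P(y,x')$ inherits the shortest-path status needed for Lemma~\ref{limitedAnomaliesToMonotonicity}, which is obvious since $x' \in I(y,x)$.
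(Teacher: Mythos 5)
Your proposal is correct and is precisely the paper's own argument: the paper's proof consists of the single sentence that the result follows directly from Lemma~\ref{limitedAnomaliesToMonotonicity} and Lemma~\ref{distanceToNearestSmallerEccentricity}, which is exactly the substitution you carry out. The only detail you add beyond the paper --- checking that the prefix $P(y,x')$ is itself a shortest path so the identity applies to it --- is a worthwhile sanity check but not a deviation.
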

\begin{proof}
The proof follows directly from Lemma~\ref{limitedAnomaliesToMonotonicity} and Lemma~\ref{distanceToNearestSmallerEccentricity}.
\end{proof}

\begin{corollary}\label{cor:numberoflocality}
Let $G$ be a $\delta$-hyperbolic graph. Then, on any 
strict end-minimal shortest path $P(y,x)$ from a vertex~$y$ to a vertex~$x$,  the number of vertices with locality more than 1  does not exceed $4\delta$. If additionally $x\in C(G)$, then the number of vertices with locality more than 1  does not exceed $\max\{0,4\delta-1\}$. 
\end{corollary}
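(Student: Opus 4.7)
The plan is to translate the condition ``$loc(v)>1$'' on a vertex into a constraint on the type of edge incident to it along $P(y,x)$, and then invoke Theorem~\ref{thm:upAndHorEdges}. Write $P(y,x)=(y=v_0,v_1,\dots,v_p=x)$.

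First I would observe that for any internal vertex $v_i$ of $P(y,x)$ with $i<p$ and $loc(v_i)>1$, the neighbor $v_{i+1}$ cannot have strictly smaller eccentricity than $v_i$, so $(v_i,v_{i+1})$ is either an up-edge or a horizontal-edge. This yields an injection from the set of such $v_i$ into the up- and horizontal-edges of $P(y,x)$, and hence their number is at most $\mathcal{U}(P(y,x))+\mathcal{H}(P(y,x))$. Since $P(y,x)$ is strict end-minimal, Theorem~\ref{thm:upAndHorEdges} gives $2\mathcal{U}(P(y,x))+\mathcal{H}(P(y,x))\le \max\{0,4\delta-1\}$, and in particular $\mathcal{U}(P(y,x))+\mathcal{H}(P(y,x))\le \max\{0,4\delta-1\}$.

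Next I would handle the endpoint $v_p=x$ separately. If $x\in C(G)$, then $loc(x)=0$ by the definition of locality, so $x$ does not contribute to the count, and the total is bounded by $\max\{0,4\delta-1\}$; this establishes the second assertion. Otherwise $x$ may itself have locality more than $1$, contributing at most one additional vertex, for a total of at most $\max\{0,4\delta-1\}+1$.

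The main obstacle is obtaining the clean bound $4\delta$ for the first assertion uniformly in $\delta$. Whenever $\delta\ge 1/2$ (i.e., $\delta$ is a positive half-integer), the identity $\max\{0,4\delta-1\}+1=4\delta$ closes the argument immediately. The edge case $\delta=0$ must be treated by a short extra remark: a $0$-hyperbolic graph is a block graph, where the eccentricity function is well known to be unimodal, so every vertex has locality at most $1$ and the count is $0=4\delta$. This disposes of the last case and completes the plan.
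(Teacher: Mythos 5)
Your proposal is correct and follows essentially the same route as the paper: it charges each vertex of locality greater than $1$ (other than $x$) to the up- or horizontal-edge it begins, invokes Theorem~\ref{thm:upAndHorEdges} for the bound $\max\{0,4\delta-1\}$, accounts for $x$ separately (with $loc(x)=0$ when $x\in C(G)$), and disposes of $\delta=0$ via the block-graph/unimodality argument, exactly as the paper does.
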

\begin{proof} 
As we go from $y$ to $x$ along $P(y,x)$, every vertex $u$ of $P(y,x)$, except $x$, is the beginning of an edge $(u,v)$ on $P(y,x)$. If an ordered pair  $(u,v)$ forms a down-edge, then the vertex $u$ has locality 1 in $G$. 
Only when an ordered pair $(u,v)$ forms an up-edge or a horizontal-edge on $P(y,x)$, then the vertex $u$ may have locality more than 1 in $G$. 
If $\delta>0$, then any strict end-minimal shortest path $P(y,x)$ has no more than $4\delta-1$ up-edges and horizontal-edges combined. Hence, together with $x$, there are at most $4\delta$  vertices on $P(y,x)$ with locality more than 1. If $\delta=0$, then $G$ is a block (and hence, a Helly) graph and each of its non-central vertices has locality 1~\cite{FDraganPhD}. Recall that, by definition, the locality of a central vertex is 0.  
%
\end{proof}

\begin{corollary}\label{cor:numberoflocality+}
Let $G$ be a $\delta$-hyperbolic graph. Then, on any 
shortest path $P(y,x)$ between a vertex~$y$ and a vertex~$x$,  the number of vertices with locality more than 1  does not exceed $8\delta+1$. If $P(y,x)$ is end-minimal, then the number of vertices with locality more than 1  does not exceed $4\delta+2$. 
\end{corollary}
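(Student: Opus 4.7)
The plan is to handle the two claims of the corollary in order: I will first prove the end-minimal bound of $4\delta+2$, and then reuse it together with Corollary~\ref{cor:numberoflocality} to derive the general $8\delta+1$ bound via a decomposition of $P(y,x)$.

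For the end-minimal claim, I will adapt the counting argument used in Corollary~\ref{cor:numberoflocality}. Write $P(y,x) = (v_0 = y, v_1, \dots, v_k = x)$. Any vertex $v_i$ with $loc(v_i) > 1$ must have all of its path-neighbors (when they exist) of eccentricity $\ge e(v_i)$, for otherwise a path-neighbor would witness locality $1$. In particular, every locality-$>1$ vertex different from $x$ begins, walking from $y$ toward $x$, either an up-edge or a horizontal-edge on $P(y,x)$. Thus the number of locality-$>1$ vertices on $P(y,x)$ is at most $\mathcal{U}(P(y,x)) + \mathcal{H}(P(y,x)) + 1$, where the $+1$ accounts for the possibility that $x$ itself has locality $>1$. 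Since $P(y,x)$ is end-minimal, Theorem~\ref{thm:upAndHorEdges} gives $\mathcal{U} + \mathcal{H} \le 2\mathcal{U}+\mathcal{H} \le 4\delta+1$, yielding the bound $4\delta+2$.

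For a general shortest path, I would split $P(y,x)$ at the vertex $v \in P(y,x)$ of minimum eccentricity that is closest to $x$, exactly as in the decomposition described in the paragraph preceding Theorem~\ref{th:upHillsAnywhere}. Then $P(y,v)$ is end-minimal while $P(x,v)$ is strict end-minimal. The end-minimal bound just proved applied to $P(y,v)$ gives at most $4\delta+2$ locality-$>1$ vertices, and Corollary~\ref{cor:numberoflocality} applied to $P(x,v)$ gives at most $4\delta$. Since the two subpaths intersect exactly in $v$, the locality-$>1$ count on $P(y,x)$ is at most $(4\delta+2) + 4\delta$ minus the overlap at $v$. When $v$ itself has locality $>1$, it is counted in both pieces, so the overlap is $1$ and the total is $8\delta+1$. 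When $v$ has locality $\le 1$, it contributes to neither count, but then the ``$+1$ for the endpoint of minimum eccentricity'' in each individual bound is unused: the bounds tighten to $\mathcal{U}+\mathcal{H} \le 4\delta+1$ for $P(y,v)$ and to $\mathcal{U}+\mathcal{H} \le \max\{0,4\delta-1\}$ for $P(x,v)$, whose sum is again at most $8\delta+1$.

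The main obstacle I foresee is the endpoint bookkeeping at the split vertex $v$: both the end-minimal and strict end-minimal bounds implicitly carry a ``$+1$ for the endpoint of minimum eccentricity,'' and when two subpaths are glued at $v$ this role is played by $v$ on both sides. The case analysis on whether $v$ itself is a locality-$>1$ vertex is what turns the naive $8\delta+2$ into the claimed $8\delta+1$.
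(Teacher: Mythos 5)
Your proposal is correct and follows essentially the same route as the paper: the end-minimal bound comes from Theorem~\ref{thm:upAndHorEdges} via the edge-counting argument of Corollary~\ref{cor:numberoflocality} plus the endpoint $x$, and the general bound comes from splitting at the minimum-eccentricity vertex $v$ closest to $x$ into an end-minimal and a strict end-minimal piece. The only cosmetic difference is at the split vertex: the paper simply charges $v$ to the strict end-minimal side ($4\delta$ for $P(x,v)$ plus $4\delta+1$ for $P(y,v)\setminus\{v\}$), which avoids your case analysis on whether $loc(v)>1$ while reaching the same $8\delta+1$.
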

\begin{proof} Let $P(y,x)$ be an end-minimal shortest path from $y$ to $x$. Using Theorem~\ref{thm:upAndHorEdges} and same arguments as in the proof of Corollary~\ref{cor:numberoflocality}, we get that at most $4\delta+1$ vertices of $P(y,x)\setminus\{x\}$ have locality more than 1. Hence, together with $x$, there are at most $4\delta+2$  vertices on $P(y,x)$ with locality more than 1.

Let now $P(y,x)$ be an arbitrary shortest path between $y$ and  $x$. Let also $v$ be a vertex from $P(y,x)$ with minimal  eccentricity closest to $x$. Then, subpath $P(x,v)$ of $P(x,y)$ is strict end-minimal and subpath $P(y,v)$ of $P(y,x)$ is end-minimal. There are at most $4\delta$ vertices with locality more than 1 in $P(x,v)$ and there are at most $4\delta+1$ vertices with locality more than 1 in $P(y,v)\setminus\{v\}$.  Thus, $P(y,x)$ has at most $8\delta+1$ such vertices. 
\end{proof}

These corollaries can be refined in the following way.
\begin{corollary}
Let $P(y,x)$ be a shortest path from any vertex~$y$ to any vertex~$x$.
Then, a (prefix)  subpath $P(y,x')$ of $P(y,x)$ has at most $k$ vertices with locality more than 1, where
\[
k =
\begin{cases}
2\delta, &\text{if $e(x') > e(x) + \delta$,} \\
2\delta+1, &\text{if $P(y,x)$ is \minp{} and $d(x,x') > 2\delta$.}
\end{cases}
\]
\end{corollary}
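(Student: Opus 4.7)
The plan is to bound the number of vertices of locality greater than one on the prefix $P(y,x')$ by counting up- and horizontal-edges on it, and then to invoke the ``moreover'' part of Theorem~\ref{thm:upAndHorEdges}. The starting observation is that any vertex $v$ of $P(y,x')$ with $v\neq x'$ and $loc(v)>1$ must be the tail of an up-edge or horizontal-edge of $P(y,x')$, since the successor of $v$ on the path is a neighbor of $v$ whose eccentricity cannot be smaller than $e(v)$. Therefore
\[
\#\{v\in P(y,x')\setminus\{x'\}:loc(v)>1\}\leq \mathcal{U}(P(y,x'))+\mathcal{H}(P(y,x')),
\]
and the total count on $P(y,x')$ is at most $\mathcal{U}(P(y,x'))+\mathcal{H}(P(y,x'))+1$, the extra unit accounting for a possible contribution of the endpoint $x'$ itself.

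In either hypothesis of the corollary, the ``moreover'' clause of Theorem~\ref{thm:upAndHorEdges} yields $2\mathcal{U}(P(y,x'))+\mathcal{H}(P(y,x'))\leq 2\delta$, and in particular $\mathcal{U}(P(y,x'))+\mathcal{H}(P(y,x'))\leq 2\delta$. This immediately proves the $2\delta+1$ bound of Case~2 ($P(y,x)$ end-minimal and $d(x,x')>2\delta$).

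For Case~1, where $e(x')>e(x)+\delta$, I would save the extra unit as follows. Note first that $x'\neq x$, for otherwise $e(x')=e(x)$ would contradict $e(x')>e(x)+\delta$; hence $x'$ admits a successor $v_{i+1}$ on $P(y,x)$. If $loc(x')\leq 1$ then $x'$ is not counted and the total is at most $\mathcal{U}(P(y,x'))+\mathcal{H}(P(y,x'))\leq 2\delta$. Otherwise $loc(x')>1$, and then the edge $(x',v_{i+1})$ is up or horizontal, so $e(v_{i+1})\geq e(x')>e(x)+\delta$. The hypothesis of the ``moreover'' part of Theorem~\ref{thm:upAndHorEdges} therefore still holds with $v_{i+1}$ in place of $x'$, giving $2\mathcal{U}(P(y,v_{i+1}))+\mathcal{H}(P(y,v_{i+1}))\leq 2\delta$. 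Since appending the edge $(x',v_{i+1})$ to $P(y,x')$ increases $2\mathcal{U}+\mathcal{H}$ by at least one, we conclude $2\mathcal{U}(P(y,x'))+\mathcal{H}(P(y,x'))\leq 2\delta-1$, hence $\mathcal{U}(P(y,x'))+\mathcal{H}(P(y,x'))\leq 2\delta-1$, and including $x'$ the total is at most $2\delta$.

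The only mildly subtle step, and the main obstacle, is the observation in Case~1 that the eccentricity condition $e(x')>e(x)+\delta$ automatically propagates from $x'$ to $v_{i+1}$ precisely when $loc(x')>1$ forces the next edge not to descend. This is what lets us apply Theorem~\ref{thm:upAndHorEdges} one edge further and absorb the endpoint's potential contribution into the slack already hidden in the inequality $2\mathcal{U}+\mathcal{H}\leq 2\delta$. The remaining steps are routine bookkeeping.
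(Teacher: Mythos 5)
Your proof is correct, and its skeleton coincides with the paper's: both bound the count of non-endpoint vertices of locality greater than $1$ by $\mathcal{U}(P(y,x'))+\mathcal{H}(P(y,x'))\leq 2\mathcal{U}(P(y,x'))+\mathcal{H}(P(y,x'))\leq 2\delta$ via the ``moreover'' clause of Theorem~\ref{thm:upAndHorEdges} and the edge-classification argument of Corollary~\ref{cor:numberoflocality}. The only place where you genuinely diverge is the saving of one unit in the case $e(x')>e(x)+\delta$. The paper argues by a reduction: it suffices to treat the $x'$ with $e(x')>e(x)+\delta$ that is furthest from $y$ (counts for shorter prefixes are dominated), and for that $x'$ the successor on $P(y,x)$ has eccentricity at most $e(x)+\delta<e(x')$, so $loc(x')=1$ and the endpoint never contributes. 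You instead keep $x'$ arbitrary and observe that if $loc(x')>1$, the successor $v_{i+1}$ inherits the hypothesis $e(v_{i+1})\geq e(x')>e(x)+\delta$ and the edge $(x',v_{i+1})$ is non-descending, so applying Theorem~\ref{thm:upAndHorEdges} to the one-edge-longer prefix and subtracting that edge's contribution yields $2\mathcal{U}(P(y,x'))+\mathcal{H}(P(y,x'))\leq 2\delta-1$, which absorbs the endpoint. Both mechanisms are sound; yours avoids the implicit monotonicity step behind the paper's ``without loss of generality,'' at the cost of a short case split, and is if anything slightly more self-contained.
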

\begin{proof}
As, by
Theorem~\ref{thm:upAndHorEdges},
$2\mathcal{U}(P(y,x')) + \mathcal{H}(P(y,x')) \leq 2\delta$,
we can use same arguments as in the proof of Corollary~\ref{cor:numberoflocality} to show that
in $P(y,x') \setminus \{x'\}$ there are at most $2\delta$ vertices with locality more than 1.
It remains only to show that the entire $P(y,x')$ has at most  $2\delta$ vertices with locality more than 1  
when $e(x') > e(x) + \delta$.
Without loss of generality, we can pick a vertex $x' \in P(y,x)$, with $e(x') > e(x) + \delta$, that is furthest from~$y$.
By the choice of~$x'$, the neighbor~$x''$ of~$x'$ on $P(y,x)$ that  is closer to~$x$ satisfies $e(x'') = e(x) + \delta = e(x') - 1$.
Hence, $loc(x')=1$.
\end{proof}

Thus, on any shortest path from an arbitrary vertex to a closest central vertex, there are at most $\max\{0,4\delta-1\}$ vertices with locality more than 1, and only at most $2\delta$ of them are  located outside $C_{\le \delta}(G)$ and only at most $2\delta+1$ of them are at distance $>2\delta$ from $C(G)$.  

\medskip

In certain graph classes up-hills and plains 
are restricted in their location along a shortest path $P(y,x)$ connecting a vertex $y$ to a closest central vertex $x \in C(G)$.
Helly graphs contain no such non-descending shapes, whereas chordal graphs and distance-hereditary graphs have no up-hills but plains of width at most 1 may   occur~\cite{FDraganPhD,Dragan2017EccentricityAT,ourManuscriptDHG}. Furthermore, for every vertex $y$, there is a shortest path $P(y,x)$  from $y$ to any closest central vertex $x$ such that it has at most one plain of width 1, and if a plain exists then it is located in layer $C_1(G)$. 
We observe that in hyperbolic graphs even up-hills can occur anywhere on any shortest path - it can be close or far from a central vertex or endpoints of the path.

\begin{figure}[!htb]
  \begin{center}
    \includegraphics[scale=0.7]{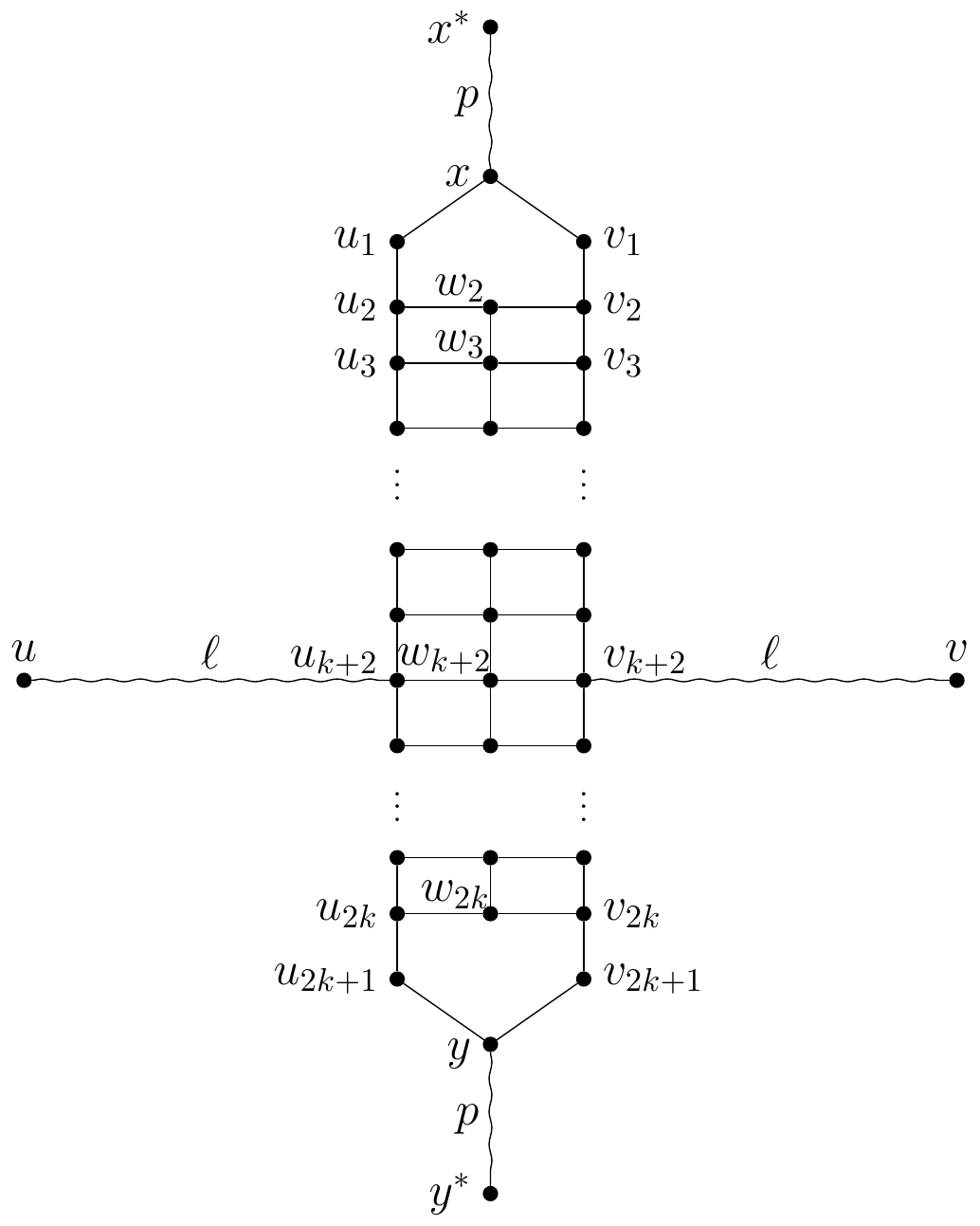}
    \caption{An illustration that in a 2-hyperbolic graph up-hills on each shortest path to the center or to a furthest vertex can occur very far from the center and from the endpoints of the path.}
    \label{fig:hillsAreAnywhere}
  \end{center}
\end{figure}

Consider a 2-hyperbolic graph $G=(V,E)$ depicted in Figure~\ref{fig:hillsAreAnywhere}. It has two paths
 $(x,u_1,u_2,...,u_{2k+1},y)$ and $(x,v_1,v_2,...,v_{2k+1},y)$ of length $2k+2$,
 a path $(w_2,w_3,...,w_{2k})$ of length $2k-2$,
 and edges $(u_i,w_i) \in E$ and $(w_i,v_i) \in E$ for each $i \in [2, 2k]$.
It has also two paths each of length $\ell$ connecting vertex $u_{k+2}$ to vertex $u$ as well as vertex $v_{k+2}$ to vertex $v$,
and two paths each of length $p > 0$ connecting $x$ to $x^*$ as well as $y$ to $y^*$.
If $\ell = k + p$, then $diam(G) = 2\ell + 2 = d(u,v) = d(x^*,y^*)$, $rad(G)=\ell+2$, and $C(G)=\{u_{k+2}, w_{k+1}, v_{k+2}\}$.
Observe that $e(x) = d(x,u) = \ell + k+2$ whereas $e(u_1) = d(u_1, v) = \ell+k+3$ and $e(v_1) = d(v_1,u) = \ell + k + 3$.
Any shortest $(x^*, z)$-path where $z \in C(G)$ or $z \in F(x^*)$ contains either the up-hill $(x,u_1)$ or the up-hill $(x,v_1)$.
However, both up-hills are arbitrarily far from the center $C(G)$ and far from any furthest vertex in $F(x^*)$.
Both up-hills also occur arbitrarily far from the starting vertex $x^*$ of the path.
Up-hills also occur on all shortest paths between the diametral pair $(x^*,y^*)$.

\section{Bounds on the eccentricity of a vertex}\label{section:eccentricityBounds}
In this section, we show that the auxiliary lemmas stated earlier yield several known from~\cite{Dragan2018RevisitingRD,Chepoi_2008} results on finding a vertex with small or large eccentricity, as well as intermediate results regarding
the relationship between diameter and radius.
We obtain also new efficient algorithms for approximating all vertex eccentricities in $\delta$-hyperbolic graphs and compare them with known results on graphs with $\tau$-thin triangles~\cite{Chepoi2018FastAO}. We present the following algorithms for approximating all eccentricities:
a $O(\delta|E|)$ time left-sided additive $2\delta$-approximation,
a $O(\delta|E|)$ time right-sided additive $(4\delta+1)$-approximation,
and a $O(|E|)$ time right-sided additive $6\delta$-approximation.

But first, we establish some lower and upper bounds on the  eccentricity of any vertex based on its distance to either $C(G)$ or $C_{\leq 2\delta}(G)$, and vice versa.

\subsection{Relationship between eccentricity of a vertex and its distance to $C(G)$ or $C_{\leq 2\delta}(G)$}
In this subsection, we show that the eccentricity of a vertex is closely related to its distance to both $C(G)$ and $C_{\leq 2\delta}(G)$,
analogous up to $O(\delta)$ to that of Helly graphs.
Recall that an interval slice $S_k(x,y)$ is the set of vertices $\{v \in I(x,y) : d(v,x) = k\}$.

We will need the following lemma which is a consequence of Lemma~\ref{lemDuality}. 
\begin{lemma}\label{cor:radiusBackFromBeam}
Let $G$ be a $\delta$-hyperbolic graph and let $x \in V$, $y \in F(x)$.
Any vertex $c \in S_{rad(G) + k}(y,x)$, $0\le k\le d(x,y)-rad(G)$, has $e(c) \leq rad(G) + 2\delta + k$.
In particular, $c \in S_{rad(G)}(y,x)$ has $e(c) \leq rad(G) + 2\delta$~\cite{Dragan2018RevisitingRD,Chepoi_2008}.
\end{lemma}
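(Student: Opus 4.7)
The plan is to apply Lemma~\ref{lemDuality} to the quadruple $x, y, c, v$, where $v \in F(c)$ realizes $e(c) = d(c,v)$, and exploit two facts: that $c$ lies on a shortest $(x,y)$-path, and that $y \in F(x)$.

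First I would observe that since $c \in I(x,y)$ with $d(y,c) = rad(G) + k$, we have $d(x,c) = d(x,y) - rad(G) - k$. Lemma~\ref{lemDuality} splits into two cases depending on which side of $(v|y)_x$ the quantity $d(x,c)$ lies. In case $(i)$, where $d(x,c) \leq (v|y)_x$, Lemma~\ref{lemDuality}$(i)$ directly gives $e(c) \leq e(x) - d(x,c) + 2\delta$. Since $y \in F(x)$, $e(x) = d(x,y)$, so the right-hand side collapses to $d(x,y) - (d(x,y) - rad(G) - k) + 2\delta = rad(G) + k + 2\delta$, which is exactly the desired bound.

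The second case, $d(y,c) \leq (v|x)_y$, is the main obstacle. Here Lemma~\ref{lemDuality}$(ii)$ yields $e(c) \leq e(y) - d(y,c) + 2\delta$, but we have no direct control over $e(y)$; the triangle inequality $e(y) \leq d(y,c) + e(c)$ is circular. The key observation is the classical inequality $e(y) \leq diam(G) \leq 2\,rad(G)$, which follows from the triangle inequality through any central vertex. Substituting gives $e(c) \leq 2\,rad(G) - (rad(G) + k) + 2\delta = rad(G) - k + 2\delta$, and since $k \geq 0$ this is at most $rad(G) + k + 2\delta$. Combining the two cases establishes the lemma, and the ``in particular'' clause is the specialization $k = 0$.
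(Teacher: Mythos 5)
Your proof is correct and follows essentially the same route as the paper's: both apply Lemma~\ref{lemDuality} to $x$, $y$, $c$ and $v\in F(c)$, use $y\in F(x)$ (so $d(x,v)\le e(x)=d(x,y)$) in the first case, and use $d(y,v)\le diam(G)\le 2\,rad(G)$ in the second case to get the even stronger bound $rad(G)-k+2\delta$. No gaps.
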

\begin{proof}
Let $v$ be any vertex from $F(c)$.
If $d(x,c) \leq (v|y)_x$ then, by Lemma~\ref{lemDuality}
and the fact that $d(v,x) \leq d(x,y)$, we have
 $e(c) = d(c,v) \leq d(x,v) - d(x,c) + 2\delta \leq d(x,y) - (d(x,y) - d(y,c)) + 2\delta = rad(G) + 2\delta + k$.
On the other hand, if $d(x,c) \geq (v|y)_x$ then, by Lemma~\ref{lemDuality}
and the fact that $2rad(G) \geq d(v,y)$, we have
 $e(c)= d(c,v) \leq d(y,v) - d(y,c) + 2\delta \leq 2rad(G) - rad(G) - k + 2\delta = rad(G) + 2\delta - k$.
\end{proof}

\begin{theorem}\label{thm:approxEccToC2Delta}
Let $G$ be a $\delta$-hyperbolic graph.
Any vertex $x$ of $G$ satisfies the following inequalities: $$d(x, C_{\leq 2\delta}(G)) + rad(G) + 2\delta \geq e(x) \geq d(x,C_{\leq 2\delta}(G)) + rad(G).$$
$$d(x,C(G)) + rad(G) - 4\delta\leq e(x) \leq d(x, C(G)) + rad(G).$$
\end{theorem}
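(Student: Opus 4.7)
My plan is to split the statement into its four constituent inequalities; each one reduces to a short application of a tool already built up in the earlier sections, so the proof is quite compact.

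For the two upper bounds, I would just use the triangle inequality together with a closest witness in the relevant set. Let $c'\in C_{\leq 2\delta}(G)$ (respectively $c^*\in C(G)$) be a vertex of the set closest to $x$. Then for any $v$,
\[
d(x,v)\le d(x,c')+d(c',v)\le d(x,c')+e(c'),
\]
and taking the maximum over $v$ yields $e(x)\le d(x,C_{\leq 2\delta}(G))+rad(G)+2\delta$, using $e(c')\le rad(G)+2\delta$ by the definition of $C_{\leq 2\delta}(G)$. The same argument with $c^*$ and $e(c^*)=rad(G)$ gives $e(x)\le d(x,C(G))+rad(G)$.

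For the lower bound $e(x)\ge d(x,C_{\leq 2\delta}(G))+rad(G)$, I would fix $y\in F(x)$, so that $d(x,y)=e(x)\ge rad(G)$, and then pick any vertex $c$ in the slice $S_{rad(G)}(y,x)$ on a shortest $(y,x)$-path. By construction $d(x,c)=e(x)-rad(G)$, and Lemma~\ref{cor:radiusBackFromBeam} (with $k=0$) yields $e(c)\le rad(G)+2\delta$, so $c\in C_{\leq 2\delta}(G)$. This witness certifies $d(x,C_{\leq 2\delta}(G))\le e(x)-rad(G)$, as desired.

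For the lower bound $e(x)\ge d(x,C(G))+rad(G)-4\delta$, I would pick a closest central vertex $c^*\in C(G)$ to $x$ and consider any shortest path $P(x,c^*)$. The key observation, and arguably the only non-mechanical step, is that this path is \emph{strict end-minimal}: every intermediate vertex $v$ is strictly closer to $x$ than $c^*$, so by the choice of $c^*$ it cannot lie in $C(G)$, and therefore $e(v)>rad(G)=e(c^*)$. Lemma~\ref{distanceToNearestSmallerEccentricity} then applies in its sharper strict-end-minimal form and gives $d(x,c^*)\le e(x)-rad(G)+\max\{0,4\delta-1\}\le e(x)-rad(G)+4\delta$, which rearranges into the claimed inequality. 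Nothing genuinely new is required beyond this routine verification of hypotheses.
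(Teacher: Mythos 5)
Your proposal is correct and follows essentially the same route as the paper: triangle inequality with a closest witness for both upper bounds, Lemma~\ref{cor:radiusBackFromBeam} applied to the slice $S_{rad(G)}(y,x)$ for the $C_{\leq 2\delta}(G)$ lower bound, and Lemma~\ref{distanceToNearestSmallerEccentricity} on the (strict end-minimal) shortest path to a closest central vertex for the $C(G)$ lower bound. Your explicit verification that the path to a closest central vertex is strict end-minimal is a point the paper states separately rather than inside this proof, but the argument is the same.
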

\begin{proof}
Let $y \in F(x)$ be a furthest vertex from $x$, $c$ ($c'$) be a vertex closest to $x$ in $C(G)$ ($C_{\leq 2\delta}(G)$, respectively).
On one hand, by the triangle inequality, $e(x) = d(x,y) \leq d(x,c) + d(c,y)\le d(x,c) + e(c)=d(x,C(G)) + rad(G)$ and $e(x) = d(x,y) \leq d(x,c') + d(c',y) \leq d(x,c') + e(c')=d(v,C_{\leq 2\delta}(G)) + rad(G)+2\delta$. 
On the other hand, by Lemma~\ref{distanceToNearestSmallerEccentricity}, 
$d(x,c)\le e(x)-e(c)+4\delta$. 
Thus, $e(x)\ge d(x,c)+e(c)-4\delta=d(x,C(G)) + rad(G) - 4\delta$. Furthermore, by Lemma~\ref{cor:radiusBackFromBeam}, any vertex $c^*\in S_{rad(G)}(y,x)$ satisfies $e(c^*)\le rad(G)+2\delta$. Hence, $e(x)= d(x,c^*)+d(c^*,y)\ge d(x,C_{\leq 2\delta}(G)) + rad(G)$. 
\end{proof}

It is known~\cite{Chepoi2018FastAO} that if~$G$ is a $\tau$-thin graph, then for any vertex~$x \in V$, 
$d(x,C(G)) + rad(G) - 4\tau - 2 \leq e(x)$.
Applying the inequality $\tau \leq 4\delta$ from Proposition~\ref{prop:thinVsHyperbolicity} yields
$d(x,C(G)) + rad(G) - 16\delta - 2 \leq e(x)$.
Working directly with~$\delta$, in Theorem~\ref{thm:approxEccToC2Delta},  we obtained a significantly better bound with~$\delta$,
which, as $\delta \leq \tau$, also
improves the bound known with~$\tau$.

\begin{corollary}\label{cor:approxEccToCWithThinness}
Let $G$ be a $\tau$-thin graph.
Any vertex~$x$ satisfies the following inequality:
$$d(x,C(G)) + rad(G) - 4\tau \leq e(x) \leq d(x, C(G)) + rad(G).$$
\end{corollary}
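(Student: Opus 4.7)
The plan is to deduce this from Theorem~\ref{thm:approxEccToC2Delta} together with Proposition~\ref{prop:thinVsHyperbolicity}, since the statement only replaces $\delta$ by the (a priori larger) parameter $\tau$.

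For the upper bound $e(x) \leq d(x,C(G)) + rad(G)$, I would note that it holds in every connected graph, independent of any hyperbolicity or thinness hypothesis. Take $c \in C(G)$ closest to $x$ and $y \in F(x)$. By the triangle inequality,
\[
e(x) = d(x,y) \leq d(x,c) + d(c,y) \leq d(x,c) + e(c) = d(x,C(G)) + rad(G),
\]
where the last equality uses $c \in C(G)$ and the choice of $c$ as a closest central vertex.

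For the lower bound, I would use the fact that $\tau$-thinness implies $\delta$-hyperbolicity with $\delta \leq \tau$ (Proposition~\ref{prop:thinVsHyperbolicity}), so Theorem~\ref{thm:approxEccToC2Delta} applies with parameter $\delta = \delta(G) \leq \tau(G)$. That theorem gives
\[
d(x,C(G)) + rad(G) - 4\delta(G) \leq e(x).
\]
Since $4\tau \geq 4\delta(G)$, this immediately yields $d(x,C(G)) + rad(G) - 4\tau \leq e(x)$, completing the proof.

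There is no real obstacle here: the corollary is strictly a translation of the $\delta$-bound of Theorem~\ref{thm:approxEccToC2Delta} into the language of thinness via Proposition~\ref{prop:thinVsHyperbolicity}. The only thing worth emphasizing in the write-up is precisely the point made right before the corollary statement, namely that the bound with $\tau$ is weaker than (and derived from) the sharper bound with $\delta$, which is why no new argument is required.
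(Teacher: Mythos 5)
Your proposal is correct and matches the paper's own derivation: the corollary is stated without a separate proof precisely because, as the preceding paragraph explains, it follows from Theorem~\ref{thm:approxEccToC2Delta} combined with the inequality $\delta(G) \leq \tau(G)$ from Proposition~\ref{prop:thinVsHyperbolicity}, exactly as you argue. Your observation that the upper bound is just the triangle inequality (already established inside the proof of Theorem~\ref{thm:approxEccToC2Delta}) is also consistent with the paper.
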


Let $x$ be an arbitrary vertex with eccentricity $e(x) = rad(G) + k$ for some integer $k \geq 0$.
By Theorem~\ref{thm:approxEccToC2Delta}
, we have: 
\begin{align}
\begin{split}
  k \geq d(x,C_{\le 2\delta}(G)) \geq k- 2\delta  \label{eqn:a}
\end{split}
\\
\begin{split}
  k \leq d(x,C(G)) \leq k + 4\delta \label{eqn:b}
\end{split}
\end{align}

Hence, one obtains a relationship also between the distance from $x$ to $C(G)$ and to $C_{\leq 2\delta}(G)$.
\begin{corollary}
Let~$G$ be a $\delta$-hyperbolic graph and let~$x \in V$ with $e(x) = rad(G) + k$.
Then, $d(x,C_{\leq 2\delta}(G)) \leq k \leq d(x,C(G))$.
Moreover, $d(x,C_{\leq 2\delta}(G)) = \ell$ implies $d(x,C(G)) \leq \ell + 6\delta$.
\end{corollary}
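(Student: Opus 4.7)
The plan is to deduce both claims directly from Theorem~\ref{thm:approxEccToC2Delta}, which already sandwiches $e(x)$ between expressions involving $d(x,C(G))$ and $d(x,C_{\leq 2\delta}(G))$ with additive slack $O(\delta)$. No new distance estimates are needed; the corollary is essentially a repackaging of those two double inequalities in the variable $k$ rather than in $e(x)$.

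For the first statement, I would substitute $e(x)=rad(G)+k$ into the inequalities
\[
d(x,C(G)) + rad(G) - 4\delta \;\leq\; e(x) \;\leq\; d(x,C(G)) + rad(G)
\]
and
\[
d(x,C_{\leq 2\delta}(G)) + rad(G) + 2\delta \;\geq\; e(x) \;\geq\; d(x,C_{\leq 2\delta}(G)) + rad(G).
\]
The right-hand side of the first line gives $k\leq d(x,C(G))$, and the right-hand side of the second line gives $d(x,C_{\leq 2\delta}(G))\leq k$. Concatenating these two inequalities yields $d(x,C_{\leq 2\delta}(G))\leq k\leq d(x,C(G))$, which is the first assertion.

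For the moreover part, I would use the two remaining sides of the same inequalities. From the left-hand side of the first line, $d(x,C(G))\leq k+4\delta$. From the left-hand side of the second line, $k\leq d(x,C_{\leq 2\delta}(G))+2\delta = \ell+2\delta$. Combining these, $d(x,C(G))\leq k+4\delta \leq (\ell+2\delta)+4\delta = \ell+6\delta$, which is exactly the stated bound. Observe that the constants $2\delta$ and $4\delta$ contributing to $6\delta$ come respectively from the slack in the $C_{\leq 2\delta}$ estimate and from the slack in the $C(G)$ estimate; both are already established in Theorem~\ref{thm:approxEccToC2Delta}, and the latter ultimately traces back to Lemma~\ref{distanceToNearestSmallerEccentricity}.

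There is no real obstacle: the corollary is purely algebraic given Theorem~\ref{thm:approxEccToC2Delta}, and the only thing that truly has mathematical content, namely the lower bound $d(x,C(G))+rad(G)-4\delta\leq e(x)$, has already been proven. One might optionally record that the same reasoning, expressed via the labeled inequalities \eqref{eqn:a} and \eqref{eqn:b} displayed just before the statement, makes the whole derivation transparent: \eqref{eqn:a} and \eqref{eqn:b} immediately imply both claims in two lines.
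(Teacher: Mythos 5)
Your proposal is correct and follows exactly the paper's own route: the paper likewise substitutes $e(x)=rad(G)+k$ into Theorem~\ref{thm:approxEccToC2Delta} to obtain the displayed inequalities~(\ref{eqn:a}) and~(\ref{eqn:b}) and then combines them in precisely the way you describe. Nothing is missing.
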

\begin{proof}
Combining equations~(\ref{eqn:a}) and~(\ref{eqn:b}) yields $d(x,C_{\leq 2\delta}(G)) \leq k \leq d(x,C(G))$.
Assume now that $d(x,C_{\leq 2\delta}(G)) = \ell$.
By equation~(\ref{eqn:a}), $\ell \geq k - 2\delta$.
By equation~(\ref{eqn:b}), $d(x,C(G)) \leq k + 4\delta \le \ell + 6\delta$.
\end{proof}




Now, we turn our focus from end-minimal shortest  paths to  shortest $(x,y)$-paths wherein $y \in F(x)$ and $x\in F(z)$ for some vertex $z\in V$ or 
when $\{x,y\}$ is a mutually distant pair.

\subsection{Finding a vertex with small or large eccentricity 
and left-sided additive approximation of all vertex eccentricities}
Let $\{x,y\}$ be a pair of vertices such that $y \in F(x)$ and $x\in F(z)$ for some vertex $z\in V$. 
In Lemma~\ref{cor:radiusBackFromBeam}, we established that 
the eccentricity of vertex~$c_r$ on any shortest $(x,y)$-path at distance $rad(G)$ from~$y$ has small eccentricity (within $2\delta$ of radius).
Of more algorithmic convenience,
we show here that even a middle vertex~$c_m$ of any shortest $(x,y)$-path has small eccentricity (within $3\delta$ of radius),
and its eccentricity is even smaller (within $2\delta$ of radius) if $x \in F(y)$ as well, i.e., when $\{x,y\}$ is a mutually distant pair. 

We will need the following lemma from~\cite{Dragan2018RevisitingRD}. 
\begin{lemma}\cite{Dragan2018RevisitingRD}\label{generalDualityLemma}
Let $G$ be a $\delta$-hyperbolic graph. For every quadruple $c,v,x,y \in V$,
$d(x,v) - d(x,y) \geq d(c,v) - d(y,c) - 2\delta$ or $d(y,v) - d(x,y) \geq d(c,v) - d(x,c) - 2\delta$ holds.
\end{lemma}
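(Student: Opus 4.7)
The plan is to reduce the claim directly to Gromov's four-point condition applied to the quadruple $\{c, v, x, y\}$. Form the three distance sums
\[
  S_1 = d(c,v) + d(x,y), \qquad S_2 = d(c,x) + d(v,y), \qquad S_3 = d(c,y) + d(v,x).
\]
By $\delta$-hyperbolicity, the two largest of $S_1, S_2, S_3$ differ by at most $2\delta$. This immediately implies the (weaker) statement that $S_1$ is at most $\max\{S_2, S_3\} + 2\delta$: if $S_1$ is the maximum, then the second-largest is $\max\{S_2, S_3\}$ and the 4-point condition gives the bound; if $S_1$ is not the maximum, then $S_1 \leq \max\{S_2, S_3\} \leq \max\{S_2, S_3\} + 2\delta$ trivially.

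Next I would rearrange the two candidate inequalities $S_1 \leq S_2 + 2\delta$ and $S_1 \leq S_3 + 2\delta$ into the forms requested in the statement. Specifically, $S_1 \leq S_3 + 2\delta$ is
\[
  d(c,v) + d(x,y) \leq d(c,y) + d(v,x) + 2\delta,
\]
which, after moving $d(x,y)$ and $d(c,y)$ across, becomes $d(x,v) - d(x,y) \geq d(c,v) - d(y,c) - 2\delta$. Symmetrically, $S_1 \leq S_2 + 2\delta$ rearranges to $d(y,v) - d(x,y) \geq d(c,v) - d(x,c) - 2\delta$. Since at least one of the two bounds $S_1 \leq S_2 + 2\delta$ and $S_1 \leq S_3 + 2\delta$ must hold, at least one of the two conclusions is satisfied, giving the ``or'' in the statement.

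There is no real obstacle here — the lemma is essentially a bookkeeping reformulation of the four-point definition of hyperbolicity, with $c, v$ paired against $x, y$. The only care required is in correctly identifying which pairing of the four points corresponds to $S_1$ (the distance sum we want to bound from above) and making sure the algebraic rearrangement matches the asymmetric form $d(\cdot,v) - d(x,y) \geq d(c,v) - d(\cdot,c) - 2\delta$ rather than a symmetric variant.
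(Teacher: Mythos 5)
Your proof is correct: the disjunction in the lemma is exactly the statement that $S_1 = d(c,v)+d(x,y)$ does not exceed $\max\{S_2,S_3\}+2\delta$, and your case split on whether $S_1$ is the largest of the three sums derives this directly from the four-point condition, with the algebraic rearrangements matching the two stated inequalities. Note that the paper itself gives no proof of this lemma --- it is imported from the cited reference --- so there is nothing to compare against here; your argument is the standard direct derivation one would expect.
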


Interestingly, the distances from any vertex $c$ to two mutually distant vertices give a very good estimation on the eccentricity of $c$. 

\begin{theorem} \label{thm:eccOnLineBetter}
Let $G$ be a $\delta$-hyperbolic graph, and let $\{x,y\}$ be a mutually distant pair of vertices.
Any vertex $c \in V$ has $\max\{d(x,c), d(y,c)\} \leq e(c) \leq \max\{d(x,c), d(y,c)\} + 2\delta$.
Moreover, any vertex $c^* \in S_{\lfloor d(x,y)/2 \rfloor}(x,y)\cup S_{\lfloor d(x,y)/2 \rfloor}(y,x)$ has $e(c^*) \leq \lceil d(x,y)/2 \rceil + 2\delta \leq rad(G) + 2\delta$~\cite{Dragan2018RevisitingRD}.
In particular, $diam(G) \geq d(x,y) \geq 2rad(G) - 4\delta - 1$~\cite{Chepoi_2008}.
\end{theorem}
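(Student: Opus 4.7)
The plan has three parts. First, the lower bound $\max\{d(x,c), d(y,c)\} \leq e(c)$ is immediate from the definition of eccentricity, since both $x$ and $y$ are vertices of $V$ and hence compete in the maximum defining $e(c)$; no hyperbolicity is required here and no choice of $\{x,y\}$ is required either.

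Second, for the substantive upper bound I would pick any $v \in F(c)$, so that $e(c) = d(c,v)$, and apply Lemma~\ref{generalDualityLemma} to the quadruple $(c,v,x,y)$. The lemma provides one of the two inequalities
\[
d(x,v) - d(x,y) \geq d(c,v) - d(y,c) - 2\delta \quad \text{or} \quad d(y,v) - d(x,y) \geq d(c,v) - d(x,c) - 2\delta.
\]
The critical observation, and where the mutually distant hypothesis enters, is that $y \in F(x)$ forces $d(x,v) \leq d(x,y)$ and $x \in F(y)$ forces $d(y,v) \leq d(x,y)$, so the left-hand side of whichever case applies is at most $0$. Rearranging whichever inequality holds yields $d(c,v) \leq d(y,c) + 2\delta$ or $d(c,v) \leq d(x,c) + 2\delta$, and in either case $e(c) = d(c,v) \leq \max\{d(x,c), d(y,c)\} + 2\delta$, as required.

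Third, the parenthetical corollaries fall out quickly. Any $c^* \in S_{\lfloor d(x,y)/2 \rfloor}(x,y) \cup S_{\lfloor d(x,y)/2 \rfloor}(y,x)$ lies on a shortest $(x,y)$-path with one of $d(x,c^*), d(y,c^*)$ equal to $\lfloor d(x,y)/2 \rfloor$ and the other to $\lceil d(x,y)/2 \rceil$, so $\max\{d(x,c^*), d(y,c^*)\} = \lceil d(x,y)/2 \rceil$. The main inequality then yields $e(c^*) \leq \lceil d(x,y)/2 \rceil + 2\delta$. Since $d(x,y) \leq diam(G) \leq 2\,rad(G)$, we also get $\lceil d(x,y)/2 \rceil \leq rad(G)$, hence $e(c^*) \leq rad(G) + 2\delta$. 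Finally, combining $rad(G) \leq e(c^*) \leq \lceil d(x,y)/2 \rceil + 2\delta$ with $\lceil d(x,y)/2 \rceil \leq (d(x,y)+1)/2$ rearranges into $d(x,y) \geq 2\,rad(G) - 4\delta - 1$, and $diam(G) \geq d(x,y)$ is trivial.

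The only nonroutine step is recognizing that the mutually distant hypothesis is precisely the condition that annihilates the potentially large quantities $d(x,v) - d(x,y)$ and $d(y,v) - d(x,y)$ on the left-hand sides of Lemma~\ref{generalDualityLemma}; once this observation is in view, the rest is bookkeeping. I expect no serious obstacle beyond choosing the right quadruple to feed into the duality lemma.
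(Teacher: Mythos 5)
Your proof is correct and follows essentially the same route as the paper's: the lower bound from the definition of eccentricity, the upper bound by feeding the quadruple $(c,v,x,y)$ with $v\in F(c)$ into Lemma~\ref{generalDualityLemma} and using mutual distance to make the left-hand sides nonpositive, and the same bookkeeping for the middle-vertex and radius/diameter consequences. No gaps.
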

\begin{proof}
The inequality $e(c) \geq \max\{d(x,c), d(y,c)\}$ holds for any three vertices by definition of eccentricity.
To prove the upper bound on $e(c)$ for any $c \in V$, consider a furthest vertex $v \in F(c)$.
Note that, as~$x$ and~$y$ are mutually distant, $d(x,y) \geq \max\{d(x,v), d(y,v)\}$.
By Lemma~\ref{generalDualityLemma}, for every $x,y,v,c \in V$ either
$d(x,v) - d(x,y) \geq d(c,v) - d(y,c) - 2\delta$ or $d(y,v) - d(x,y) \geq d(c,v) - d(x,c) - 2\delta$ holds.
If the former is true, then $d(c,v) \leq d(x,v) - d(x,y) + d(y,c) + 2\delta \leq d(y,c) + 2\delta$.
If the latter is true, then $d(c,v) \leq d(y,v) - d(x,y) + d(x,c) + 2\delta \leq d(x,c) + 2\delta$.
Thus, $e(c) \leq \max\{d(x,c), d(y,c)\} + 2\delta$.

Moreover, if $c^*$ is a middle vertex of $I(x,y)$, i.e.,  $c^* \in S_{\lfloor d(x,y)/2 \rfloor}(x,y)\cup S_{\lfloor d(x,y)/2 \rfloor}(y,x)$, then
$e(c^*) \leq \max\{d(x,c^*), d(y,c^*)\} + 2\delta = \lceil d(x,y)/2 \rceil + 2\delta \leq \lceil 2rad(G) / 2 \rceil + 2\delta = rad(G) + 2\delta$.
In particular, since $\lceil d(x,y) / 2 \rceil \geq e(c^*) - 2\delta \geq rad(G) - 2\delta$,
 $diam(G) \geq d(x,y) \geq 2rad(G) - 4\delta - 1$.
\end{proof}

Furthermore, the eccentricity of a vertex, that is most distant from some other vertex, is close to the distance between any two mutually distant vertices. 

\begin{lemma}\label{furthestIsAlmostDiameter}
Let $G$ be a $\delta$-hyperbolic graph.
For any $x,y,c \in V$, and any furthest vertex $v \in F(c)$, $d(x,y) \leq e(v) + 2\delta$.
In particular, $e(v) \geq diam(G) - 2\delta$~\cite{Dragan2018RevisitingRD,Chepoi_2008}.
\end{lemma}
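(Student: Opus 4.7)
The plan is to apply Gromov's four-point condition directly to the quadruple $\{c,v,x,y\}$ and exploit the fact that, since $v\in F(c)$, the distance $d(c,v)=e(c)$ dominates both $d(c,x)$ and $d(c,y)$. Consider the three distance sums
\[
S_1 = d(c,v)+d(x,y),\qquad S_2 = d(c,x)+d(v,y),\qquad S_3 = d(c,y)+d(v,x).
\]
My goal is to bound $d(x,y)$ by $e(v)+2\delta$, so I will rearrange each inequality provided by $\delta$-hyperbolicity to isolate $d(x,y)$ on one side and use $d(c,x),d(c,y)\le d(c,v)$ to kill the "extra" terms.

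I would split into two cases. First, if $S_1$ is one of the two largest sums, then the four-point condition yields $S_1\le \max\{S_2,S_3\}+2\delta$. Without loss of generality assume $S_2\ge S_3$; then $d(c,v)+d(x,y)\le d(c,x)+d(v,y)+2\delta$, so
\[
d(x,y)\le d(v,y)+\bigl(d(c,x)-d(c,v)\bigr)+2\delta\le d(v,y)+2\delta\le e(v)+2\delta,
\]
using $d(c,x)\le d(c,v)$. Second, if $S_1$ is the smallest sum, then $S_1\le S_2$ and $S_1\le S_3$, and either inequality gives $d(x,y)\le \max\{d(v,x),d(v,y)\}+\bigl(d(c,\cdot)-d(c,v)\bigr)\le e(v)$, which is even stronger. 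Combining the two cases yields the claim.

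There is essentially no obstacle here: the only subtlety is ensuring that the "furthest vertex" hypothesis $v\in F(c)$ is used correctly to guarantee $d(c,x)-d(c,v)\le 0$ and $d(c,y)-d(c,v)\le 0$, which is precisely what kills the unwanted terms in both cases. Finally, to recover the corollary stated in~\cite{Dragan2018RevisitingRD,Chepoi_2008}, I choose $x$ and $y$ to be any pair of vertices realizing $diam(G)=d(x,y)$; the inequality $d(x,y)\le e(v)+2\delta$ then reads $e(v)\ge diam(G)-2\delta$, as required.
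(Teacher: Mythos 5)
Your proof is correct and follows essentially the same route as the paper: the paper invokes its Lemma~\ref{generalDualityLemma} (which is precisely the four-point condition on the quadruple $\{c,v,x,y\}$ rearranged to give $\max\{S_2,S_3\}\ge S_1-2\delta$) and then uses $d(c,v)\ge\max\{d(c,x),d(c,y)\}$ exactly as you do. You merely unfold that lemma into an explicit case analysis on which sums are largest, so there is no substantive difference.
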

\begin{proof}
From the choice of~$v$, necessarily $d(c,v) \geq \max\{d(y,c), d(x,c\}$.
By Lemma~\ref{generalDualityLemma}, either
$d(x,v) - d(x,y) \geq d(c,v) - d(y,c) - 2\delta$ or $d(y,v) - d(x,y) \geq d(c,v) - d(x,c) - 2\delta$ holds.
If the former is true, then $d(x,y) \leq d(x,v) + d(y,c) - d(c,v) + 2\delta \leq d(x,v) + 2\delta$.
If the latter is true, then $d(x,y) \leq d(y,v) + d(x,c) - d(c,v) + 2\delta \leq d(y,v) + 2\delta$.
In either case $d(x,y) \leq \max\{d(x,v), d(y,v)\} + 2\delta \leq e(v) + 2\delta$, establishing the result. 
In particular, when $d(x,y) = diam(G)$, then $e(v) \geq diam(G) - 2\delta$.
\end{proof}

With Theorem~\ref{thm:eccOnLineBetter} and Lemma~\ref{furthestIsAlmostDiameter}, one obtains the following corollary known from~\cite{Chepoi_2008,Dragan2018RevisitingRD}.
\begin{corollary}~\cite{Chepoi_2008,Dragan2018RevisitingRD}\label{cor:beamEccToDiam}
Let $G$ be a $\delta$-hyperbolic graph.
For any vertex $x \in V$, every vertex $y \in F(x)$ has $e(y) \geq diam(G) - 2\delta \geq 2rad(G) - 6\delta - 1$.
\end{corollary}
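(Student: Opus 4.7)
The plan is to combine the two preceding results directly; this corollary is essentially their juxtaposition applied to the special choice $c = x$ and $v = y$.

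First, I would apply Lemma~\ref{furthestIsAlmostDiameter} with $c := x$, and pick $v := y \in F(x)$. Choosing any diametral pair $x', y' \in V$ with $d(x',y') = diam(G)$ and plugging them in as the ``$x,y$'' of that lemma, the conclusion $d(x',y') \leq e(v) + 2\delta$ becomes $diam(G) \leq e(y) + 2\delta$, so $e(y) \geq diam(G) - 2\delta$. This is already the ``in particular'' clause of Lemma~\ref{furthestIsAlmostDiameter}; I am just re-invoking it in the present context where $v \in F(c)$ is instantiated as $y \in F(x)$.

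Next, I would invoke the inequality $diam(G) \geq 2rad(G) - 4\delta - 1$, which was established as a consequence of Theorem~\ref{thm:eccOnLineBetter} (applied to any mutually distant pair, whose distance both lower-bounds $diam(G)$ and is at least $2rad(G) - 4\delta - 1$). Chaining the two bounds gives
\[
e(y) \;\geq\; diam(G) - 2\delta \;\geq\; (2rad(G) - 4\delta - 1) - 2\delta \;=\; 2rad(G) - 6\delta - 1,
\]
which is exactly the stated inequality.

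There is no real obstacle here: the whole corollary is a bookkeeping step that packages Lemma~\ref{furthestIsAlmostDiameter} (which supplies the $-2\delta$ gap between a ``furthest from something'' eccentricity and the diameter) with Theorem~\ref{thm:eccOnLineBetter} (which supplies the $-4\delta-1$ gap between $diam(G)$ and $2rad(G)$). The only minor subtlety to check is that Lemma~\ref{furthestIsAlmostDiameter} really does apply to arbitrary $x, y \in V$, so it is legitimate to feed it a diametral pair even though the ``$c$'' and ``$v$'' in its statement are reserved for the vertex $x$ and its furthest vertex $y$ of the present corollary; since the variable names in the lemma are bound only by the quantifier ``for any $x, y, c \in V$'', this substitution is valid.
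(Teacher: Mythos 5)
Your proof is correct and follows exactly the route the paper intends: the paper states this corollary without a written proof, noting only that it is obtained by combining Lemma~\ref{furthestIsAlmostDiameter} (which gives $e(y)\ge diam(G)-2\delta$ for $y\in F(x)$) with the bound $diam(G)\ge 2rad(G)-4\delta-1$ from Theorem~\ref{thm:eccOnLineBetter}. Your instantiation of the lemma's quantified variables with a diametral pair, and the observation that this substitution is legitimate, is precisely the intended bookkeeping.
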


Of a great algorithmic interest is also the following result. 
\begin{corollary}\label{eccMiddleOfBeam}
Let $G$ be a $\delta$-hyperbolic graph, where $z \in V$, $x \in F(z)$, and $y \in F(x)$.
Any vertex $c \in S_{\lfloor d(x,y)/2 \rfloor}(x,y)$ has $e(c) \leq rad(G) + 3\delta$~\cite{Chepoi_2008,Dragan2018RevisitingRD}.
Moreover, $e(c) \leq \lceil d(x,y)/2 \rceil + 4\delta$.
\end{corollary}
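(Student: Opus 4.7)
The plan is to focus on establishing the refinement $e(c) \leq \lceil d(x,y)/2 \rceil + 4\delta$, since the bound $e(c) \leq rad(G) + 3\delta$ is already available from the cited works. The strategy combines Lemma~\ref{lemDuality} applied to the quadruple $(x,y,c,v)$ for a vertex $v \in F(c)$, with a 4-point argument on $(x,y,z,v)$ that exploits both hypotheses $x \in F(z)$ and $y \in F(x)$.

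First I would pick an arbitrary $v \in F(c)$ and invoke Lemma~\ref{lemDuality} with endpoints $x,y$ and interior point $c \in I(x,y)$. In one case, $d(x,c) \leq (v|y)_x$ and thus $e(c) = d(c,v) \leq d(x,v) - d(x,c) + 2\delta$; here the hypothesis $y \in F(x)$ yields $d(x,v) \leq e(x) = d(x,y)$, so $e(c) \leq d(x,y) - \lfloor d(x,y)/2 \rfloor + 2\delta = \lceil d(x,y)/2 \rceil + 2\delta$. In the complementary case, $e(c) \leq d(y,v) - \lceil d(x,y)/2 \rceil + 2\delta$ and the task reduces to bounding $d(y,v)$.

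The main obstacle is obtaining a good upper bound on $d(y,v)$ for an arbitrary furthest vertex of $c$. For this I would apply Gromov's 4-point condition to $(x,y,z,v)$. Since $x \in F(z)$, both $d(z,v)$ and $d(y,z)$ are at most $e(z) = d(x,z)$; since $y \in F(x)$, $d(x,v) \leq d(x,y)$. Consequently the two sums $d(x,y) + d(z,v)$ and $d(x,v) + d(y,z)$ are each at most $d(x,y) + e(z)$, so the 4-point condition forces the third sum $d(x,z) + d(y,v) = e(z) + d(y,v)$ to satisfy $e(z) + d(y,v) \leq d(x,y) + e(z) + 2\delta$, i.e.\ $d(y,v) \leq d(x,y) + 2\delta$. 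Feeding this back, the complementary case of Lemma~\ref{lemDuality} yields $e(c) \leq d(x,y) - \lceil d(x,y)/2 \rceil + 4\delta = \lfloor d(x,y)/2 \rfloor + 4\delta \leq \lceil d(x,y)/2 \rceil + 4\delta$. Combining with the first case establishes the new inequality.

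To recover the cited bound $e(c) \leq rad(G) + 3\delta$, I would refine this analysis: in the first case, $d(x,y) \leq 2rad(G)$ already gives $\lceil d(x,y)/2 \rceil + 2\delta \leq rad(G) + 2\delta$; in the second case, if $d(x,y) \leq 2rad(G) - 2\delta$ then $\lfloor d(x,y)/2 \rfloor + 4\delta \leq rad(G) + 3\delta$ directly, while if $d(x,y) > 2rad(G) - 2\delta$ then the middle vertex $c$ lies within distance $\delta$ of the vertex $c^{*} \in S_{rad(G)}(y,x)$, and Lemma~\ref{cor:radiusBackFromBeam} gives $e(c^{*}) \leq rad(G) + 2\delta$, so $e(c) \leq rad(G) + 3\delta$ by the triangle inequality.
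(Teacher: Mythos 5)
Your proof is correct and follows the same skeleton as the paper's: pick $v\in F(c)$, apply Lemma~\ref{lemDuality} to $c\in I(x,y)$, and handle the two resulting cases, with the first case giving $\lceil d(x,y)/2\rceil+2\delta$ immediately. The only differences are in two sub-steps. For the bound $d(y,v)\le d(x,y)+2\delta$ you run the 4-point condition on $(x,y,z,v)$ from scratch, whereas the paper gets the same inequality by chaining $d(y,v)\le diam(G)$ with $diam(G)\le d(x,y)+2\delta$ from Corollary~\ref{cor:beamEccToDiam}; these are equivalent (your computation is essentially the proof of that corollary inlined), and yours is self-contained at the cost of a little repetition. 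For recovering $e(c)\le rad(G)+3\delta$, the paper argues directly in the second case that $d(y,c)=\lceil d(x,y)/2\rceil\ge\lceil diam(G)/2\rceil-\delta$, giving $e(c)\le\lfloor diam(G)/2\rfloor+3\delta\le rad(G)+3\delta$ in one line; your subcase split on $d(x,y)$ versus $2rad(G)-2\delta$, with the appeal to Lemma~\ref{cor:radiusBackFromBeam} and the triangle inequality via a nearby $c^*\in S_{rad(G)}(y,x)$, also works (note $rad(G)\le d(x,y)=e(x)\le 2rad(G)$ guarantees $c^*$ exists on the shortest $(x,y)$-path through $c$ and that $d(c,c^*)\le\delta$) but is longer than necessary. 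Both routes are sound; the paper's is slightly more economical.
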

\begin{proof}
Let $v \in F(c)$ be a furthest vertex from $c$.
Since $y \in F(x)$, $d(x,v) \leq d(x,y)$. We also have $d(x,y) \leq 2rad(G)$.
If $d(x,c) \leq (v|y)_x$ then, by Lemma~\ref{lemDuality}, 
$e(c) = d(c,v) \leq d(x,v) - d(x,c) + 2\delta \leq d(x,y) - d(x,c) + 2\delta = \lceil d(x,y)/2 \rceil + 2\delta \leq rad(G) + 2\delta$.

On the other hand, if $d(x,c) \geq (v|y)_x$ then, by Lemma~\ref{lemDuality},  
$e(c) = d(c,v) \leq d(y,v) - d(y,c) + 2\delta$.
By Corollary~\ref{cor:beamEccToDiam}, $d(y,c) = \lceil d(x,y)/2 \rceil = \lceil e(x)/2 \rceil \geq \lceil (diam(G) - 2\delta)/2 \rceil = \lceil diam(G)/2 \rceil - \delta$.
Therefore, $e(c) \leq d(y,v) - d(y,c) + 2\delta \leq diam(G) - \lceil diam(G)/2 \rceil + 3\delta \leq rad(G) + 3\delta$.
Thus, $e(c) \leq rad(G) + 3\delta$.
Moreover, by Corollary~\ref{cor:beamEccToDiam}, $diam(G) \leq d(x,y) + 2\delta$.
Hence, $e(c) \le d(y,v) - d(y,c) + 2\delta \leq diam(G) - d(y,c) + 2\delta \leq d(x,y) - d(y,c) + 4\delta \leq \lfloor d(x,y)/2 \rfloor + 4\delta\leq \lceil d(x,y)/2 \rceil + 4\delta$. 
\end{proof}

In Section \ref{section:convexity}, we saw that the diameter in $G$ of any set $C_{\leq k}(G)$, $k\in \mathbb{N}$, is bounded by $2k + 4\delta + 1$. In particular, $diam(C(G))\le  4\delta + 1$ holds~\cite{Chepoi_2008}. Note that, in~\cite{Chepoi_2008}, it was additionally shown that all central vertices are close to a middle vertex~$c$ of a shortest $(x,y)$-path, provided that $x$ is furthest from some vertex and that $y$ is furthest from~$x$. Namely, $D(c, 5\delta + 1) \supseteq C(G)$ holds. Here, we provide such a result with respect to $C_{\leq k}(G)$ for all $k\in \mathbb{N}$.

\begin{lemma}\label{lem:centerToMiddleBeam}
Let $G$ be a $\delta$-hyperbolic graph, and let $z \in V$, $x \in F(z)$, and $y \in F(x)$.
Any middle vertex~$c \in S_{\lceil d(x,y)/2 \rceil}(x,y)\cup S_{\lceil d(x,y)/2 \rceil}(y,x)$ satisfies $D(c, 5\delta + 1 + k) \supseteq C_{\leq k}(G)$.
In particular, $D(c, 5\delta + 1) \supseteq C(G)$~\cite{Chepoi_2008}.
\end{lemma}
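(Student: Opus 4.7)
The plan is to apply Corollary~\ref{cor:maxMinDuality} to the quadruple $(x,y,u,c)$ for an arbitrary vertex $u \in C_{\leq k}(G)$, using the middle position of $c$ together with the fact that the beam $x$--$y$ is close in length to $diam(G)$.

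First I would record two distance facts that drive the bound. Since $y \in F(x)$, we have $d(x,y) = e(x)$, and since $x \in F(z)$, Corollary~\ref{cor:beamEccToDiam} yields $e(x) \geq diam(G) - 2\delta \geq 2 \cdot rad(G) - 6\delta - 1$; hence
\[
d(x,y) \;\geq\; 2\, rad(G) - 6\delta - 1.
\]
Because $c$ is a middle vertex of $I(x,y)$, $\min\{d(x,c), d(y,c)\} = \lfloor d(x,y)/2 \rfloor$; combined with the inequality $\lfloor n/2 \rfloor \geq (n-1)/2$ for integer $n$, this gives
\[
\min\{d(x,c), d(y,c)\} \;\geq\; rad(G) - 3\delta - 1.
\]

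Next, I would pick any $u \in C_{\leq k}(G)$ so that $d(x,u), d(y,u) \leq e(u) \leq rad(G) + k$. Applying Corollary~\ref{cor:maxMinDuality} with the quadruple $x,y,u,c$ (so that $c \in I(x,y)$ plays the role of the middle vertex) yields
\[
d(c,u) \;\leq\; \max\{d(x,u), d(y,u)\} - \min\{d(x,c), d(y,c)\} + 2\delta.
\]
Substituting the bounds obtained above gives
\[
d(c,u) \;\leq\; \bigl(rad(G) + k\bigr) - \bigl(rad(G) - 3\delta - 1\bigr) + 2\delta \;=\; k + 5\delta + 1,
\]
so $u \in D(c, 5\delta + 1 + k)$, which is the claim. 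Setting $k = 0$ recovers the known inclusion $C(G) \subseteq D(c, 5\delta+1)$ from~\cite{Chepoi_2008}.

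The plan has no genuine obstacle beyond care with the two standard inputs: invoking Corollary~\ref{cor:beamEccToDiam} correctly on the two furthest-vertex steps $x \in F(z)$ and $y \in F(x)$, and bookkeeping the floor $\lfloor d(x,y)/2 \rfloor$ in a way that is safe when $\delta$ is a half-integer. The loss of one unit from $\lfloor \cdot \rfloor$ is exactly what produces the additive ``$+1$'' in $5\delta+1+k$, so no tighter path-length analysis is needed and Gromov's 4-point condition enters only implicitly through Corollary~\ref{cor:maxMinDuality}.
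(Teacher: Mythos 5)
Your proof is correct and follows essentially the same route as the paper's: both apply Corollary~\ref{cor:maxMinDuality} to the quadruple $x,y,u,c$, bound $\max\{d(x,u),d(y,u)\}$ by $rad(G)+k$, and lower-bound $\min\{d(x,c),d(y,c)\} = \lfloor d(x,y)/2\rfloor$ via Corollary~\ref{cor:beamEccToDiam}. The only cosmetic difference is that you unpack the floor with $\lfloor n/2\rfloor \geq (n-1)/2$ while the paper keeps it inside the final displayed inequality.
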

\begin{proof}
Consider an arbitrary vertex $u \in C_{\leq k}(G)$.
By Corollary~\ref{cor:maxMinDuality}, $d(c,u) \leq \max\{d(x,u), d(y,u)\} - \min\{d(x,c), d(y,c)\} + 2\delta$.
As $e(u) \leq rad(G) + k$, $\max\{d(x,u), d(y,u)\} \leq rad(G) + k$ holds.
As $c$ is a middle vertex of a shortest $(x,y)$-path,  $\min\{d(x,c), d(y,c)\} = \lfloor d(x,y) / 2 \rfloor$.
Since $x \in F(z)$, by Corollary~\ref{cor:beamEccToDiam}, $e(x) = d(x,y) \geq 2rad(G) - 6\delta - 1$.
Hence, $d(c,u) \leq rad(G) + k - \lfloor (2rad(G) - 6\delta - 1) / 2 \rfloor + 2\delta \leq 5\delta + 1 + k$.
\end{proof}

In~\cite{Chepoi2018FastAO}, it was proven that if $G$ is a $\tau$-thin graph, then any middle vertex~$c$ of any shortest path $P(x,y)$ between two mutually distant vertices $x$ and $y$ satisfies $D(c, k+2\tau + 1) \supseteq C_{\le k}(G)$. 
Applying the inequality~$\tau \leq 4\delta$ from Proposition~\ref{prop:thinVsHyperbolicity} to this result,
one obtains $D(c, k+8\delta + 1) \supseteq C_{\le k}(G)$.
The latter result can be improved working directly with~$\delta$. 

\begin{lemma}\label{lem:centerToMiddleLine}
Let $G$ be a $\delta$-hyperbolic graph, and let $\{x,y\}$ be a mutually distant pair.
Any middle vertex~$c \in S_{\lceil d(x,y)/2 \rceil}(x,y)\cup S_{\lceil d(x,y)/2 \rceil}(y,x)$ satisfies $D(c, 4\delta + 1 + k) \supseteq C_{\leq k}(G)$.
In particular, $D(c, 4\delta + 1) \supseteq C(G)$.
\end{lemma}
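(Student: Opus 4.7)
The plan is to mirror the proof of Lemma~\ref{lem:centerToMiddleBeam} but exploit the sharper diameter lower bound that is available when $x$ and $y$ are mutually distant rather than merely obtained by a single BFS from a far vertex. Concretely, Theorem~\ref{thm:eccOnLineBetter} gives $d(x,y) \geq 2\,rad(G) - 4\delta - 1$, improving the $2\,rad(G) - 6\delta - 1$ bound from Corollary~\ref{cor:beamEccToDiam} that was used in Lemma~\ref{lem:centerToMiddleBeam}. This saving of $2\delta$ in the length of $P(x,y)$ translates directly to a $2\delta$ improvement in the radius of the disk around the middle vertex that must contain $C_{\leq k}(G)$.

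Fix an arbitrary $u \in C_{\leq k}(G)$; I want to show $d(c,u) \leq 4\delta + 1 + k$. First apply Corollary~\ref{cor:maxMinDuality} to the triple $x,y,u$ and the vertex $c \in I(x,y)$ to obtain
\[
d(c,u) \;\leq\; \max\{d(x,u),\,d(y,u)\} \;-\; \min\{d(x,c),\,d(y,c)\} \;+\; 2\delta.
\]
Since $u \in C_{\leq k}(G)$, both $d(x,u)$ and $d(y,u)$ are bounded by $e(u) \leq rad(G) + k$, so the first term is at most $rad(G) + k$. For the second term, recall that $c$ is a middle vertex of $P(x,y)$, so $\min\{d(x,c),d(y,c)\} = \lfloor d(x,y)/2 \rfloor$.

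Plugging in the lower bound $d(x,y) \geq 2\,rad(G) - 4\delta - 1$ from Theorem~\ref{thm:eccOnLineBetter} and noting that $4\delta$ is an integer (since $\delta$ is an integer or half-integer, so $2\delta$ is an integer and $2\,rad(G) - 4\delta - 1$ is odd), we get $\lfloor d(x,y)/2 \rfloor \geq rad(G) - 2\delta - 1$. Substituting back yields
\[
d(c,u) \;\leq\; (rad(G) + k) \;-\; (rad(G) - 2\delta - 1) \;+\; 2\delta \;=\; k + 4\delta + 1,
\]
so $u \in D(c, k + 4\delta + 1)$, as required. The special case $C(G) \subseteq D(c, 4\delta + 1)$ is $k = 0$.

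I don't anticipate a real obstacle — the only subtlety is the parity check for the floor, where one must recall that $4\delta$ is always an integer for graph hyperbolicity and hence $2\,rad(G) - 4\delta - 1$ is odd. The substantive content of the lemma sits entirely in the strengthened estimate $d(x,y) \geq 2\,rad(G) - 4\delta - 1$ that mutual distance provides over the eccentricity-type bound used in Lemma~\ref{lem:centerToMiddleBeam}.
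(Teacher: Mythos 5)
Your proof is correct and follows essentially the same route as the paper's: apply Corollary~\ref{cor:maxMinDuality} to $u\in C_{\leq k}(G)$ and the middle vertex $c$, then substitute the sharper bound $d(x,y)\geq 2\,rad(G)-4\delta-1$ from Theorem~\ref{thm:eccOnLineBetter} available for mutually distant pairs. The parity digression is harmless but unnecessary, since monotonicity of the floor already gives $\lfloor d(x,y)/2\rfloor\geq\lfloor(2\,rad(G)-4\delta-1)/2\rfloor$.
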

\begin{proof}
The proof is analogous to that of Lemma~\ref{lem:centerToMiddleBeam}.
However, since $x,y$ are mutually distant,  by Theorem~\ref{thm:eccOnLineBetter}, $d(x,y) \geq 2rad(G) - 4\delta - 1$.
Hence, for any $u \in C_{\leq k}(G)$,
$d(c,u) \leq rad(G) + k - \lfloor (2rad(G) - 4\delta - 1) / 2 \rfloor + 2\delta \leq 4\delta + 1 + k$.
\end{proof}

Thus, combining this with the result from~\cite{Chepoi2018FastAO}, we get  $D(c, \min\{4\delta(G), 2\tau(G)\} + 1 + k) \supseteq C_{\leq k}(G)$ for any graph $G$. 
\medskip

There are several algorithmic implications of the results of this subsection.  For an arbitrary connected graph $G=(V,E)$ and a given vertex $z\in V$, a most distant from $z$ vertex $x\in F(z)$ can be found in linear ($O(|E|)$) time by a {\em breadth-first-search} $BFS(z)$ started at $z$.
A pair of mutually distant vertices of a $\delta$-hyperbolic graph $G=(V,E)$ can be computed in $O(\delta |E|)$ total time as follows.  By Lemma~\ref{furthestIsAlmostDiameter}, if $x$ is a most distant vertex from an arbitrary vertex $z$ and $y$ is a most distant vertex from $x$, then $d(x,y)\geq diam(G)-2\delta$. Hence, using at most $O(\delta)$ {\em breadth-first-searches}, one can generate a sequence of vertices $x:=v_1,y:=v_2, v_3, \dots v_k$ with $k\leq 2\delta+2$  such that each $v_i$ is most distant from $v_{i-1}$ (with $v_0=z$) and $v_k$, $v_{k-1}$ are mutually distant vertices (the initial value $d(x,y)\geq diam(G)-2\delta$ can be improved at most $2\delta$ times).

Thus, by Theorem~\ref{thm:eccOnLineBetter},  Lemma~\ref{furthestIsAlmostDiameter},   Corollary~\ref{eccMiddleOfBeam}, Lemma~\ref{lem:centerToMiddleBeam}, and Lemma~\ref{lem:centerToMiddleLine}, 
we get the following 
additive approximations for the radius and the diameter of a $\delta$-hyperbolic graph $G$.

\begin{corollary} \label{th:rad-diam-appr}
Let $G=(V,E)$ be a $\delta$-hyperbolic graph.
\begin{enumerate}
\item[$(i)$] \cite{Chepoi_2008,Dragan2018RevisitingRD} There is a linear $(O(|E|))$ time algorithm which finds in $G$ a vertex $c$ with eccentricity at most $rad(G)+3\delta$ and a vertex $v$ with eccentricity at least $diam(G)-2\delta$. Furthermore, $C(G)\subseteq D(c,5\delta+1)$ holds.
\item[$(ii)$] There is an almost linear $(O(\delta|E|))$ time algorithm which finds in $G$ a vertex $c$ with eccentricity at most $rad(G)+2\delta$~\cite{Dragan2018RevisitingRD}. Furthermore, $C(G)\subseteq D(c,4\delta+1)$ holds.
\end{enumerate}
\end{corollary}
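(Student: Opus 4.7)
The plan is to combine the eccentricity/diameter estimates established in Theorem~\ref{thm:eccOnLineBetter}, Lemma~\ref{furthestIsAlmostDiameter}, Corollary~\ref{eccMiddleOfBeam}, and Lemmas~\ref{lem:centerToMiddleBeam}, \ref{lem:centerToMiddleLine} with two simple BFS-based procedures described in the paragraph preceding the corollary. The entire argument reduces to (a) verifying the running times of those procedures and (b) invoking the cited bounds on the selected output vertices.

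For part $(i)$, I would start from an arbitrary vertex $z\in V$ and perform $BFS(z)$ to obtain a vertex $x\in F(z)$ in $O(|E|)$ time; then $BFS(x)$ to obtain $y\in F(x)$ and the full distance array $d(x,\cdot)$; then $BFS(y)$ for $d(y,\cdot)$. Using these two arrays I can extract a middle vertex $c\in S_{\lfloor d(x,y)/2\rfloor}(x,y)$ in $O(|V|)$ time. Three BFS calls give the claimed $O(|E|)$ bound. The eccentricity bound $e(c)\leq rad(G)+3\delta$ is exactly Corollary~\ref{eccMiddleOfBeam}, while $e(y)\geq diam(G)-2\delta$ is Lemma~\ref{furthestIsAlmostDiameter} applied with $v=y\in F(x)$. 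Finally, since $x\in F(z)$ and $y\in F(x)$, the containment $C(G)\subseteq D(c,5\delta+1)$ is the special case $k=0$ of Lemma~\ref{lem:centerToMiddleBeam}.

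For part $(ii)$, I would iterate the most-distant-vertex routine as sketched in the text: set $v_0=z$ and, for $i\geq 1$, let $v_i\in F(v_{i-1})$ via $BFS(v_{i-1})$, stopping at the first index $k$ such that $v_{k-1}\in F(v_k)$, i.e., $\{v_{k-1},v_k\}$ is a mutually distant pair. After the first two searches we already have $d(v_1,v_2)\geq diam(G)-2\delta$ by Lemma~\ref{furthestIsAlmostDiameter}. Each subsequent step either terminates (and we have a mutually distant pair) or strictly increases $d(v_{i-1},v_i)$; since distances are non-negative half-integers bounded by $diam(G)$ and the initial gap is at most $2\delta$, the process halts after $k\leq 2\delta+2$ iterations, giving $O(\delta|E|)$ total time. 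With $x:=v_{k-1}$ and $y:=v_k$ mutually distant, one more pair of BFS calls from $x$ and $y$ lets us identify a middle vertex $c\in S_{\lfloor d(x,y)/2\rfloor}(x,y)$. Theorem~\ref{thm:eccOnLineBetter} then yields $e(c)\leq rad(G)+2\delta$ and Lemma~\ref{lem:centerToMiddleLine} (with $k=0$) yields $C(G)\subseteq D(c,4\delta+1)$.

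The only step requiring genuine care is the termination bound in part $(ii)$: one must argue that the sequence $d(v_{i-1},v_i)$ strictly increases until a mutually distant pair appears, and that the number of possible increases is controlled by $\delta$ rather than by some global parameter. The key observation, already implicit in Lemma~\ref{furthestIsAlmostDiameter}, is that from the second search onward each $d(v_{i-1},v_i)$ sits within $2\delta$ of $diam(G)$, so after at most $2\delta$ further strict improvements of a half-integer quantity the pair must be mutually distant. Everything else is a direct assembly of previously stated lemmas.
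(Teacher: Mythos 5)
Your proposal is correct and follows essentially the same route as the paper, which likewise obtains part $(i)$ from two or three BFS calls combined with Corollary~\ref{eccMiddleOfBeam}, Lemma~\ref{furthestIsAlmostDiameter}, and Lemma~\ref{lem:centerToMiddleBeam}, and part $(ii)$ from the iterated most-distant-vertex routine (terminating after at most $2\delta+2$ searches because the initial gap $d(v_1,v_2)\geq diam(G)-2\delta$ can be strictly improved at most $2\delta$ times) combined with Theorem~\ref{thm:eccOnLineBetter} and Lemma~\ref{lem:centerToMiddleLine}. The only nitpick is your remark that the distances are ``half-integers'': graph distances are integers (only $\delta$ may be a half-integer), which is in fact what gives the bound of at most $2\delta$ strict improvements rather than $4\delta$.
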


In a graph $G$ where vertex degrees and the hyperbolicity $\delta(G)$ are bounded by constants, the entire center $C(G)$ can be found in linear time by running a breadth-first-search from each vertex of $D(c, 5\delta+1)$ and selecting among them the vertices with smallest eccentricity. In this case, $D(c, 5\delta+1)$ contains only a constant number of vertices. 

\begin{corollary}
Let $G$ be a $\delta$-hyperbolic graph with maximum vertex degree $\Delta(G)$. 
A vertex~$c$ with $C(G) \subseteq D(c,5\delta+1)$ can be found in $O(|E|)$ time and the center $C(G)$ can be computed in $O(\Delta(G)^{5\delta+1}|E|)$ time~\cite{Chepoi_2008}.
%
A vertex~$c$ with $C(G) \subseteq D(c,4\delta+1)$ can be found in $O(\delta|E|)$ time and the center $C(G)$ can be computed in $O(\Delta(G)^{4\delta+1}|E|)$ time.
If the degrees of vertices and the hyperbolicity of $G$ are uniformly bounded by a constant, then $C(G)$ can be found in total linear time~\cite{Chepoi_2008,Dragan2018RevisitingRD}.
\end{corollary}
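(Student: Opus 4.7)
The plan is to exploit Corollary~\ref{th:rad-diam-appr} to localize $C(G)$ inside a small disk and then determine the center by brute-force eccentricity computation inside that disk. First I would invoke Corollary~\ref{th:rad-diam-appr}$(i)$: in $O(|E|)$ time one obtains a vertex $c$ satisfying $C(G) \subseteq D(c, 5\delta+1)$. To extract $C(G)$ itself, I would run a BFS from $c$ to enumerate $D(c, 5\delta+1)$, whose cardinality is at most $1 + \Delta(G) + \Delta(G)^2 + \dots + \Delta(G)^{5\delta+1} = O(\Delta(G)^{5\delta+1})$, a standard BFS-layer bound for graphs of maximum degree $\Delta(G)$. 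For each vertex $u$ in this disk a single BFS from $u$ computes $e(u)$ in $O(|E|)$ time. Since $C(G) \subseteq D(c, 5\delta+1)$, the minimum eccentricity observed over this disk must equal $rad(G)$, and the vertices realizing this minimum are exactly $C(G)$. The total time is $O(\Delta(G)^{5\delta+1}|E|)$.

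The second claim proceeds identically, but I would invoke Corollary~\ref{th:rad-diam-appr}$(ii)$ instead, which furnishes in $O(\delta|E|)$ time a vertex $c$ with $C(G) \subseteq D(c, 4\delta+1)$. The improved disk now contains $O(\Delta(G)^{4\delta+1})$ vertices, so the subsequent batch of eccentricity computations costs $O(\Delta(G)^{4\delta+1}|E|)$, which dominates or absorbs the $O(\delta|E|)$ preprocessing. The refinement from radius $5\delta+1$ to radius $4\delta+1$ is exactly what one gains by using the almost-linear mutually-distant-pair construction of part $(ii)$ rather than the purely linear scheme of part $(i)$.

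Finally, if both $\Delta(G)$ and $\delta(G)$ are uniformly bounded by a constant, then $4\delta+1$ is constant and $\Delta(G)^{4\delta+1} = O(1)$, so the candidate disk contains only $O(1)$ vertices and the full procedure terminates in $O(|E|)$ time. No real obstacle arises: all the geometric content is already packaged in Corollary~\ref{th:rad-diam-appr}, and the only supplementary observation is the elementary exponential bound on ball cardinalities in bounded-degree graphs. The single routine check is that minimizing $e(\cdot)$ over the candidate disk does return $rad(G)$, which is immediate from the containment $C(G)\subseteq D(c, 5\delta+1)$ (respectively $D(c, 4\delta+1)$).
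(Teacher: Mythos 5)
Your proposal is correct and follows essentially the same route as the paper: the paper likewise obtains the localizing vertex $c$ from Corollary~\ref{th:rad-diam-appr} and then, as explained in the paragraph preceding the corollary, runs a breadth-first-search from each vertex of the disk $D(c,5\delta+1)$ (respectively $D(c,4\delta+1)$) and selects the vertices of smallest eccentricity, using the same $O(\Delta(G)^{5\delta+1})$ bound on the disk's cardinality. Your added check that the minimum eccentricity over the disk equals $rad(G)$ because $C(G)$ is contained in it is exactly the implicit justification the paper relies on.
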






By Theorem~\ref{thm:eccOnLineBetter}, we get also the following left-sided additive approximations of all vertex eccentricities. 
Let  $\{x,y\}$ be a mutually distant pair of vertices of $G$. For every vertex~$v \in V$, set $\hat{e}(v) := \max\{d(x,v), d(y,v)\}$. 

\begin{corollary} \label{cor:all-ecc-left}
Let $G=(V,E)$ be a $\delta$-hyperbolic graph.  
There is an algorithm which in total almost linear $(O(\delta|E|))$ time outputs for every vertex $v\in V$ an estimate $\hat{e}(v)$ of its eccentricity $e_G(v)$ such that  $e_G(v)- 2\delta \leq \hat{e}(v) \leq {e_G}(v).$
\end{corollary}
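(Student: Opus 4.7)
The plan is to combine the iterative construction of a mutually distant pair (sketched in the paragraph immediately preceding the corollary) with the two-sided estimate provided by Theorem~\ref{thm:eccOnLineBetter}. Given the pair, the estimator itself is simply $\hat{e}(v) := \max\{d(x,v), d(y,v)\}$, evaluated by two BFS traversals.

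The first step is to produce the pair. Starting from an arbitrary vertex $v_0$, I iteratively compute $v_i \in F(v_{i-1})$ by a BFS rooted at $v_{i-1}$, and stop at the first index $i$ for which $v_{i-1} \in F(v_i)$ as well; I then set $\{x,y\} := \{v_{i-1}, v_i\}$. To bound the number of iterations, note that $d(v_{i-1}, v_i) = e(v_{i-1})$ for every $i \geq 1$, hence the sequence $d_i := d(v_{i-1}, v_i)$ is non-decreasing, since $d_{i+1} = e(v_i) \geq d(v_i, v_{i-1}) = d_i$. Whenever the loop does not terminate at iteration $i$ we have $v_{i-1} \notin F(v_i)$, i.e.\ $e(v_i) > d(v_i, v_{i-1})$, so $d_{i+1} > d_i$ strictly. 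By Lemma~\ref{furthestIsAlmostDiameter}, $d_2 = e(v_1) \geq diam(G) - 2\delta$, while $d_i \leq diam(G)$ for all $i$, so the strict increase can occur at most $2\delta$ times. Hence the loop halts after at most $2\delta + 2$ BFSs, contributing $O(\delta|E|)$ time.

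The second step runs $BFS(x)$ and $BFS(y)$, computing $d(x,v)$ and $d(y,v)$ for every $v \in V$ in additional $O(|E|)$ time, and outputs $\hat{e}(v) = \max\{d(x,v), d(y,v)\}$. For correctness, since $\{x,y\}$ is mutually distant, Theorem~\ref{thm:eccOnLineBetter} gives, at every $v \in V$,
\[
\max\{d(x,v), d(y,v)\} \;\leq\; e_G(v) \;\leq\; \max\{d(x,v), d(y,v)\} + 2\delta,
\]
which rearranges to $e_G(v) - 2\delta \leq \hat{e}(v) \leq e_G(v)$. The total runtime is $O(\delta|E|) + O(|E|) = O(\delta|E|)$.

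There is no genuine obstacle here: the corollary is a packaging of Theorem~\ref{thm:eccOnLineBetter} with the standard iterated-BFS construction of a mutually distant pair. The one point that deserves care is showing that the iteration in the first step terminates in $O(\delta)$ rounds, which is precisely the monotonicity-and-boundedness argument above together with the initial bound from Lemma~\ref{furthestIsAlmostDiameter}.
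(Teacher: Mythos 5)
Your proposal is correct and follows the paper's own route exactly: the paper derives this corollary from Theorem~\ref{thm:eccOnLineBetter} applied to a mutually distant pair obtained by the same iterated-BFS procedure (initial gap $diam(G)-2\delta$ from Lemma~\ref{furthestIsAlmostDiameter}, improvable at most $2\delta$ times, hence at most $2\delta+2$ BFS runs). Your termination argument is the same monotonicity-and-boundedness reasoning the paper sketches in the paragraph preceding the corollary, just written out in more detail.
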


If~the hyperbolicity $\delta(G)$ of $G$ is known in advance, we can transform $\hat{e}$ into a right-sided additive~$2\delta$-approximation by setting~$\hat{e}(v) := \max\{d(x,v), d(y,v)\} + 2\delta$. 
This approach generalizes a recently discovered eccentricity approximation for distance-hereditary graphs~\cite{ourManuscriptDHG} (the hyperbolicity of any  distance-hereditary graph is at most 1).  Unfortunately, if $\delta(G)$ is not known in advance, the best to date algorithm  for computing $\delta(G)$ has complexity $O(n^{3.69})$  and relies on some (rather impractical) matrix multiplication results~\cite{DBLP:journals/ipl/FournierIV15} (see also~\cite{DBLP:conf/compgeom/ChalopinCDDMV18} for some recent approximation algorithms). In the next subsection, we will give some right-sided additive approximations of all vertex eccentricities which do not assume any knowledge of $\delta(G)$.   

\subsection{Right-sided additive approximations of all vertex eccentricities}

In what follows, we illustrate two right-sided additive eccentricity approximations for all vertices using a notion of \emph{eccentricity approximating spanning tree} introduced in \cite{DBLP:journals/dm/Prisner00a} and investigated in~\cite{Dragan2018,Dragan2017EccentricityAT,Chepoi2018FastAO,DBLP:journals/dam/Ducoffe19a}.  We get 
a $O(|E|)$ time right-sided additive $(6\delta)$-approximations and a $O(\delta|E|)$ time right-sided additive $(4\delta+1)$-approximations. 

A spanning tree $T$ of a graph $G$ is called an {\em eccentricity $k$-approximating spanning tree} if for every vertex $v$ of $G$  $e_T(v)\leq e_G(v)+ k$ holds~\cite{DBLP:journals/dm/Prisner00a}. All  $(\alpha_1, \triangle)$-metric graphs (including chordal graphs and the underlying graphs of 7-systolic complexes)  admit eccentricity 2-approximating spanning trees~\cite{Dragan2017EccentricityAT}. 
An eccentricity 2-approximating spanning tree of a chordal graph can be computed in linear time~\cite{Dragan2018}. An eccentricity $k$-approximating spanning tree with minimum $k$ can be found in $O(|V||E|)$ time for any graph $G$~\cite{DBLP:journals/dam/Ducoffe19a}.
It is also known~\cite{Chepoi2018FastAO} that if~$G$ is a $\tau$-thin graph, then $G$ admits an eccentricity $(2\tau)$-approximating spanning tree constructible in $O(\tau|E|)$ time and an eccentricity $(6\tau+1)$-approximating spanning tree constructible in $O(|E|)$ time. 
Applying the inequality $\tau \leq 4\delta$ from Proposition~\ref{prop:thinVsHyperbolicity}, we get that every $\delta$-hyperbolic graph admits an eccentricity $8\delta$-approximating spanning tree constructible in $O(\delta|E|)$ time and an eccentricity $(24\delta+1)$-approximating spanning tree constructible in $O(|E|)$ time. Both these results can be significantly improved working directly with~$\delta$.

\begin{theorem}\label{thm:eccApproxTreeByLine}
Let $G$ be a $\delta$-hyperbolic graph.
If $c$ is a middle vertex of any shortest $(x,y)$-path between a pair $\{x,y\}$ of mutually distant vertices of $G$ and $T$ is a $BFS(c)$-tree of $G$, then, for every vertex $v$ of $G$, $e_G(v)\leq e_T(v)\leq e_G(v)+ 4\delta+1.$ That is, $G$ admits an eccentricity $(4\delta+1)$-approximating spanning tree constructible in $O(\delta|E|)$ time. 
\end{theorem}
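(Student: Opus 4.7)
The plan is to split the bound into two halves. The lower inequality $e_G(v)\le e_T(v)$ is automatic: every $v$--$u$ path in $T$ is also a path in $G$, so $d_T(v,u)\ge d_G(v,u)$ for all $u$, and taking the maximum over $u$ gives $e_T(v)\ge e_G(v)$. The real content is the upper bound $e_T(v)\le e_G(v)+4\delta+1$, and here I would exploit the two key features of the construction: $c$ is a middle vertex of a shortest $(x,y)$-path with $\{x,y\}$ mutually distant, and $T$ is a BFS-tree rooted at $c$.

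First I would reduce the problem to bounding $d_G(v,c)+e_G(c)$. Since $T$ is a BFS-tree from $c$, every vertex $u$ satisfies $d_T(c,u)=d_G(c,u)$, and by the tree triangle inequality $d_T(v,u)\le d_T(v,c)+d_T(c,u)=d_G(v,c)+d_G(c,u)$ for all $v,u\in V$. Taking the maximum over $u$ yields
\[
e_T(v)\ \le\ d_G(v,c)+e_G(c).
\]

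Next I would plug in the two sharp estimates already available in the paper. Because $c\in I(x,y)$ and $\min\{d(x,c),d(y,c)\}=\lfloor d(x,y)/2\rfloor$, Corollary~\ref{cor:maxMinDuality} gives
\[
d_G(v,c)\ \le\ \max\{d(x,v),d(y,v)\}-\lfloor d(x,y)/2\rfloor+2\delta\ \le\ e_G(v)-\lfloor d(x,y)/2\rfloor+2\delta,
\]
using only the trivial bound $\max\{d(x,v),d(y,v)\}\le e_G(v)$. Independently, Theorem~\ref{thm:eccOnLineBetter} applied at the mutually distant pair gives $e_G(c)\le \lceil d(x,y)/2\rceil+2\delta$. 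Summing the two estimates, the $d(x,y)$-terms combine as $\lceil d(x,y)/2\rceil-\lfloor d(x,y)/2\rfloor\le 1$, and we obtain $e_T(v)\le e_G(v)+4\delta+1$, as desired.

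Finally I would address the running time. A pair $\{x,y\}$ of mutually distant vertices is constructed in $O(\delta|E|)$ by the sequence of BFSs described immediately before the statement (each BFS improves the current distance by at least one, and Lemma~\ref{furthestIsAlmostDiameter} caps the number of rounds by $O(\delta)$). The middle vertex $c$ is read off the last BFS layering at cost $O(|E|)$, and a single additional BFS from $c$ produces $T$ in $O(|E|)$. The total is $O(\delta|E|)$. I do not anticipate a real obstacle; the main care needed is choosing the right form of Corollary~\ref{cor:maxMinDuality} so that the two estimates on $d_G(v,c)$ and $e_G(c)$ share the $\lfloor d(x,y)/2\rfloor$ term and cancel all $d(x,y)$-dependence down to the single extra $+1$.
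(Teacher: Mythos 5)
Your proposal is correct and follows essentially the same route as the paper's own proof: reduce to $e_T(v)\le d_G(v,c)+e_G(c)$ via the BFS-tree property, bound $d_G(v,c)$ by Corollary~\ref{cor:maxMinDuality} using $\min\{d(x,c),d(y,c)\}=\lfloor d(x,y)/2\rfloor$, bound $e_G(c)$ by Theorem~\ref{thm:eccOnLineBetter}, and let the two half-terms cancel to $+1$. The running-time argument likewise matches the construction of a mutually distant pair described in the paper.
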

\begin{proof}
%
The eccentricity in~$T$ of any vertex~$v$ can only increase compared to its eccentricity in~$G$. Hence, $e_G(v) \leq e_T(v)$.
By the triangle inequality and the fact that all distances from vertex $c$ are preserved in $T$, 
$e_T(v) \leq d_T(v,c) + e_T(c) = d_G(v,c) + e_G(c)$.
We know that $e_G(v) \geq \max\{d_G(y,v), d_G(x,v)\}$. By 
Corollary~\ref{cor:maxMinDuality}, also $d_G(v,c) - \max\{d_G(y,v),\ d_G(x,v) \} \leq 2\delta - \min\{d_G(y,c),\ d_G(x,c)\}$ holds. Since $c$ is a middle vertex of a shortest $(x,y)$-path,
necessarily $\min\{d_G(y,c),\ d_G(x,c)\} \geq \lfloor d_G(x,y)/2 \rfloor$ and,
by Theorem~\ref{thm:eccOnLineBetter}, $e_G(c) \leq \lceil d_G(x,y)/2 \rceil + 2\delta$.
Combining all these, we get
\begin{align*}
  e_T(v) - e_G(v) & \leq d_G(v,c) + e_G(c) - e_G(v) \\
  				  & \leq d_G(v,c) - \max\{d_G(y,v), d_G(x,v)\} + e_G(c) \\
  				  & \leq 2\delta - \min\{d_G(y,c),\ d_G(x,c)\} + e_G(c) \\
  				  & \leq 2\delta - \lfloor d_G(x,y) / 2 \rfloor + e_G(c) \\
  				  & \leq 2\delta - \lfloor d_G(x,y) / 2 \rfloor + \lceil d_G(x,y)/2 \rceil + 2\delta \\
  				  & \leq 4\delta + 1.
\end{align*}
\end{proof}

We next give a $O(|E|)$ time right-sided additive eccentricity $(6\delta + 1 - k)$-approximation for any constant integer $k$, $0\le k\leq 2\delta$.
\begin{theorem}\label{thm:eccApproxTreeByManyBeams}
Let $G$ be a $\delta$-hyperbolic graph and $k$ be an integer from $[0,2\delta]$. Let $u_0,u_1,\dots,u_{k+2}$ be a sequence of vertices of $G$ such that $u_0$ is an arbitrary start vertex and each $u_{i+1}$ is a vertex furthest from $u_i$ $(0\le i\le k+1)$. 
If $c$ is a middle vertex of any shortest $(u_{k+1},u_{k+2})$-path and $T$ is a $BFS(c)$-tree of $G$, then, for every vertex $v$ of $G$, $e_G(v)\leq e_T(v)\leq e_G(v)+ 6\delta+1-k.$ That is, $G$ admits an eccentricity $(6\delta+1-k)$-approximating spanning tree constructible in $O(k|E|)$ time. 
%
%
%
\end{theorem}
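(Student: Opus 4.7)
The plan is to mimic the proof of Theorem~\ref{thm:eccApproxTreeByLine}, with the mutually distant pair replaced by $\{u_{k+1},u_{k+2}\}$, exploiting the $k$ extra sweeps to shave $k$ off the additive slack.

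\textbf{Set-up.} Since $T$ is a $BFS(c)$-tree, distances from $c$ are preserved in $T$, so for every vertex $v$,
\[
e_T(v)\le d_T(v,c)+e_T(c)=d_G(v,c)+e_G(c).
\]
Write $d:=d_G(u_{k+1},u_{k+2})$. The vertex $c$ lies in $I(u_{k+1},u_{k+2})$ at distance $\lfloor d/2\rfloor$ from its nearer endpoint, so Corollary~\ref{cor:maxMinDuality} yields
\[
d_G(v,c)\le\max\{d(u_{k+1},v),d(u_{k+2},v)\}-\lfloor d/2\rfloor+2\delta\le e_G(v)+2\delta-\lfloor d/2\rfloor.
\]
Subtracting $e_G(v)$, it suffices to prove $e_G(c)\le \lfloor d/2\rfloor+4\delta+1-k$.

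\textbf{Bounding $e_G(c)$.} Apply Lemma~\ref{lemDuality} to $c\in I(u_{k+1},u_{k+2})$ with a furthest vertex $v^*\in F(c)$. Its first branch gives $e_G(c)\le e(u_{k+1})-\lfloor d/2\rfloor+2\delta=\lceil d/2\rceil+2\delta$, which is already $\le \lfloor d/2\rfloor+4\delta+1-k$ because $k\le 2\delta$. The second branch gives $e_G(c)\le e(u_{k+2})-\lceil d/2\rceil+2\delta$, so the target inequality reduces to
\[
e(u_{k+2})\le d+2\delta+1-k.
\]

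\textbf{Exploiting the sweep sequence.} Set $d_i:=d_G(u_i,u_{i+1})=e(u_i)$. Because $u_{i+2}\in F(u_{i+1})$, $d_{i+1}=e(u_{i+1})\ge d_G(u_{i+1},u_i)=d_i$, so $d_1\le\cdots\le d_{k+1}$; and Corollary~\ref{cor:beamEccToDiam} gives $d_1\ge diam(G)-2\delta$. Thus the $k+1$ integers $d_1,\dots,d_{k+1}$ lie in the $(2\delta+1)$-wide window $[diam(G)-2\delta,\,diam(G)]$. If all $k$ consecutive inequalities among them are strict, then $d=d_{k+1}\ge d_1+k\ge diam(G)-2\delta+k$, giving $e(u_{k+2})\le diam(G)\le d+2\delta-k$, which establishes the reduced inequality with a unit of slack. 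Otherwise some $d_j=d_{j+1}$; the symmetric conditions $u_{j+1}\in F(u_j)$ and $u_j\in F(u_{j+1})$ then exhibit $\{u_j,u_{j+1}\}$ as a mutually distant pair, so Theorem~\ref{thm:eccOnLineBetter} gives $d_j\ge 2rad(G)-4\delta-1$.

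\textbf{Main obstacle.} The delicate point is the case of two or more ``ties'' $d_j=d_{j+1}$ within the sweep. With exactly one tie the window-pigeonhole alone already suffices; with $t\ge 2$ ties it only yields $diam(G)-d_{k+1}\le 2\delta-k+t$, which loses $t-1$ units relative to the reduced inequality. The hard step of the proof is therefore to recover this loss, using the rigidity of the mutually distant pairs (notably that $u_j\in F(u_{j+1})$ is itself a legal choice for $u_{j+2}$, so any strict increase following a tie witnesses a genuinely new ``almost-diameter'' direction) to argue that each additional tie forces a compensating strict increase elsewhere in the sweep, ultimately keeping $e(u_{k+2})$ within the window prescribed by the reduced inequality.
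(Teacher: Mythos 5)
Your set-up and your treatment of the all-strict case coincide with the paper's: reduce everything to a bound on $e_G(c)$ via Corollary~\ref{cor:maxMinDuality}, split by Lemma~\ref{lemDuality}, and in the second branch use the chain $d_G(u_1,u_2)<\dots<d_G(u_{k+1},u_{k+2})$ together with $d_G(u_1,u_2)\ge diam(G)-2\delta$ to get $d_G(u_{k+1},u_{k+2})\ge diam(G)-2\delta+k$. The gap is the tie case, which you correctly flag as unresolved but then attack in a way that cannot succeed. Your reduced inequality $e(u_{k+2})\le d+2\delta+1-k$ is not salvageable once two or more ties occur: with, say, $k=2$ and ties at the first two steps, $d=d_G(u_{k+1},u_{k+2})$ can stall near $diam(G)-2\delta$ while $e(u_{k+2})$ can be as large as $diam(G)$, and nothing forces ``compensating strict increases'' elsewhere --- the nondecreasing sequence $e(u_1)\le\dots\le e(u_{k+2})$ is free to have as many ties as it likes. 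So no refinement of the pigeonhole argument will close this case.

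The paper's resolution is much simpler and you walked right past it: a tie $d_j=d_{j+1}$ means the corresponding consecutive pair is mutually distant, and in that situation the paper does not try to salvage the counting argument at all --- it falls back on Theorem~\ref{thm:eccApproxTreeByLine}, which already gives $e_T(v)\le e_G(v)+4\delta+1$, and $4\delta+1\le 6\delta+1-k$ precisely because $k\le 2\delta$. That is the entire role of the hypothesis $k\in[0,2\delta]$: it guarantees that the ``degenerate'' branch (a mutually distant pair appears somewhere in the sweep) is automatically within budget, so the strict-increase counting only ever needs to be run when no such pair exists. You even noticed that a tie produces a mutually distant pair and cited Theorem~\ref{thm:eccOnLineBetter} at that point, but you tried to feed $d_j\ge 2rad(G)-4\delta-1$ back into your inequality chain instead of abandoning the chain and invoking the already-proved spanning-tree theorem for mutually distant pairs.
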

\begin{proof}
Recall that if $\{u_{k+1}, u_{k+2}\}$ or an earlier pair  $\{u_{i+1}, u_{i+2}\}$ ($i<k$) is a mutually distant pair then, by Theorem~\ref{thm:eccApproxTreeByLine}, $T$ is an eccentricity $(4\delta+1)$-approximating spanning tree. 
Therefore, in what remains, we assume that $d_G(u_{i+1}, u_{i+2}) \ge d_G(u_{i}, u_{i+1})+1$ for all $i \leq k$.
Hence, by Corollary~\ref{cor:beamEccToDiam}, $d_G(u_{k+1}, u_{k+2}) \ge d_G(u_1,u_2) + k \geq diam(G) - 2\delta + k$.

We first claim that 
a middle vertex $c$ of any shortest $(u_{k+1}, u_{k+2})$-path satisfies $e_G(c) \leq \lceil d_G(u_{k+1}, u_{k+2})/2 \rceil + 4\delta - k$. 
Let $t \in F(c)$, i.e., $e_G(c) = d_G(c,t)$.
By Lemma~\ref{lemDuality}, either $d_G(c,t) \leq d_G(u_{k+1},t) - d_G(u_{k+1},c) + 2\delta$ or $d_G(c,t) \leq d_G(u_{k+2},t) - d_G(u_{k+2},c) + 2\delta$. 
If the former is true then, since $d_G(u_{k+1},t) \leq d_G(u_{k+1},u_{k+2})$, we have 
$e_G(c)\le d_G(u_{k+1},t) - d_G(u_{k+1},c) + 2\delta \leq d_G(u_{k+1},u_{k+2}) - d(u_{k+1},c)  + 2\delta \leq \lceil d_G(u_{k+1},u_{k+2})/2 \rceil + 2\delta$.
If the latter is true then, since $d_G(u_{k+1}, u_{k+2}) \geq diam(G) - 2\delta + k \geq d_G(u_{k+2},t) - 2\delta + k$, we get 
 $e_G(c)\le d_G(u_{k+2},t) - d_G(u_{k+2},c) + 2\delta \leq 
d_G(u_{k+1}, u_{k+2}) + 2\delta - k - d_G(u_{k+2},c) + 2\delta \leq \lceil d_G(u_{k+1},u_{k+2}) / 2 \rceil + 4\delta - k$.
As $k \leq 2\delta$, in either case,  $e_G(c) \leq \lceil d_G(u_{k+1},u_{k+2}) / 2 \rceil + 4\delta - k$, establishing the claim.

Set now $x := u_{k+1}$ and $y := u_{k+2}$.
The remainder of the proof follows the proof of Theorem~\ref{thm:eccApproxTreeByLine} with one adjustment:
replace the application of Theorem~\ref{thm:eccOnLineBetter} which yields $e_G(c) \leq \lceil d_G(x,y)/2 \rceil + 2\delta$
with our claim which yields $e_G(c) \leq \lceil d_G(x,y)/2 \rceil + 4\delta - k$.
Hence, $e_T(v) - e_G(v) \leq 2\delta - \lfloor d_G(x,y) / 2 \rfloor + \lceil d_G(x,y)/2 \rceil + 4\delta - k \leq 6\delta + 1 - k$.
\end{proof}

Theorem~\ref{thm:eccApproxTreeByManyBeams} generalizes Theorem~\ref{thm:eccApproxTreeByLine} (when $k=2\delta$, we obtain Theorem~\ref{thm:eccApproxTreeByLine}). 
Also, when $k=1$, we get an eccentricity $(6\delta)$-approximating spanning tree constructible in $O(|E|)$ time. 

Note that the eccentricities of all vertices in any tree $T=(V,U)$ can be computed in $O(|V|)$ total time. It is a folklore by now that for trees the following facts are true:
(1) The center $C(T)$ of any tree $T$ consists of one vertex or two adjacent vertices; (2) The center $C(T)$ and the radius $rad(T)$ of any tree $T$ can be found in linear time; (3) For every vertex $v\in V$, $e_T(v)=d_T(v,C(T))+rad(T)$.
Hence, using $BFS(C(T))$ on $T$ one can compute $d_T(v,C(T))$ for all $v\in V$ in total $O(|V|)$ time. Adding now $rad(T)$ to $d_T(v,C(T))$, one gets  $e_T(v)$ for all $v\in V$. Consequently, by Theorem~\ref{thm:eccApproxTreeByLine}  and Theorem~\ref{thm:eccApproxTreeByManyBeams}, we get the following additive approximations for the vertex eccentricities in $\delta$-hyperbolic graphs.  

\begin{corollary} \label{th:all-ecc--appr}
Let $G=(V,E)$ be a $\delta$-hyperbolic graph.
\begin{enumerate}
\item[$(i)$] There is an algorithm which in total linear $(O(|E|))$ time outputs for every vertex $v\in V$ an estimate $\hat{e}(v)$ of its eccentricity $e_G(v)$ such that $e_G(v)\leq \hat{e}(v)\leq e_G(v)+ 6\delta.$
\item[$(ii)$] There is an algorithm which in total almost linear $(O(\delta|E|))$ time outputs for every vertex $v\in V$ an estimate $\hat{e}(v)$ of its eccentricity $e_G(v)$ such that $e_G(v)\leq \hat{e}(v)\leq e_G(v)+ 4\delta+1.$
\end{enumerate}
\end{corollary}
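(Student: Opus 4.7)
The plan is to reduce both parts to computing a suitable eccentricity approximating spanning tree $T$ and then reading off $e_T(v)$ for every vertex in linear time. The key observation is the folklore fact that in any tree $T=(V,U)$, the center consists of one or two adjacent vertices, both $C(T)$ and $rad(T)$ can be found in $O(|V|)$ time, and every vertex satisfies $e_T(v)=d_T(v,C(T))+rad(T)$. Hence a single BFS rooted at $C(T)$ on $T$ yields all tree eccentricities in linear total time. Once this reduction is fixed, the heart of the work is already done in the preceding two theorems.

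For part $(ii)$, the plan is to invoke Theorem~\ref{thm:eccApproxTreeByLine} directly. First I would compute a mutually distant pair $\{x,y\}$; as pointed out in the remarks following Lemma~\ref{furthestIsAlmostDiameter}, iterated BFS (starting from an arbitrary vertex and replacing the current vertex with a furthest one at each step) produces such a pair in at most $2\delta+2$ BFS runs, for total time $O(\delta|E|)$. Then I pick a middle vertex $c$ of some shortest $(x,y)$-path and build a $BFS(c)$-tree $T$ in $O(|E|)$ time. Theorem~\ref{thm:eccApproxTreeByLine} guarantees that $T$ is an eccentricity $(4\delta+1)$-approximating spanning tree. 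Setting $\hat e(v):=e_T(v)$ for all $v$ and computing these by the tree routine above yields the right-sided additive $(4\delta+1)$-approximation within the claimed $O(\delta|E|)$ budget.

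For part $(i)$, the plan is to invoke Theorem~\ref{thm:eccApproxTreeByManyBeams} with $k=1$, which was specifically highlighted after the theorem as yielding a $(6\delta)$-approximating spanning tree constructible in $O(|E|)$ time. Concretely, I would perform three BFS traversals to generate vertices $u_0,u_1,u_2,u_3$, where each $u_{i+1}$ is furthest from $u_i$; take $c$ to be a middle vertex of a shortest $(u_2,u_3)$-path; and grow a $BFS(c)$-tree $T$. The theorem ensures $e_G(v)\le e_T(v)\le e_G(v)+6\delta$. The total cost is $O(|E|)$ because only a constant number of BFS runs is needed, and the tree eccentricities are then extracted in $O(|V|)$ time by the tree routine.

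There is no real obstacle, since both spanning tree theorems do all the heavy lifting; the only thing to verify carefully is the timing and the linear-time extraction of $e_T(v)$ for every $v$. I would close by explicitly stating the estimate $\hat e(v):=e_T(v)$ in each case, noting that $e_G(v)\le e_T(v)$ holds automatically (any spanning subgraph can only increase distances), so the approximation is right-sided as required.
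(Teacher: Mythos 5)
Your proposal is correct and follows essentially the same route as the paper: it reduces both parts to the two eccentricity-approximating spanning tree theorems (Theorem~\ref{thm:eccApproxTreeByLine} for part $(ii)$ and Theorem~\ref{thm:eccApproxTreeByManyBeams} with $k=1$ for part $(i)$) and extracts all tree eccentricities in linear time via the folklore facts $e_T(v)=d_T(v,C(T))+rad(T)$ and the linear-time computability of $C(T)$ and $rad(T)$. The paper's own justification is exactly this reduction, so nothing is missing.
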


As $\delta(G)\le \tau(G)$ for any graph $G$, Corollary~\ref{th:all-ecc--appr}$(i)$ improves the corresponding result from~\cite{Chepoi2018FastAO}. Also, combining Corollary~\ref{th:all-ecc--appr}$(ii)$ with the corresponding  result from~\cite{Chepoi2018FastAO}, we get that, for any graph $G$, 
there is an algorithm which in total almost linear $(O(\delta|E|))$ time outputs for every vertex $v\in V$ an estimate $\hat{e}(v)$ of its eccentricity $e_G(v)$ such that $e_G(v)\leq \hat{e}(v)\leq e_G(v)+ \min\{4\delta(G)+1, 2\tau(G)\}.$

\bibliographystyle{plain}
\bibliography{bibliography}

\section*{Appendix}

\textbf{List of notations}
\begin{table}[ht]
    \begin{tabular}{cp{0.8\textwidth}}
        $G=(V,E)$   & a graph \\
        $\langle S \rangle$ & subgraph induced by $S \subseteq V$ \\
        $d(u,v)$  & distance from $u$ to~$v$ \\
        $d(u,S)$  & smallest distance from $u$ to a vertex of set~$S$ \\ 
        $e(v)$    & eccentricity of~$v$ \\ 
        $I(u,v)$    & interval (set of vertices on a shortest path between $u$ and~$v$) \\
        $S_k(u,v)$  & interval slice (set of vertices on $I(u,v)$ that are at distance $k$ from vertex~$u$) \\
        $D(S,r)$    & disk (set of vertices with distance at most $r$ to $S$) \\
        $F(v)$      & set of vertices furthest from~$v$ \\
        $(x|y)_z$   & Gromov product of $x,y$ with respect to~$z$ \\
        $rad(G)$    & radius  \\ 
        $diam(G)$   & diameter \\ 
        $\delta(G)$ & hyperbolicity \\
        $\tau(G)$   & thinness \\
        $\Delta(G)$ & maximum vertex degree \\ 
        $C(G)$      & center \\
        $C_{\le k}(G)$ & set of vertices with eccentricity at most $rad(G)+k$ \\
        $C_{k}(G)$ & set of vertices with eccentricity equal to $rad(G)+k$ \\
        $P(y,x)$ & shortest path from $y$ to $x$ \\
        $\mathcal{U}(P(y,x))$ & number of up-edges on $P(y,x)$ \\
        $\mathcal{H}(P(y,x))$ & number of horizontal-edges on $P(y,x)$ \\
        $\mathcal{D}(P(y,x))$ & number of down-edges on $P(y,x)$
    \end{tabular}
\end{table}

\end{document}